\documentclass[11pt,a4paper]{article}
\usepackage{float}
\usepackage{jheppub}

\usepackage[T1]{fontenc}
\usepackage[utf8]{inputenc}
\usepackage{lmodern}
\usepackage{microtype}
\usepackage{amsfonts,amsthm,mathtools,bm}
\usepackage{booktabs,longtable,array,tabularx,multirow}
\usepackage{enumitem}
\usepackage{xcolor}
\usepackage{tikz}
\usetikzlibrary{arrows.meta,backgrounds,calc,decorations.pathreplacing,fit,matrix,positioning,shapes.geometric,shapes.misc}
\usepackage[most]{tcolorbox}
\usepackage{listings}
\usepackage{xurl}
\usepackage[nameinlink,noabbrev]{cleveref}

\definecolor{bcjblue}{HTML}{245B78}
\definecolor{bcjcyan}{HTML}{2A9D8F}
\definecolor{bcjgold}{HTML}{E9C46A}
\definecolor{bcjorange}{HTML}{F4A261}
\definecolor{bcjred}{HTML}{D95D39}
\definecolor{bcjgray}{HTML}{5E6472}
\definecolor{bcjlight}{HTML}{F3F6F8}
\definecolor{codegreen}{HTML}{287A4B}

\hypersetup{
  pdftitle={An Adaptive Inverse-Problem Framework for One-Loop Five-Gluon BCJ Numerators},
  pdfauthor={Lin Mai, Yaobo Zhang}
}

\setlist[itemize]{topsep=3pt,itemsep=2pt,parsep=1pt}
\setlist[enumerate]{topsep=3pt,itemsep=2pt,parsep=1pt}
\allowdisplaybreaks

\newcommand{\A}{\mathcal A}
\newcommand{\M}{\mathcal M}
\newcommand{\cQ}{\mathcal Q}
\newcommand{\I}{\mathcal I}

\newcommand{\F}{\mathbb F}
\newcommand{\Q}{\mathbb Q}
\newcommand{\Z}{\mathbb Z}

\newcommand{\cC}{\mathcal C}

\newcommand{\cS}{\mathcal S}
\newcommand{\cT}{\mathcal T}

\newcommand{\trfive}{\operatorname{tr}_5}
\newcommand{\rank}{\operatorname{rank}}
\newcommand{\nullity}{\operatorname{nullity}}
\newcommand{\Span}{\operatorname{span}}
\newcommand{\im}{\operatorname{im}}
\newcommand{\Res}{\operatorname{Res}}
\newcommand{\id}{\mathrm{id}}

\newcommand{\hashid}[1]{\nolinkurl{#1}}

\newtheorem{definition}{Definition}[section]
\newtheorem{proposition}[definition]{Proposition}
\newtheorem{theorem}[definition]{Theorem}

\newtheorem{corollary}[definition]{Corollary}
\newtheorem{remark}[definition]{Remark}

\newtcolorbox{resultbox}[1][]{
  enhanced,breakable,colback=bcjlight,colframe=bcjblue,
  boxrule=0.8pt,arc=1.5mm,left=2mm,right=2mm,top=1.5mm,bottom=1.5mm,
  title={#1},fonttitle=\bfseries
}
\newtcolorbox{boundarybox}[1][]{
  enhanced,breakable,colback=bcjorange!8,colframe=bcjorange!80!black,
  boxrule=0.8pt,arc=1.5mm,left=2mm,right=2mm,top=1.5mm,bottom=1.5mm,
  title={#1},fonttitle=\bfseries
}
\newtcolorbox{evidencebox}[1][]{
  enhanced,breakable,colback=bcjcyan!5,colframe=bcjcyan!80!black,
  boxrule=0.8pt,arc=1.5mm,left=2mm,right=2mm,top=1.5mm,bottom=1.5mm,
  title={#1},fonttitle=\bfseries
}

\lstdefinestyle{pseudocode}{
  basicstyle=\ttfamily\small,
  keywordstyle=\bfseries\color{bcjblue},
  commentstyle=\itshape\color{codegreen},
  stringstyle=\color{bcjred},
  showstringspaces=false,
  frame=single,
  rulecolor=\color{bcjgray!40},
  backgroundcolor=\color{bcjlight},
  breaklines=true,
  columns=fullflexible,
  keepspaces=true,
  xleftmargin=1em,
  framexleftmargin=0.5em
}
\tikzset{
  flow/.style={draw=bcjblue,thick,rounded corners=2pt,fill=bcjlight,align=center,minimum height=9mm},
  evidence/.style={draw=bcjcyan!90!black,thick,rounded corners=2pt,fill=bcjcyan!8,align=center,minimum height=9mm},
  pending/.style={draw=bcjorange!90!black,thick,dashed,rounded corners=2pt,fill=bcjorange!8,align=center,minimum height=9mm},
  danger/.style={draw=bcjred,thick,rounded corners=2pt,fill=bcjred!8,align=center,minimum height=9mm},
  arrow/.style={-{Stealth[length=2.5mm]},thick,draw=bcjgray},
  advice/.style={-{Stealth[length=2.5mm]},thick,dashed,draw=bcjred}
}

\crefname{equation}{eq.}{eqs.}
\Crefname{equation}{Equation}{Equations}

\title{An adaptive inverse-problem framework for one-loop five-gluon BCJ numerators}

\author[a]{Lin Mai,}
\author[a]{and Yaobo Zhang}

\affiliation[a]{School of Physics, Ningxia University,\\
Yinchuan 750021, China}

\emailAdd{12025130767@stu.nxu.edu.cn}
\emailAdd{yaobozhang@nxu.edu.cn}

\abstract{
An inverse problem comprises a matrix equation together with its unknown space,
physical data, equivalence relation, and validation tests.  We
formulate Bern--Carrasco--Johansson (BCJ) numerator construction as an exact adaptive
inverse problem.  A fixed scientific specification determines the theory, graph
conventions, coefficient field, locality and power counting, cut data, observable
equivalence, and independent checks.  Each finite working specification compiles to
\(\mathcal P_\sigma=(A,b;\cS;\cT)\), where \(Ax=b\) reconstructs numerator coefficients,
\(\cS\) classifies the solution fiber, and \(\cT\) tests it on held-out information.
Left-null obstructions identify candidate numerator-basis directions needed for
consistency, while the
action of candidate measurements on the right kernel identifies informative new cut
equations.  We illustrate these steps by hand at four points and apply them to one-loop
five-gluon pure Yang--Mills theory.

After kinematic and graph-symmetry reduction, the candidate numerator basis contains
\(1127\) independent
coordinates.  The combined maximal, box, triple, and double cuts have rank \(920\),
giving a \(207\)-dimensional affine solution fiber.  Exact reconstruction determines a
particular solution and the complete ordered kernel.  The specified \(R_{12345}\)
color-ring readout \(\cS\)
annihilates every kernel direction, so the full fiber represents one observable class.
A published forward-limit numerator lies in this fiber, and fresh cuts, independent
integral reductions, and helicity-amplitude benchmarks validate the result.  Explicit
search rules and agent interfaces can propose revisions.  Deterministic compilation
and exact evaluation assess each proposal.
}

\keywords{Scattering Amplitudes, Duality in Gauge Field Theories,
Higher-Order Perturbative Calculations, Automation}

\compress

\begin{document}

\hypersetup{pageanchor=false}
\maketitle
\hypersetup{pageanchor=true}
\flushbottom

\section{Introduction}
\label{sec:introduction}

Inverse problems are usually written as \(Ax=b\).  Gaussian elimination starts after
the unknown space, the equations, and the meaning of equality have been fixed.
In amplitude construction these ingredients are part of the problem.  The construction
specifies the local tensors, graph identities, on-shell data, representation
equivalence, and independent checks before solving the resulting system exactly.

This paper develops that point of view for Bern--Carrasco--Johansson (BCJ) numerators.
A gauge-theory amplitude can be organized as a sum over oriented cubic graphs.  Each
term contains a color factor, a product of propagators, and a kinematic graph numerator.
A BCJ representation assigns the numerators jointly so that their antisymmetry and
Jacobi relations mirror those of the color factors \cite{Bern:2008qj}.  Replacing the
color factors by a second dual numerator collection then gives the double-copy
representation of a gravity integrand \cite{Bern:2010ue}.  The amplitudes and their
generalized cuts are physical input, while the graph-by-graph numerator assignment is
representation data to be reconstructed.  This is an inverse problem with a generally
non-unique solution.

\subsection{A fixed scientific specification and an adaptive working problem}

We distinguish the fixed scientific specification from the finite working problem used
at a given stage.  The specification, denoted by \(\Omega\), fixes the theory and
external states, graph orientations and momentum routings, exact coefficient field,
locality and power-counting rules, cut oracle, observable-equivalence prescription, and
independent validation tests.  It also defines the allowed ways to enlarge the ansatz
or the cut set.  These choices remain fixed throughout the construction.

At stage \(t\), the working specification \(\sigma_t\) records the ordered basis
columns, graph-descendant compiler, cut rows, exact samples, and algebraic reduction
data.  Compiling \(\sigma_t\) according to \(\Omega\) produces the complete finite
problem
\begin{equation}
  \boxed{\mathcal P_\sigma=(A,b;\cS;\cT).}
  \label{eq:chartered-inverse-problem}
\end{equation}
Here \(A\) evaluates the candidate numerator coefficients on the selected constraints,
and \(b\) contains the corresponding physical targets.  The map \(\cS\) determines
whether two solutions represent the same amputated observable under the routing,
Lehmann--Symanzik--Zimmermann (LSZ) projection, which removes external-leg
contributions, and dimensional-regularization rules fixed by \(\Omega\).  The suite
\(\cT\) contains
held-out checks, including fresh cut samples, independent reductions, and amplitude
benchmarks.  Thus solving \(Ax=b\) is one part of solving \(\mathcal P_\sigma\).

When the system is consistent, its full coefficient fiber is
\begin{equation}
  \mathcal F_b
  =\{x:Ax=b\}
  =x_{\rm p}+\ker A.
  \label{eq:intro-affine-fiber}
\end{equation}
The construction reports this entire affine set, its image in the quotient defined by
\(\cS\), and the outcome of \(\cT\).  Once the fiber has been determined, a convenient
representative may be selected for a particular calculation.

The two null spaces provide the information needed to revise an incomplete
specification.  When the target lies outside the span of the current columns, exact
elimination returns a left-null witness
\begin{equation}
  \lambda^T A=0,
  \qquad \lambda^T b\ne0.
  \label{eq:intro-left-obstruction}
\end{equation}
A necessary condition for a candidate column \(c\) to repair this obstruction is
\(\lambda^Tc\ne0\).  If the system is consistent and \(K\) is an ordered basis of its
right kernel, a candidate measurement row \(r\) supplies new information precisely when
\(rK\ne0\).  The first test guides the enlargement of the representable numerator
space; the second designs measurements that resolve directions left free by the
current cuts.  Directions that remain free are subsequently classified by \(\cS\).
The resulting sequence is
\begin{equation}
  \text{compile }\mathcal P_{\sigma_t}
  \longrightarrow \text{exact diagnostic}
  \longrightarrow \text{candidate revision}
  \longrightarrow \mathcal P_{\sigma_{t+1}}.
  \label{eq:constraint-first-pattern}
\end{equation}

\begin{figure}[t]
\centering
\begin{tikzpicture}[x=1cm,y=1cm,scale=.96,transform shape,
  line cap=round,line join=round,
  every node/.append style={font=\small}]
  \node[evidence,text width=21mm,inner xsep=1.5mm] (charter) at (1.25,2.55)
    {fixed scientific\\specification \(\Omega\)};
  \node[pending,text width=24mm,inner xsep=1.5mm] (spec) at (4.25,2.55)
    {adaptive working\\problem \(\sigma_t\)};
  \node[flow,text width=15mm,inner xsep=1.5mm] (compile) at (7.05,2.55)
    {compile};
  \node[evidence,text width=28mm,inner xsep=1.5mm] (problem) at (10.00,2.55)
    {\(\mathcal P_{\sigma_t}\)\\\((A_t,b_t;\cS;\cT)\)};
  \node[evidence,text width=32mm,inner xsep=1.5mm] (diagnostic) at (10.00,.55)
    {exact diagnostic\\rank and residual\\left/right null spaces};

  \node[pending,text width=32mm,inner xsep=1.5mm] (proposal) at (6.20,.55)
    {candidate revision\\column, row, route, or check};
  \node[pending,text width=21mm,inner xsep=1.5mm] (next) at (2.25,.55)
    {updated problem\\\(\sigma_{t+1}\)};
  \node[evidence,text width=29mm,inner xsep=1.5mm] (fiber) at (13.65,.55)
    {closed system\\\(\mathcal F_b=x_{\rm p}+\ker A\)};
  \node[evidence,text width=29mm,inner xsep=1.5mm] (quotient) at (13.65,-1.05)
    {classified result\\\(\mathcal F_b/\!\sim_{\cS}\)};

  \draw[arrow] (charter.east)--(spec.west);
  \draw[arrow] (spec.east)--(compile.west);
  \draw[arrow] (compile.east)--(problem.west);
  \draw[arrow] (problem.south)--(diagnostic.north);
  \draw[arrow] (diagnostic.west)--(proposal.east);
  \draw[arrow] (proposal.west)--(next.east);
  \draw[arrow] (next.north) to[out=90,in=-90] (spec.south);
  \draw[arrow] (diagnostic.east)--(fiber.west);
  \node[font=\scriptsize,fill=white,inner sep=1pt] at (11.82,1.36)
    {close and test};
  \draw[arrow] (fiber.south)--(quotient.north);
  \node[evidence,text width=40mm,minimum height=11mm,inner xsep=1.5mm]
    at (2.60,-3.00)
    {basis completion\\\(300\) omitted \(Y_i^2\) directions\\added to the basis};
  \node[evidence,text width=40mm,minimum height=11mm,inner xsep=1.5mm]
    at (7.65,-3.00)
    {measurement design\\\(\dim\ker A:211\to207\)};
  \node[evidence,text width=40mm,minimum height=11mm,inner xsep=1.5mm]
    at (12.70,-3.00)
    {observable classification\\\(\cS K=0\Rightarrow \mathrm H^{\cS}(A)=0\)};
\end{tikzpicture}
\caption{Adaptive construction of the finite inverse problem.  The scientific
specification remains fixed while the working basis and measurement set evolve.
Exact diagnostics identify useful column and row revisions and determine when the
affine fiber is ready for observable classification and independent validation.  The
lower strip records the three diagnostic transitions used in the five-point case
study.}
\label{fig:claim-pipeline}
\end{figure}

The recurring terms can be read as follows.  An \emph{ansatz} is the finite ordered list
of allowed tensor structures with unknown coefficients.  A \emph{master numerator} is
a generating graph numerator from which graph descendants follow by symmetries, Jacobi
relations, and momentum shifts.  A \emph{cut row} evaluates these descendants on an
on-shell channel whose target is a product of tree amplitudes.  A left-null vector
certifies a target combination outside the image of the current ansatz.  A right-kernel
vector is a coefficient change invisible to the current rows.  The map \(\cS\) decides
whether such a change affects the stated observable, and \(\cT\) checks the closed
construction on held-out information.

\Cref{sec:four-point-walkthrough} works through these objects for three four-point tree
numerators and a \(3\times3\) matrix.  It derives the compatibility condition, the
one-dimensional generalized-gauge freedom, and the affine solution by hand.  The
five-point calculation applies the same logical structure to a larger basis and
measurement set.

\subsection{From inverse CHY construction to BCJ numerators}

The Cachazo--He--Yuan (CHY) representation provides a useful precursor for this form of
exact construction.  Companion matrices turn sums over solutions of the scattering
equations into finite linear-algebra operations \cite{Huang:2015yka}.  Cross-ratio
identities reduce higher-order poles to graph-level integration rules
\cite{Zhou:2017mfj}.  These methods handle worldsheet expressions, pole data, and
physical factorization channels through finite linear-algebra operations in place of an
explicit solution of every intermediate analytic equation.

Earlier work by Zhang and collaborators developed two parts of this
picture.  Labelled-tree integrands were reorganized into effective Feynman diagrams with
manifest compatible pole sets, and pole-picking cross ratios were extended from the
bi-adjoint scalar integrand to general simple-pole CHY integrands
\cite{Feng:2020opo}.  The regular and singular solution sectors of the scattering
equations were then followed through factorization, soft, and forward kinematic limits
\cite{Feng:2021iik}.  These studies proceed from an integrand or scattering equation to
its channel behavior.

The inverse CHY construction reverses that direction.  Its input is a prescribed set of
factorization channels and pole degrees.  Its unknowns are signed integer edge
multiplicities of candidate CHY graphs.  Exact constraint propagation on the subset
lattice and a final mixed-integer solve determine the compatible integrands
\cite{Li:2026inverseCHY}.  Residual channel violations select discrete edge assignments
for revision.  This gives the order of operations used here: state the representation
grammar and channel data, compile a finite exact
problem, use its residual structure to revise the candidate space, and report all
solutions admitted by the fixed rules.

BCJ numerator construction supplies a different realization of this inverse logic.
Constructive tree-level algorithms produce relabeling-symmetric BCJ numerators
\cite{Fu:2014pya}, and the scattering-equation formalism can make
color--kinematics duality manifest term by term \cite{Bjerrum-Bohr:2016axv}.  At one
loop, Q-cut representations assemble on-shell tree data and reduce channel by channel
to conventional unitarity-cut integrands \cite{Huang:2015cwh}; one-loop CHY
transmutation operators likewise factorize into tree-level operators on a unitarity
cut \cite{Zhou:2021kzv}.  In the present construction the unknowns are rational
coefficients of local kinematic tensors, and the target equations are generalized
unitarity cuts.  The graph grammar also imposes antisymmetry and Jacobi relations.  The
linear system is therefore different from the integer CHY system, while the
constraint-first organization is the same.

\subsection{The five-point case study and the four contributions}

The case study reconstructs a local parity-even one-loop five-gluon master numerator in
pure Yang--Mills theory.  Twenty-four master words generate pentagon, box, triangle, and
lower-topology descendants under fixed routing.  Locality and power counting define the
candidate columns.  Maximal, box, triple, and double generalized cuts define the rows.
The internal polarization-state dimension \(D_s\) is retained symbolically, and all
matrix ranks, residuals, and coordinate relations are reconstructed exactly.

The paper makes four contributions.
\begin{enumerate}[label=\arabic*.]
  \item It formulates amplitude-representation construction as a finite exact inverse
  problem.  The fixed scientific specification \(\Omega\) determines the physical and
  algebraic rules, while the adaptive working specification \(\sigma_t\) compiles to
  \(\mathcal P_\sigma=(A,b;\cS;\cT)\), combining reconstruction, observable
  classification, and independent validation.

  \item It gives two exact diagnostics for adapting the finite problem.  Left-null data
  identify ansatz extensions that can restore consistency, while the restriction of
  candidate measurements to the right kernel measures the information added by new cut
  rows.  In the five-point implementation, an exact quotient calculation finds \(300\)
  omitted \(Y_i^2\) directions in the initial square-free basis.  Adding the
  repeated-contact sector \(Y_i^2N_{ii}\) completes this basis.  The kernel-visibility
  calculation then selects four independent double-cut constraints, raising the cut
  rank from \(916\) to \(920\) and reducing the nullity from \(211\) to \(207\).

  \item It reconstructs the complete \(207\)-dimensional numerator fiber and shows that
  every kernel direction is annihilated by the specified observable map \(\cS\).  It
  locates a published forward-limit numerator inside this fiber, with \(15\) active
  kernel directions, and tests the result on fresh cuts, independent integral
  reductions, and helicity-amplitude benchmarks.

  \item It provides a common proposal interface for physicists, explicit search rules,
  and stateful agents.  Scientific skills encode typed revisions, while the compiler,
  harness, and evaluator construct and assess each stage.  The five-point calculation
  uses the deterministic basis and cut-selection rules described in the preceding item.
\end{enumerate}

The local ansatz contains \(17{,}824\) tensor monomials before algebraic reduction.
On-shell conditions, transversality, momentum conservation, and the stated tensor
relations give
\begin{equation}
  17{,}824
  =10{,}724_{\rm quotient}+7{,}100_{\rm algebraic\ kernel},
  \qquad
  10{,}724=5124+4700+900.
  \label{eq:raw-quotient-balance}
\end{equation}
The three summands are the contact-degree-zero, contact-degree-one, and
contact-degree-two layers.  Graph rotations, reflections, and orientation signs then
leave \(1127\) independent coefficient coordinates.  The cumulative cut ranks are
\begin{equation}
\begin{array}{c|cccc}
\text{cut layer} & \text{maximal} & \text{box} & \text{triple} & \text{double}\\ \hline
\text{cumulative rank} & 547&812&916&920\\
\text{rank increment}  & 547&265&104&4
\end{array}
\label{eq:rank-staircase-intro}
\end{equation}
and hence
\begin{equation}
  \dim\ker A=1127-920=207.
  \label{eq:nullity-207-intro}
\end{equation}

Over \(\Q(D_s)\), the exact coefficient fiber is
\begin{equation}
  \boxed{
  x(D_s,\boldsymbol\beta)
  =x^{(0)}+D_s x^{(1)}+K\boldsymbol\beta(D_s),
  \quad K=(k_1\ \cdots\ k_{207}),
  \quad\boldsymbol\beta(D_s)\in\Q(D_s)^{207}.}
  \label{eq:main-affine-family}
\end{equation}
A normalized reduced-row-echelon convention selects the particular solution
\(x^{(0)}+D_sx^{(1)}\) and fixes the order and normalization of the kernel vectors.
Exact rational reconstruction determines all \(209\) rational vectors: the two
particular-solution components and the \(207\) kernel directions.

The observable-equivalence map for the \(R_{12345}\) color-ring projection uses a fixed
common routing, exact coefficient cancellations, an LSZ projector, and stated
scaleless-integral identities.  Its action
on the complete ordered kernel is
\begin{equation}
  \cS K=0,
  \qquad
  \cS\!\left[x_{\rm p}+K\boldsymbol\beta\right]=\cS[x_{\rm p}].
  \label{eq:stage-s-kernel-zero-intro}
\end{equation}
Thus the coefficient fiber is \(207\)-dimensional, while its image in the specified
observable quotient is one class.  The kernel remains useful for selecting sparse,
local, double-copy-adapted, or integration-adapted representatives.

A published forward-limit numerator \cite{Cao:2025forward} is located by the exact
coordinate relation
\begin{equation}
  x_{\rm FL}(D_s)
  =x^{(0)}+D_s x^{(1)}
   +K\bigl(\boldsymbol\alpha_0+D_s\boldsymbol\alpha_1\bigr).
  \label{eq:known-membership-intro}
\end{equation}
Fifteen kernel directions are active, and nine coefficients depend on \(D_s\).  This
membership statement is a comparison in the common numerator coefficient space.  At
the later integration stage, the raw fixed-routing and forward-limit prescription maps
have a nonzero difference in one stated master-integral coefficient.  Together with
\cref{eq:stage-s-kernel-zero-intro}, this shows that a separately specified surface,
external-field, or scheme completion is required when the two prescriptions are compared.

\begin{figure}[t]
\centering
\begin{tikzpicture}[node distance=7mm and 6mm,scale=.92,transform shape]
  \node[flow,minimum width=27mm] (raw) {local tensor space\\\(17{,}824\)};
  \node[flow,right=of raw,minimum width=27mm] (quot) {kinematic quotient\\\(10{,}724\)};
  \node[flow,right=of quot,minimum width=27mm] (orbit) {graph-symmetry\\coordinates \(1127\)};
  \node[evidence,right=of orbit,minimum width=27mm] (cuts) {cut matrix\\rank \(920\)};
  \node[evidence,below=10mm of cuts,minimum width=27mm] (aff) {affine fiber\\dimension \(207\)};
  \node[evidence,left=of aff,minimum width=29mm] (stage) {observable quotient\\one \(\cS\)-class};
  \node[flow,left=of stage,minimum width=29mm] (tests) {independent suite \\ \(\cT\)};
  \draw[arrow] (raw)--(quot);
  \draw[arrow] (quot)--(orbit);
  \draw[arrow] (orbit)--(cuts);
  \draw[arrow] (cuts)--(aff);
  \draw[arrow] (aff)--(stage);
  \draw[arrow] (stage)--(tests);
\end{tikzpicture}
\caption{The numerical funnel for the one-loop five-gluon case study.  Algebraic and
graph reductions define the columns, generalized cuts determine the affine fiber, the
map \(\cS\) classifies its observable content, and \(\cT\) supplies independent
checks.}
\label{fig:fivepoint-number-funnel}
\end{figure}

The validation suite uses fresh primes and random seeds to test the particular solution
and every kernel direction on independently generated cut equations.  Separate
scalar-source routes give the same eleven master-integral coefficients.  The all-plus,
single-minus, adjacent-MHV, and nonadjacent-MHV amplitudes reproduce the expected
rational terms, poles, scheme conversion, analytic branches, and finite-kinematics values.
Together these calculations form the held-out validation suite \(\cT\).

The adaptive loop has a common proposal interface for physicists, explicit search
rules, and language-model agents.  At each stage, the proposer receives the exact
ranks, residuals, null spaces, and visibility data and returns a candidate column, row,
routing, or validation update.  Scientific skills encode these typed transformations.
The compiler and evaluator construct the next finite problem and record its diagnostic.
In the five-point calculation, fixed rules complete the repeated-contact basis and
select the informative double-cut block.  Agent-generated proposals pass through the
same compiler and exact tests.  The conceptual organization is
described in \cref{sec:agent-assisted-framework}; implementation and record formats are
given in \cref{sec:agent-harness,app:pseudocode,app:evidence-ledger}.

The rest of the paper follows the construction in this order.
\Cref{sec:four-point-walkthrough} derives the small example by hand.
\Cref{sec:bcj-fivepoint-framework} gives the loop-level BCJ framework and defines the
five-point problem.  \Cref{sec:agent-assisted-framework} describes proposal interfaces
for the adaptive loop, including language-model agents, scientific skills, harnesses,
and exact evaluators.
\Cref{sec:ansatz-compiler} constructs the tensor space and graph descendants, while
\cref{sec:cuts-and-visibility} builds the generalized-cut rows and explains their
topology-dependent visibility.  \Cref{sec:solution-equivalence} reconstructs the affine
fiber and applies \(\cS\), and \cref{sec:validation} performs the independent integral
and amplitude checks.  \Cref{sec:discussion} summarizes the scope and subsequent uses
of the reconstructed family.  Exact arithmetic, pseudocode, harness details,
reproducibility information, and the evidence ledger are placed in the appendices.

\section{A complete four-point inverse problem}
\label{sec:four-point-walkthrough}

We begin with a four-gluon tree amplitude, the smallest example that contains the main
parts of the reconstruction.  It has three cubic graphs.  Its constraint matrix, left
null space, right null space, and complete affine solution can therefore be calculated
by hand.  With the unknown space, measurements, signs, and equality conventions fixed,
this example gives a complete inverse solve.  The one-loop five-point problem uses the
same linear algebra with \(1127\) coefficients; later sections construct the finite
basis and measurements that define that larger system.

The overall coupling and common phase are omitted.  The displayed signs use one fixed
orientation of the three cubic graphs.  Other consistent orientation choices give the
same relations after the corresponding signs are changed together.

\subsection{Cubic graphs and the inverse problem}

Let all four massless momenta be outgoing and define
\begin{equation}
  s=(k_1+k_2)^2,\qquad
  t=(k_2+k_3)^2,\qquad
  u=(k_1+k_3)^2,
  \qquad s+t+u=0.
  \label{eq:fourpoint-mandelstam}
\end{equation}
With totally antisymmetric structure constants, choose the oriented color factors
\begin{equation}
  c_s=f^{a_1a_2b}f^{ba_3a_4},\qquad
  c_t=f^{a_2a_3b}f^{ba_1a_4},\qquad
  c_u=f^{a_3a_1b}f^{ba_2a_4}.
  \label{eq:fourpoint-oriented-colors}
\end{equation}
There are three cubic tree graphs, distinguished by the momentum flowing through their
single internal edge.  Each graph contribution has three conceptually different pieces:
\begin{center}
\small
\begin{tabularx}{0.88\textwidth}{@{}lXXX@{}}
\toprule
channel & propagator & color factor & kinematic numerator \\
\midrule
\(s\) & \(1/s\) & \(c_s\) & \(n_s(k,\varepsilon)\) \\
\(t\) & \(1/t\) & \(c_t\) & \(n_t(k,\varepsilon)\) \\
\(u\) & \(1/u\) & \(c_u\) & \(n_u(k,\varepsilon)\) \\
\bottomrule
\end{tabularx}
\end{center}
The propagator is fixed by graph topology.  The color factor is a product of structure
constants.  After these two pieces are separated, the remaining factor contains
momenta, polarizations, and the kinematic content of the vertices; this factor is the
graph numerator.
Thus
\begin{equation}
  \A_4^{\rm tree}
  =\frac{c_s n_s}{s}+\frac{c_t n_t}{t}+\frac{c_u n_u}{u}.
  \label{eq:fourpoint-cubic-amplitude}
\end{equation}

Yang--Mills theory also has a four-gluon contact vertex.  In the cubic-graph
representation, its contribution is distributed among the three numerators.  Within a
fixed color component, schematically, for any
\(\eta_s+\eta_t+\eta_u=1\), a contact expression \(V_4\) may be written as
\begin{equation}
  V_4
  =\frac{s\,\eta_s V_4}{s}
   +\frac{t\,\eta_t V_4}{t}
   +\frac{u\,\eta_u V_4}{u}.
  \label{eq:fourpoint-contact-distribution}
\end{equation}
The inverse propagator in a numerator cancels the displayed propagator.  Different
choices of the \(\eta\)'s move contact information between graphs, changing the
individual graph numerators while leaving their assembled amplitude invariant.

With the chosen orientations, the color Jacobi identity is
\begin{equation}
  c_s+c_t+c_u=0.
  \label{eq:fourpoint-color-jacobi}
\end{equation}
Reversing two legs at one cubic vertex gives an oriented graph \(\bar g\) with
\(c_{\bar g}=-c_g\).  A BCJ collection assigns
\(n_{\bar g}=-n_g\), so the product \(c_gn_g\) is independent of which of the two
equivalent vertex drawings is used.  The two drawings therefore represent a single
graph unknown.

A BCJ representation requires the three kinematic weights to obey the parallel identity
\begin{equation}
  n_s+n_t+n_u=0.
  \label{eq:fourpoint-kinematic-jacobi}
\end{equation}
Thus a BCJ numerator is one member of a collection indexed by oriented cubic graphs.
Antisymmetry and Jacobi relations apply to the collection as a whole.  This graph
numerator is distinct from the numerator of a single rational function obtained after
combining the full amplitude.

Suppose the two coefficients in a two-element color basis are known from Feynman rules
or an on-shell formula.  We denote them by the two color-ordered amplitudes below.
Before color--kinematics duality is imposed, they serve as two coordinates of the color
decomposition.  The consistency calculation will derive the relation between them.  In
the orientation used here, eliminating
\(c_t=-c_s-c_u\) from \cref{eq:fourpoint-cubic-amplitude} gives
\begin{equation}
  \A_4^{\rm tree}
  =c_s A(1,2,3,4)-c_u A(1,3,2,4),
  \label{eq:fourpoint-color-basis}
\end{equation}
where matching the two color coefficients yields
\begin{align}
  A(1,2,3,4)&=\frac{n_s}{s}-\frac{n_t}{t},
  \label{eq:fourpoint-partial-first}\\
  A(1,3,2,4)&=\frac{n_t}{t}-\frac{n_u}{u}.
  \label{eq:fourpoint-partial-second}
\end{align}
The amplitudes on the left are the physical input data.  The three graph numerators on
the right are the unknown representation data.  The inverse problem is to find three
kinematic functions that reproduce the known amplitudes and obey the kinematic Jacobi
identity.  Their organization follows the color algebra and allows the resulting
collection to be used in a double copy.

\subsection{The constraint matrix and its complete solution}

Define
\begin{equation}
  n=\begin{pmatrix}n_s\\ n_t\\ n_u\end{pmatrix},
  \qquad
  b=\begin{pmatrix}
       A(1,2,3,4)\\ A(1,3,2,4)\\ 0
     \end{pmatrix}.
  \label{eq:fourpoint-vectors}
\end{equation}
Equations \eqref{eq:fourpoint-partial-first},
\eqref{eq:fourpoint-partial-second}, and
\eqref{eq:fourpoint-kinematic-jacobi} are the single linear system
\begin{equation}
  \underbrace{
  \begin{pmatrix}
    1/s & -1/t & 0\\
    0   & 1/t  & -1/u\\
    1   & 1    & 1
  \end{pmatrix}}_{M}
  \begin{pmatrix}n_s\\n_t\\n_u\end{pmatrix}
  =
  \begin{pmatrix}
    A(1,2,3,4)\\A(1,3,2,4)\\0
  \end{pmatrix}.
  \label{eq:fourpoint-constraint-matrix}
\end{equation}
The first two rows are \emph{measurement rows}: they demand agreement with physical
amplitude data.  The third is a \emph{structural row}: it demands color--kinematics
duality.  Together they form the linear system \(Ax=b\).

The matrix is singular.  Its consistency condition produces a relation between the
input amplitudes, and its right null vector produces the freedom in the numerator
assignment.  Both follow by direct elimination.

For compactness write
\begin{equation}
  A_s=A(1,2,3,4),\qquad A_u=A(1,3,2,4).
  \label{eq:fourpoint-amplitude-shorthand}
\end{equation}
The first measurement row gives
\begin{equation}
  n_s=sA_s+\frac{s}{t}n_t,
  \label{eq:fourpoint-solve-ns}
\end{equation}
and the second gives
\begin{equation}
  n_u=\frac{u}{t}n_t-uA_u.
  \label{eq:fourpoint-solve-nu}
\end{equation}
Substitution into the Jacobi equation produces
\begin{align}
  0=n_s+n_t+n_u
   &=sA_s-uA_u
     +n_t\left(\frac{s}{t}+1+\frac{u}{t}\right)\notag\\
   &=sA_s-uA_u,
  \label{eq:fourpoint-compatibility-by-hand}
\end{align}
because \(s+t+u=0\).  The \(n_t\) term cancels, leaving the compatibility condition
\begin{equation}
  \boxed{s\,A(1,2,3,4)=u\,A(1,3,2,4).}
  \label{eq:fourpoint-bcj-amplitude-relation}
\end{equation}
This is the four-point BCJ amplitude relation in the stated orientation.  The same
duality that constrains graph numerators therefore predicts a relation among the
color-ordered amplitudes used as target data.

When \cref{eq:fourpoint-bcj-amplitude-relation} holds, set
\(n_t=t\alpha\).  Equations \eqref{eq:fourpoint-solve-ns} and
\eqref{eq:fourpoint-solve-nu} give the complete solution
\begin{equation}
  \boxed{
  \begin{pmatrix}n_s\\n_t\\n_u\end{pmatrix}
  =
  \begin{pmatrix}sA_s\\0\\-sA_s\end{pmatrix}
  +\alpha
  \begin{pmatrix}s\\t\\u\end{pmatrix}.}
  \label{eq:fourpoint-affine-family}
\end{equation}
The first vector is one convenient representative, obtained by choosing \(n_t=0\).
The second vector spans a one-dimensional family of equally valid representations.
The complete answer is therefore an affine line: one particular solution plus the
one-dimensional right null space.

\begin{remark}[A numerical check]
Take a generic algebraic point
\begin{equation}
  s=2,\qquad t=-3,\qquad u=1,
  \qquad A_s=3,\qquad A_u=6.
  \label{eq:fourpoint-numeric-input}
\end{equation}
The compatibility condition is \(2\times3=1\times6\).  At \(\alpha=0\), the
numerators are \((6,0,-6)\).  At \(\alpha=1\), they are \((8,-3,-5)\).  Direct
substitution gives, for both choices,
\begin{equation}
  \frac{n_s}{2}-\frac{n_t}{-3}=3,
  \qquad
  \frac{n_t}{-3}-n_u=6,
  \qquad
  n_s+n_t+n_u=0.
  \label{eq:fourpoint-numeric-check}
\end{equation}
Both numerator choices therefore give the same two amplitude coefficients.
\end{remark}

The compatibility calculation also has a null-space form that extends directly to
larger systems.  The row vector
\begin{equation}
  \lambda^T=\begin{pmatrix}s&-u&-1\end{pmatrix}
  \label{eq:fourpoint-left-kernel}
\end{equation}
obeys \(\lambda^T M=0\).  Multiplying the equation \(Mn=b\) from the left therefore
requires
\begin{equation}
  0=\lambda^TMn=\lambda^Tb=sA_s-uA_u.
  \label{eq:fourpoint-left-compatibility}
\end{equation}
If, for example, the last datum in \cref{eq:fourpoint-numeric-input} were mistakenly set
to \(A_u=5\), then \(\lambda^Tb=1\neq0\).  No choice of the three numerators could solve
all the equations.  The vector \(\lambda\) identifies the incompatible combination of
targets explicitly.

We call such a vector a \emph{left-null inconsistency vector}, or simply an
\emph{obstruction}.  In a larger problem it can
identify a missing ansatz term, a sign or routing mismatch, an incomplete state sum, or
incompatible physical data.  For a candidate new column \(c\), the value
\(\lambda^Tc\) states directly whether the new term changes the obstructed equation.

The right null space describes the freedom left after all three equations have been
imposed.  The column vector
\begin{equation}
  v=\begin{pmatrix}s\\t\\u\end{pmatrix}
  \label{eq:fourpoint-right-kernel}
\end{equation}
obeys \(Mv=0\).  Consequently \(n\) and \(n+\alpha v\) give identical values for every
row in \cref{eq:fourpoint-constraint-matrix}.  The right kernel consists of changes of
the unknowns that leave the stated equations unchanged.

Here that statement can be checked directly at the color-dressed level.  Under
\begin{equation}
  \Delta n_s=\alpha s,\qquad
  \Delta n_t=\alpha t,\qquad
  \Delta n_u=\alpha u,
  \label{eq:fourpoint-gauge-shift}
\end{equation}
the Jacobi combination and the amplitude change by
\begin{align}
  \Delta(n_s+n_t+n_u)&=\alpha(s+t+u)=0,
  \label{eq:fourpoint-gauge-jacobi}\\
  \Delta\A_4^{\rm tree}
  &=\alpha(c_s+c_t+c_u)=0.
  \label{eq:fourpoint-gauge-amplitude}
\end{align}
Each inverse propagator in \cref{eq:fourpoint-gauge-shift} cancels its graph propagator,
and the resulting contact terms cancel by the color Jacobi identity.  This right-kernel
direction is the simplest generalized gauge transformation.

In the loop problem, the right kernel is first defined relative to the chosen cut
equations.  An observable-equivalence or integrated-amplitude map then determines the
effect of these directions on later quantities.  A fixed representative supplies
coordinates within this family.  The normalized reduced-row-echelon prescription used
later is the large-system analogue of setting one free coordinate, such as \(n_t\), to
zero.

\begin{remark}[An explicit Yang--Mills helicity target]
The preceding derivation applies to any helicity choice.  For the maximally
helicity-violating (MHV) configuration
\((1^-,2^-,3^+,4^+)\), the two Parke--Taylor amplitudes in spinor-helicity variables
\cite{Parke:1986gb} are
\begin{align}
  A(1^-,2^-,3^+,4^+)
  &=i\frac{\langle12\rangle^4}
  {\langle12\rangle\langle23\rangle\langle34\rangle\langle41\rangle},
  \label{eq:fourpoint-parke-taylor-first}\\
  A(1^-,3^+,2^-,4^+)
  &=i\frac{\langle12\rangle^4}
  {\langle13\rangle\langle32\rangle\langle24\rangle\langle41\rangle}.
  \label{eq:fourpoint-parke-taylor-second}
\end{align}
Momentum conservation sandwiched between
\(\langle4|\) and \(|1]\) gives
\begin{equation}
  0=\left\langle4\left|\sum_{i=1}^{4}k_i\right|1\right]
   =\langle42\rangle[21]+\langle43\rangle[31],
  \qquad
  \frac{[21]}{[31]}=-\frac{\langle34\rangle}{\langle24\rangle}.
  \label{eq:fourpoint-spinor-conservation}
\end{equation}
Consequently,
\begin{equation}
  \frac{A(1,3,2,4)}{A(1,2,3,4)}
  =-\frac{\langle12\rangle\langle34\rangle}
          {\langle13\rangle\langle24\rangle}
  =\frac{s}{u},
  \label{eq:fourpoint-parke-taylor-ratio}
\end{equation}
which verifies \cref{eq:fourpoint-bcj-amplitude-relation}.  Substituting the first of
these functions into \cref{eq:fourpoint-affine-family} gives, for example, the simple
representative
\begin{equation}
  n_s=sA(1,2,3,4),\qquad n_t=0,\qquad n_u=-sA(1,2,3,4).
  \label{eq:fourpoint-mhv-representative}
\end{equation}
This representative displays the solution directly and inherits the poles of the
color-ordered amplitude.  A local construction uses the freedom in \(\alpha\) together
with a local ansatz to impose the desired pole structure, power counting, and graph
symmetries.
\end{remark}

\subsection{From kinematic functions to finitely many coefficients}

At four points we solved directly for the functions \(n_s,n_t,n_u\).  At one loop the
unknown numerators depend on the loop momentum and contain many allowed contractions of
\(\ell\), external momenta, and polarizations.  The step that turns this functional
problem into finite linear algebra is an ansatz.  Choose a finite set of allowed basis
functions and write
\begin{equation}
  n_g(z;x)=\sum_{a=1}^{N}x_a B_{ga}(z),
  \qquad z=(\ell,k_i,\varepsilon_i).
  \label{eq:function-to-coefficients}
\end{equation}
The basis functions \(B_{ga}\) contain all chosen kinematic dependence; the unknowns
\(x_a\) are ordinary coefficients.  Locality, mass dimension, loop-momentum power
counting, antisymmetry, and Jacobi relations determine which \(B_{ga}\)'s are admitted and
how they are related.

As a two-column coefficient example, consider the following local four-point tensor
structures of the correct mass dimension:
\begin{equation}
  m_1=s(\varepsilon_1\mathbin{\cdot}\varepsilon_2)
        (\varepsilon_3\mathbin{\cdot}\varepsilon_4),
  \qquad
  m_2=u(\varepsilon_1\mathbin{\cdot}\varepsilon_3)
        (\varepsilon_2\mathbin{\cdot}\varepsilon_4).
  \label{eq:fourpoint-local-monomials}
\end{equation}
Suppose that a two-column local ansatz and its signed graph
relabelings give \(n_g=x_1B_{g1}+x_2B_{g2}\).  The two amplitude measurements then act
directly on the coefficient vector:
\begin{equation}
  \begin{pmatrix}
    B_{s1}/s-B_{t1}/t & B_{s2}/s-B_{t2}/t\\
    B_{t1}/t-B_{u1}/u & B_{t2}/t-B_{u2}/u
  \end{pmatrix}_{z=z_q}
  \begin{pmatrix}x_1\\x_2\end{pmatrix}
  =\begin{pmatrix}A_s\\A_u\end{pmatrix}_{z=z_q}.
  \label{eq:fourpoint-two-column-row-example}
\end{equation}
These two monomials illustrate the conversion.  The five-point calculation uses a
complete symmetry-compatible basis constructed in \cref{sec:ansatz-compiler}.  After
evaluation at \(z_q\), every tensor contraction is a known exact number, leaving
\(x_1,x_2\) as the unknowns in this two-column example.

Let \(C\) denote a tree-amplitude relation or a generalized cut, evaluated at an exact
on-shell kinematic point \(z_q\).  If \(W_{Cg}(z_q)\) denotes the known propagator,
routing, and residue weight of graph
\(g\), then
\begin{align}
  \sum_g W_{Cg}(z_q)n_g(z_q;x)
  &=b_C(z_q),\notag\\
  \sum_{a=1}^{N}
  \underbrace{\left[\sum_gW_{Cg}(z_q)B_{ga}(z_q)\right]}_{A_{(C,q),a}}
  x_a
  &=b_C(z_q).
  \label{eq:one-measurement-one-row}
\end{align}
The expression in brackets is one matrix entry.  One cut at one sample point gives one
or more rows; stacking topologies, internal-state choices, and kinematic points builds
\(Ax=b\).
At tree level the targets are color-ordered amplitudes.  On a loop generalized cut they
are products of on-shell tree amplitudes summed over internal states.  Both sources lead
to the same linear inverse-problem structure.

For the five-point calculation, four linear maps organize the bracketed operation in
\cref{eq:one-measurement-one-row}:
\begin{equation}
  \begin{aligned}
    \underbrace{x}_{\text{coefficients}}
    &\xrightarrow{\mathsf{Build}}
    \underbrace{N}_{\text{master numerator}}
    \xrightarrow{\mathsf{Jac}}
    \underbrace{\{n_g\}}_{\text{related graph numerators}}\\[-1mm]
    &\xrightarrow{\mathsf{Prop}}
    \underbrace{\text{graph integrand}}_{\text{propagators and routes}}
    \xrightarrow{\mathsf{Cut}_C}
    \underbrace{b_C}_{\text{cut data}}.
  \end{aligned}
  \label{eq:fourpoint-to-compiled-map}
\end{equation}
Here \(\mathsf{Build}\) constructs the chosen master numerator from the coefficient
vector, \(\mathsf{Jac}\) generates the graph numerators related by Jacobi identities,
\(\mathsf{Prop}\) inserts the propagator and routing factors, and
\(\mathsf{Cut}_C\) takes the residue on cut \(C\).  Their composition is
\(A_C=\mathsf{Cut}_C\mathsf{Prop}\mathsf{Jac}\mathsf{Build}\).

The correspondence between this hand calculation and the five-point reconstruction is
summarized below.
\begin{center}
\small
\begin{tabularx}{\textwidth}{@{}
  >{\raggedright\arraybackslash}X
  >{\raggedright\arraybackslash}X
  >{\raggedright\arraybackslash}X@{}}
\toprule
four-point hand calculation & one-loop five-point calculation & common meaning \\
\midrule
\(n_s,n_t,n_u\) & \(1127\) coefficients after graph-symmetry reduction & unknown representation data \\
one Jacobi row & graph numerators generated with antisymmetry and Jacobi identities &
structural constraints \\
two color-ordered amplitudes & maximal, box, triple, and double cuts & physical
measurements \\
\(3\times3\) matrix \(M\) & sparse cut matrices
\(A_C=\mathsf{Cut}_C\mathsf{Prop}\mathsf{Jac}\mathsf{Build}\) & measurement map \\
\(\lambda^Tb\neq0\) & left-null obstruction & target lies outside the image of the chosen basis \\
\(\alpha(s,t,u)\) & \(K\boldsymbol\beta\), with \(207\) directions & unresolved
right-kernel freedom \\
\(n_t=0\) as a convenient choice & fixed reduced-row-echelon representative & deterministic
coordinate choice \\
direct substitution & independent cuts, implementations, and exact reruns & verification \\
\bottomrule
\end{tabularx}
\end{center}

The same construction applies when the unknown numerators are functions: a finite
basis converts them into coefficients, and physical evaluations give linear rows.  At
four points, graph signs, routes, basis choices, and all three equations fit on one page.
The five-point calculation organizes thousands of tensor structures, related graph
numerators, cut topologies, exact kinematic samples, and independent checks.

Explicit graph-sign, routing, basis, and cut records support the same two null-space
calculations and interpretations illustrated by
\cref{eq:fourpoint-constraint-matrix}.
The routines and records used to perform the larger calculation are collected in
\cref{app:pseudocode,app:evidence-ledger}.  The next section supplies the cubic-graph,
generalized-cut, and dimensional-regularization background needed to formulate that
larger system.

\section{One-loop BCJ reconstruction from cubic graphs and generalized cuts}
\label{sec:bcj-fivepoint-framework}
\label{sec:physics-background}

\subsection{Cubic graphs, Jacobi identities, and representation freedom}

The hand calculation in \cref{sec:four-point-walkthrough} separated each cubic-graph
term into a propagator, a color factor, and a kinematic numerator.  The same separation
works at loop level.  Interactions with four or more legs are assigned to cubic graphs by
putting suitable inverse propagators into their numerators; when divided by the graph
propagators, those factors reproduce the original contact terms.  A one-loop
\(n\)-gluon amplitude can then be represented as
\begin{equation}
  \A_n^{(1)}
  =i g^n\!\int\!\frac{d^D\ell}{(2\pi)^D}
  \sum_{g\in\Gamma_{n,3}^{(1)}}
  \frac{1}{S_g}\frac{c_g\,n_g(\ell)}
  {\prod_{e\in g}D_e(\ell)}.
  \label{eq:cubic-one-loop-amplitude}
\end{equation}
Here \(S_g\) is the symmetry factor that corrects for graph automorphisms, and \(c_g\)
is built from structure constants.  The quantity
\(D_e(\ell)=(\ell+\Delta_e)^2+i0\) is the inverse propagator of an internal line.  Here
\(\Delta_e\) is the fixed external-momentum offset assigned to that line.  Once a
loop-momentum route and vertex orientations have been fixed, \(n_g(\ell)\) is an ordinary
kinematic function of momenta and polarizations.  It is graph dependent and generally
gauge dependent, while the properly assembled amplitude is the observable quantity.

\begin{definition}[BCJ numerator collection]
For a fixed set of oriented cubic graphs and momentum routes, a numerator collection is
an assignment \(g\mapsto n_g\) that reconstructs the amplitude through
\cref{eq:cubic-one-loop-amplitude}.  It is a \emph{BCJ}, or color--kinematics-dual,
collection when every vertex flip and every local Jacobi move obeys the same signed
linear relation as the corresponding color factors.
\end{definition}

Orientations can be chosen so that a flip at a cubic vertex gives
\begin{equation}
  c_{\bar g}=-c_g.
  \label{eq:color-antisymmetry}
\end{equation}
For any three graphs differing only at one internal edge, the color Jacobi identity is
\begin{equation}
  c_i+c_j+c_k=0.
  \label{eq:color-jacobi}
\end{equation}
The parallel kinematic relations are
\begin{equation}
  n_{\bar g}=-n_g,
  \qquad n_i+n_j+n_k=0.
  \label{eq:kinematic-jacobi}
\end{equation}
Thus ``the numerator'' in the rest of this paper means a coordinated assignment to all
relevant graphs, generated from a chosen set of master numerators.  A master numerator
is a generating graph numerator from which the others follow by these graph relations.
This object differs from the numerator of the fully combined amplitude.

\begin{figure}[H]
\centering
\begin{tikzpicture}[scale=.96,line cap=round,line join=round]
  \draw[bcjgray!45,dashed,thick]
    (-3.5,0)--(0,2.75)--(3.5,0)--cycle;
  \node[align=center,font=\small,fill=bcjlight!96,rounded corners=1.5pt,
    inner xsep=5pt,inner ysep=2pt] at (0,1.30)
    {local Jacobi plane};

  \begin{scope}[shift={(0,2.75)}]
    \coordinate (L) at (-.52,0); \coordinate (R) at (.52,0);
    \draw[very thick,bcjblue] (L)--(R);
    \draw[thick] (L)--(-1.28,.66) node[left,font=\scriptsize] {\(a\)};
    \draw[thick] (L)--(-1.28,-.66) node[left,font=\scriptsize] {\(b\)};
    \draw[thick] (R)--(1.28,.66) node[right,font=\scriptsize] {\(c\)};
    \draw[thick] (R)--(1.28,-.66) node[right,font=\scriptsize] {\(d\)};
    \fill (L) circle (1.8pt); \fill (R) circle (1.8pt);
    \node[font=\small\bfseries,bcjblue] at (0,1.05) {\(s\)-channel: \(n_s\)};
  \end{scope}

  \begin{scope}[shift={(-3.5,0)}]
    \coordinate (L) at (-.52,0); \coordinate (R) at (.52,0);
    \draw[very thick,bcjcyan!80!black] (L)--(R);
    \draw[thick] (L)--(-1.28,.66) node[left,font=\scriptsize] {\(b\)};
    \draw[thick] (L)--(-1.28,-.66) node[left,font=\scriptsize] {\(c\)};
    \draw[thick] (R)--(1.28,.66) node[right,font=\scriptsize] {\(a\)};
    \draw[thick] (R)--(1.28,-.66) node[right,font=\scriptsize] {\(d\)};
    \fill (L) circle (1.8pt); \fill (R) circle (1.8pt);
    \node[font=\small\bfseries,bcjcyan!70!black] at (0,-1.10)
      {\(t\)-channel: \(n_t\)};
  \end{scope}

  \begin{scope}[shift={(3.5,0)}]
    \coordinate (L) at (-.52,0); \coordinate (R) at (.52,0);
    \draw[very thick,bcjgold!70!black] (L)--(R);
    \draw[thick] (L)--(-1.28,.66) node[left,font=\scriptsize] {\(c\)};
    \draw[thick] (L)--(-1.28,-.66) node[left,font=\scriptsize] {\(a\)};
    \draw[thick] (R)--(1.28,.66) node[right,font=\scriptsize] {\(b\)};
    \draw[thick] (R)--(1.28,-.66) node[right,font=\scriptsize] {\(d\)};
    \fill (L) circle (1.8pt); \fill (R) circle (1.8pt);
    \node[font=\small\bfseries,bcjgold!55!black] at (0,-1.10)
      {\(u\)-channel: \(n_u\)};
  \end{scope}
\end{tikzpicture}
\caption{The three graphs in one Jacobi relation.  Cutting one internal edge leaves
four half-edges, which have three cubic reconnections.  With a fixed orientation, both
the three color factors and the three kinematic numerators sum to zero.}
\label{fig:jacobi-triangle}
\end{figure}

The original duality and tree-level amplitude relations were introduced in
\cite{Bern:2008qj}; loop-level constructions and the double copy were developed in
\cite{Bern:2010ue,Carrasco:2011mn,Bern:2013yya,Boels:2013bi}.  Reviews include
\cite{Bern:2019prr}.
Constructive tree-level algorithms provide useful points of comparison.  They have
produced relabeling-symmetric BCJ numerators through six points
\cite{Fu:2014pya}.  In the scattering-equation representation, a concrete reduction of
the CHY integrand makes color--kinematics duality manifest term by term
\cite{Bjerrum-Bohr:2016axv}.  These constructions select a numerator representation
through an explicit algebraic organization.  The present calculation complements them
by determining the complete affine family compatible with a chosen local ansatz and
cut map.

If \(n_g\) and \(\widetilde n_g\) are two duality-satisfying numerator sets, the formal
double copy is
\begin{equation}
  \M_n^{(1)}
  =i\left(\frac{\kappa}{2}\right)^n
  \int\!\frac{d^D\ell}{(2\pi)^D}
  \sum_g\frac{1}{S_g}
  \frac{n_g(\ell)\widetilde n_g(\ell)}{\prod_eD_e(\ell)}.
  \label{eq:loop-double-copy}
\end{equation}
The complete numerator collection organizes all graph crossings and color relations in
one representation and can be used directly in the gravity double copy.

The affine line in \cref{eq:fourpoint-affine-family} already showed that a cubic
representation can be redundant: distinct graph numerators may obey Jacobi and assemble
to the same amplitude.  For a general graph set, a deformation
\begin{equation}
  n_g\longmapsto n_g+\Delta_g
  \label{eq:gauge-shift}
\end{equation}
preserves the gauge-theory amplitude when
\begin{equation}
  \sum_g\frac{1}{S_g}\frac{c_g\Delta_g}{\prod_eD_e}=0
  \label{eq:generalized-gauge-condition}
\end{equation}
under the chosen notion of equality.  At loop level, four notions are used:
\begin{align}
  n\sim_{\rm alg}n'&:\quad n-n'\text{ vanishes by the listed on-shell and algebraic identities},
  \label{eq:eq-alg}\\
  n\sim_{\rm cut}n'&:\quad \Res_C(n-n')=0\text{ on every generalized cut used in the reconstruction},
  \label{eq:eq-cut}\\
  n\sim_{\cS}n'&:\quad n-n'\text{ vanishes under the observable-equivalence map }\cS,
  \label{eq:eq-s}\\
  n\sim_{\I}n'&:\quad \I[n-n']=0\text{ for the chosen integration, LSZ, and scheme prescription}.
  \label{eq:eq-int}
\end{align}
Algebraic identities act on local tensor expressions.  Cut equality compares their
generalized-cut residues.  The observable-equivalence map \(\cS\) combines fixed graph
routing, scaleless-integral identities, and an
LSZ projection.  Integrated equality is defined by a complete integration and
renormalization prescription.  Loop-momentum shifts, total derivatives,
evanescent terms, scaleless integrals, external-leg bubbles, and renormalization
conventions determine the maps between these levels.

The deterministic reduced row echelon form (RREF) rule selects
\(x^{(0)}+D_sx^{(1)}\), which we call the \emph{fixed representative}.  The full set in
\eqref{eq:main-affine-family} is the \emph{reconstructed numerator family}.  The
observable-equivalence calculation determines whether this family has one image under
\(\cS\).

Forward-limit constructions relate one-loop integrands to tree objects with two extra
legs carrying momenta \(\ell\) and \(-\ell\).  They have produced loop-level correlators,
relations, and BCJ numerators in several theories
\cite{He:2016mzd,He:2017spx,Mafra:2014oia,He:2022iqi}.  Cao, He, Zhang, and Zhu derived
loop-level double-copy relations from single cuts and gave local crossing-symmetric
Yang--Mills BCJ numerators through five points \cite{Cao:2025forward}.  Their formula
supplies a published numerator collection in the same coefficient space.  We first
reconstruct the affine family from the chosen basis and generalized cuts, and then
express that collection in the reconstructed coordinates:
\begin{equation}
  \begin{gathered}
  \underbrace{\text{basis}\to\text{cut equations}\to\text{exact solve}\to\text{checks}}
  _{\text{reconstruction}}
  \\[1mm]\Downarrow\\[-1mm]
  \underbrace{\text{published numerator}\to\text{coordinates in the solution family}}
  _{\text{coordinate comparison}}
  \end{gathered}.
  \label{eq:post-audit-boundary}
\end{equation}
From the scattering-equation viewpoint, the limiting operations in this comparison have
different solution behavior.  An isolated factorization channel produces a collapsing
subset of punctures, a generic soft limit does not require a singular solution, and the
forward limit contains two distinct collision scalings \cite{Feng:2021iik}.  These facts
describe how an integrand representation is obtained in singular kinematics.  The
coordinate comparison gives the kernel coordinates in
\cref{eq:known-coordinate-bridge}.  Fixed-routing equality and equality after LSZ
projection and integration are evaluated as two distinct levels below.

\subsection{Generalized cuts, dimensional regularization, and exact equations}

Generalized unitarity reconstructs loop integrands from products of on-shell trees
\cite{Bern:1994zx,Bern:1994cg,Britto:2004nc,Forde:2007mi,Giele:2008ve,Badger:2008cm}.
Other one-loop organizations use the same tree data in complementary ways.  The
Q-cut representation reduces channel by channel to the conventional unitarity-cut
integrand \cite{Huang:2015cwh}, while one-loop CHY differential operators factorize into
their tree-level counterparts on a unitarity cut \cite{Zhou:2021kzv}.  Here the cut is
used as an exact linear measurement on a common BCJ coefficient space.
For a cut \(C\) placing a set of propagators on shell,
\begin{equation}
  \left.\left(\prod_{e\in C}D_e\right)\I_n^{(1)}\right|_{D_e=0}
  =\sum_{\text{physical states}}
   \prod_{v\in C}\A_v^{\rm tree}.
  \label{eq:generalized-cut}
\end{equation}
Given a numerator ansatz linear in coefficients \(x_j\), every sampled cut produces a
linear equation
\begin{equation}
  \sum_j A_{C,ij}x_j=b_{C,i}.
  \label{eq:cut-linear-row}
\end{equation}
The matrix \(A\) maps numerator coefficients to cut data.  Its right null space (kernel)
contains coefficient changes that vanish on the sampled cut equations.  Its left null
space tests
whether the target vector belongs to the span of the chosen numerator terms:
\begin{align}
  v\in\ker A &\quad\Longrightarrow\quad
      x\text{ and }x+v\text{ have identical sampled cuts},
      \label{eq:right-kernel-meaning}\\
  \lambda^TA=0,\ \lambda^Tb\ne0
    &\quad\Longrightarrow\quad b\notin\im A.
      \label{eq:left-obstruction-meaning}
\end{align}
A nonzero \(\lambda^Tb\) identifies an inconsistency between the current ansatz and the
target cuts.  A nontrivial right kernel identifies directions that require additional cuts
or a later equivalence classification.  \Cref{sec:cut-bootstrap} constructs the cut
equations in a fixed order and applies these two tests.

The cut equations also retain the dimensional data required by the regularization
scheme.  We keep several dimensions conceptually separate:
\begin{equation}
  D_{\rm mom,ext},\qquad D_{\rm pol,ext},\qquad D_\ell,
  \qquad D_s.
  \label{eq:dimension-tuple}
\end{equation}
The external momenta span at most a four-dimensional subspace at five points.
\(D_{\rm mom,ext}\) and \(D_{\rm pol,ext}\) denote the dimensions used for external
momenta and polarizations.  The loop momentum is treated in \(D_\ell\) dimensions, and
\(D_s\) is the dimension used in the internal polarization-state sum.  In a
four-dimensional helicity setup one writes
\begin{equation}
  \ell^\mu=\bar\ell^\mu+\widetilde\ell^\mu,
  \qquad \widetilde\ell^{\,2}=-\mu^2,
  \qquad \bar\ell\in\mathbb R^{1,3}.
  \label{eq:loop-split}
\end{equation}
The internal completeness relation is evaluated in this spin space.  At one
loop in pure Yang--Mills, the reconstructed coefficient vector is affine in \(D_s\),
which motivates
\begin{equation}
  x_{\rm p}(D_s)=x^{(0)}+D_sx^{(1)}.
  \label{eq:ds-affinity}
\end{equation}
The four-dimensional helicity (FDH) and 't Hooft--Veltman (HV) schemes use different
specializations of the spin dimension relative to \(D=4-2\epsilon\).  The scheme is
specified when amplitudes are compared.

Working in \(D\) dimensions retains rational terms.  Four-dimensional cuts can miss
\(\mu^2\)-dependent contributions that integrate to finite rational functions.  The
integral-basis calculation therefore includes dimension-shifting identities and a
shifted pentagon.

The resulting linear equations are solved exactly over prime fields, a standard method
for reconstructing rational functions in amplitude calculations
\cite{Peraro:2016wsq,Peraro:2019svx,Abreu:2017xsl}.  Exact arithmetic gives unambiguous
ranks, null spaces, and residuals; several primes and independent kinematic samples
separate generic statements from accidental specializations.  Chinese remaindering and
rational reconstruction then lift the result to \(\Q\).  The arithmetic algorithms and
uniqueness bounds are recorded in \cref{app:exact-arithmetic}.  Using these graph,
equivalence, cut, and regulator conventions, the next section formulates the finite
five-point inverse problem.

\subsection{The finite five-point inverse problem}
\label{sec:problem}

The four-point calculation of \cref{sec:four-point-walkthrough} started from a fixed
matrix and solved it.  At five points, constructing that matrix is part of the
scientific problem.  The construction specifies a finite coefficient space, graph
relations in a common loop route, cut measurements, an equivalence relation for
residual differences, and independent validation tests.  These ingredients distinguish
the scientific target from its finite computational realization.

The \emph{fixed scientific specification} \(\Omega\) collects the physical and
mathematical inputs held constant across the calculation:
\begin{equation}
  \Omega=\left(
  \begin{array}{c}
  \text{pure Yang--Mills theory and the one-loop five-gluon process},\\
  \text{a BCJ numerator collection as the reconstruction target},\\
  \text{the chosen local parity-even ansatz envelope and graph identities},\\
  \text{the definition of generalized-cut targets and allowed conventions},\\
  \text{the specified \(S\)-equivalence, validation tests, and completion criteria}
  \end{array}\right).
  \label{eq:scientific-charter}
\end{equation}
The theory, target amplitude, graph identities, and equivalence relation remain fixed
while the finite basis, measurement set, and evaluation sequence are refined.

At iteration \(t\), a \emph{stage specification}
\(\sigma_t\in\Sigma(\Omega)\) gives the current finite realization.  It specifies the
active tensor and contact sectors, graph and routing construction, cut topologies and
samples, arithmetic choices, quotient calculation, and validation tests.  Diagnostics
update these choices within \(\Omega\): an obstruction can activate a contact sector in
the ansatz envelope, and a right-kernel diagnostic can add a cut topology from the
measurement family.

Together these choices define the finite reconstruction problem
\begin{equation}
  \boxed{\;
  \mathcal P_{\sigma}
  =\bigl(A_{\sigma},b_{\sigma};\,\cS_{\sigma};\,\cT_{\sigma}\bigr).
  \;}
  \label{eq:compiled-scientific-problem}
\end{equation}
Here \(A_\sigma x=b_\sigma\) reconstructs the cut-solution fiber in the active
coefficient space.  The map \(\cS_\sigma\) acts on differences of solutions and defines
observable equivalence.  The collection \(\cT_\sigma\) contains held-out cuts, integral
reduction, helicity amplitudes, and comparisons between specified integration
prescriptions.  Thus \(\cS_\sigma\) classifies ambiguity within the fiber, while
\(\cT_\sigma\) tests physical consequences of the reconstruction.

The corresponding compilation and evaluation maps are
\begin{equation}
  \sigma_t
  \xrightarrow{\ \mathrm{Compile}_{\Omega}\ }
  \bigl(A_t,b_t,\cS_t,\cT_t\bigr)
  \xrightarrow{\ \mathrm{Evaluate}\ }
  d_t,
  \label{eq:compile-evaluate-stage}
\end{equation}
with an exact diagnostic of the form
\begin{equation}
  d_t=\left(
  \rank A_t,\ \rank[A_t\mid b_t],\
  \ker A_t^T,\ \ker A_t,\ \cS_t(\ker A_t),\
  r_{\mathrm{heldout}}\right).
  \label{eq:complete-stage-diagnostic}
\end{equation}
Compilation selects the finite inverse problem, and Gaussian elimination solves it.
\Cref{sec:agent-assisted-framework} describes how the diagnostic is used to propose the
next finite specification.  The rest of this section defines the concrete objects
compiled for the five-point problem.

At five points, the unknowns are coefficients of loop-momentum and polarization tensors.
They parametrize a related set of pentagon, box, triangle, and lower-topology graphs,
and products of tree amplitudes on generalized cuts supply the physical data.  Once a
stage specification has been compiled, the inner calculation tests its exact linear
system for consistency and determines its complete affine solution.

All external momenta are outgoing,
\begin{equation}
  k_i^2=0,
  \qquad \sum_{i=1}^{5}k_i=0,
  \qquad k_i\cdot\varepsilon_i=0.
  \label{eq:fivepoint-kinematics}
\end{equation}
We use cyclic Mandelstam invariants
\begin{equation}
  s_{12},\ s_{23},\ s_{34},\ s_{45},\ s_{51},
  \qquad s_{ij}=(k_i+k_j)^2,
  \label{eq:cyclic-invariants}
\end{equation}
supplemented, when needed, by the parity-odd invariant
\begin{equation}
  \trfive=4i\,\epsilon_{\mu\nu\rho\sigma}
    k_1^\mu k_2^\nu k_3^\rho k_4^\sigma.
  \label{eq:tr5-definition}
\end{equation}
The parity-even ansatz is built without explicit \(\trfive\) denominators.  The
helicity checks retain both parity-conjugate kinematic solutions because
spinor-helicity evaluations distinguish the two roots of the Gram relation.

Choose the adjacent pentagon routing
\begin{equation}
  \ell_1=\ell,
  \qquad \ell_{i+1}=\ell_i+k_i,
  \qquad D_i=\ell_i^2.
  \label{eq:adjacent-routing}
\end{equation}
A routing table records the loop-momentum assignment of every graph and the shift needed
to express it in \cref{eq:adjacent-routing}.  All graph operations use this table, so
pointwise comparisons and cut equations use the same loop-momentum convention.  The
explicit routing and polarization conventions are collected in \cref{app:conventions}.

The graph family is specified by a small set of generating numerators.  A \emph{master
numerator} is a generating graph numerator from which other graph
numerators follow by symmetries and Jacobi identities.  A numerator obtained in this way
is called a \emph{graph descendant}.  This use of ``master'' is separate from the master
integrals used later in integral reduction.  We assign unknown functions to the
generating set and construct the remaining numerators from the graph relations.

Fix leg 1 and denote a pentagon master by
\begin{equation}
  N_{1|\rho}(\ell),\qquad \rho\in S_4.
  \label{eq:master-words}
\end{equation}
The word \(1|\rho\) records the cyclic order of the five external legs around the loop.
The \(4!=24\) master words generate all cubic graphs through antisymmetrization,
Jacobi brackets, and loop-momentum shifts.  For example, antisymmetrizing two adjacent
entries replaces their ordered pair by the box graph in the corresponding Jacobi
relation.  The resulting linear relation is
\begin{equation}
  n_g(\ell)=\sum_{\rho\in S_4}J_{g\rho}[\tau_{\Delta_{g\rho}}]
  N_{1|\rho}(\ell),
  \qquad \tau_\Delta f(\ell)=f(\ell+\Delta).
  \label{eq:descendant-map-problem}
\end{equation}
The map \(J_{g\rho}\) implements the nested antisymmetrizations, also called free-Lie
relations, together with the required momentum shifts \(\tau_\Delta\).  The offset
\(\Delta_{g\rho}\) is fixed by the graph and master word.  The kinematic
Jacobi identities therefore hold symbolically in the generated graph numerators before
the numerical rows are constructed.  The explicit graph-generation checks are recorded
in \cref{app:pseudocode,app:evidence-ledger}.

These routes and graph relations turn the functional reconstruction into a finite
coefficient problem.
Let \(V_{\rm raw}\) be the ordered span of local parity-even tensor monomials allowed by
the power-counting specification.  Let \(R_{\rm kin}\) be the exact relation matrix
generated by on-shell conditions, momentum conservation, transversality, and the chosen
algebraic reductions.  Then
\begin{equation}
  V_{\rm ansatz}=V_{\rm raw}/\im R_{\rm kin},
  \qquad \dim V_{\rm ansatz}=10{,}724.
  \label{eq:ansatz-quotient}
\end{equation}
The quotient notation identifies two tensor expressions when their difference follows
from these kinematic identities.  For example,
\(k_i^2=0\) removes monomials proportional to \(k_i^2\),
\(k_i\mathbin{\cdot}\varepsilon_i=0\) removes longitudinal factors, and momentum
conservation can replace contractions with \(k_5\) by contractions with
\(-k_1-\cdots-k_4\).  These reductions remove duplicate columns before the generalized
cuts are imposed.

To write an equivalence class as an explicit tensor, we choose one expression in each
class by a fixed linear map called a \emph{quotient section}.  This map converts every
basis vector of \(V_{\rm ansatz}\) back to a definite expression in \(V_{\rm raw}\).
Consequently every coefficient vector and graph numerator can be converted between the
reduced basis and the unreduced tensor notation.

Pentagon rotations, reflections, and graph-orientation signs give a second reduction,
\begin{equation}
  O:\Q^{1127}\longrightarrow V_{\rm ansatz}.
  \label{eq:orbit-map}
\end{equation}
Here the \(1127\) input coordinates are the independent coefficients left after these
graph symmetries are imposed.  The \(7100\)-dimensional algebraic kernel is the kernel
of the raw tensor reduction \(\Phi\); it has already been removed before the orbit map
\(O\) is built.  The \(207\)-dimensional right kernel appears after the cut equations
are solved and consists of numerator changes left free by those cuts.
Evaluating the basis-expanded
numerator produces one linear row.  In the five-point calculation this evaluation is
factored into four linear maps.  For a cut family \(C\), basis coefficients
pass through
\begin{equation}
  A_C=\mathsf{Cut}_C\,\mathsf{Prop}\,\mathsf{Jac}\,\mathsf{Build}.
  \label{eq:compiled-factorization}
\end{equation}
The map \(\mathsf{Build}\) constructs master numerators from the \(1127\)
graph-symmetry-reduced coefficients.  The map \(\mathsf{Jac}\) generates all related
graph numerators and applies their loop-momentum shifts.  The map \(\mathsf{Prop}\)
inserts the propagator weights, and \(\mathsf{Cut}_C\) takes the residue on cut \(C\) at
an on-shell point.  Their composition maps the coefficient vector directly to its cut
value.  The target
\begin{equation}
  b_C=\sum_{\rm states}\prod_{v\in C}\A_v^{\rm tree}
  \label{eq:raw-target}
\end{equation}
is the product of tree amplitudes on the same cut, summed over internal physical states.
It is computed independently of the numerator expression.

The reconstruction component of \(\mathcal P_\sigma\) is
\begin{equation}
  A_\sigma\,x(D_s)=b_\sigma(D_s),
  \qquad A_\sigma\in\Q^{m\times1127},
  \label{eq:full-inverse-problem}
\end{equation}
In this coefficient parametrization the graph-side matrix
\(A_\sigma\) is independent of \(D_s\); the internal-state sum enters through
\(b_\sigma(D_s)\).  The kernel is therefore common to the sampled spin dimensions,
while the particular solution is affine in \(D_s\).  Its exact finite-field evaluation
and rational reconstruction are described in \cref{app:exact-arithmetic}.  Once the
final specification is fixed, we suppress the subscript \(\sigma\) and write
\(A,b,\cS,\cT\).

The finite coefficient space and its cut map lead to three outputs.  First, solving
\(Ax=b\) gives one particular
solution and the complete right kernel of the cut matrix.  Second, the
observable-equivalence map \(\cS\) is defined for the color-ring projection
\(R_{12345}\), the coefficient with cyclic external ordering \((1,2,3,4,5)\).  It
classifies the resulting affine family after fixed loop-momentum routing,
scaleless-integral identities, and LSZ projection.  Third, graph
and color assembly followed by
integral reduction maps selected representatives to helicity amplitudes.  One comparison
tests the equation
\begin{equation}
  \mathsf H_{\rm presc}:\quad
  \I_{\rm DR}\!\left[\A^{\mathrm{FR}}-\A^{\mathrm{FL}}\right]=0.
  \label{eq:prescription-hypothesis}
\end{equation}
Here \(\A^{\mathrm{FR}}\) is the directly reconstructed fixed-routing integrand and
\(\A^{\mathrm{FL}}\) is the published forward-limit integrand.  The comparison uses a
fixed graph convention, routing, LSZ projection, integral basis, and
dimensional-regularization prescription.  The reconstruction and
observable-equivalence maps are evaluated for the complete affine family.  The
amplitude map is checked in \cref{sec:validation}.  Under the stated prescription, the
difference has the nonzero exact \(\Q(D)\) coefficient described in
\cref{sec:prescription-counterexample}.

\begin{figure}[t]
\centering
\begin{tikzpicture}[x=1cm,y=1cm,line cap=round,line join=round,
  every node/.append style={font=\small}]
  \node[pending,text width=22mm,inner xsep=1.5mm] (spec) at (1.35,.85)
    {current working\\problem \(\sigma\)};
  \node[evidence,text width=28mm,inner xsep=1.5mm] (problem) at (4.75,.85)
    {compiled problem\\\((A,b;\cS;\cT)\)};
  \node[evidence,text width=31mm,inner xsep=1.5mm] (fiber) at (9.00,.85)
    {cut-solution fiber\\\(\mathcal F_b=x_{\rm p}+\ker A\)};
  \node[evidence,text width=27mm,inner xsep=1.5mm] (quotient) at (13.55,.85)
    {observable quotient\\\(\mathcal F_b/\!\sim_{\cS}\)};
  \node[flow,text width=35mm,inner xsep=1.5mm] (validation) at (9.00,-1.25)
    {independent readouts \(\cT\)\\integration-by-parts reduction, helicity, prescriptions};

  \draw[arrow] (spec.east)--(problem.west);
  \draw[arrow] (problem.east)--(fiber.west);
  \draw[arrow] (fiber.east)--(quotient.west);
  \node[font=\scriptsize,fill=white,inner sep=1pt] at (3.05,1.58)
    {\(\mathrm{Compile}_{\Omega}\)};
  \node[font=\scriptsize,fill=white,inner sep=1pt] at (6.88,1.58)
    {solve \(Ax=b\)};
  \node[font=\scriptsize,fill=white,inner sep=1pt] at (11.28,1.58)
    {classify with \(\cS\)};
  \draw[arrow] (problem.south) to[out=-70,in=160] (validation.north west);
  \draw[arrow] (fiber.south)--(validation.north);
\end{tikzpicture}
\caption{The roles of the compiled objects.  The cut equations reconstruct an affine
fiber, \(\cS\) classifies differences inside that fiber, and \(\cT\) denotes independent
readouts and validation maps.  The three objects therefore represent reconstruction,
equivalence, and validation, respectively.}
\label{fig:claim-ladder}
\end{figure}

Exact-arithmetic details are given in
\cref{app:exact-arithmetic,app:evidence-ledger}.  The next section describes how
diagnostics propose and evaluate updates to \(\sigma_t\).  Subsequent sections construct
the finite tensor space and generalized-cut equations explicitly.

\section{Agent-assisted adaptive inverse-problem construction}
\label{sec:agent-assisted-framework}

The four-point example in \cref{sec:four-point-walkthrough} can be derived on paper in
one pass.  At five points, the same operations act on thousands of tensor structures
and many routed cut families.  It is useful to treat the calculation as a sequence of
finite inverse problems.  Each stage has an ordered coefficient space and an ordered
measurement set, and the exact result of that stage supplies the state from which the
next revision is chosen.

Proposal generation and exact evaluation occupy successive steps of the calculation.
A physicist, a fixed search policy, or a stateful agent may suggest the next candidate
revision.  In each case the same compiler constructs the finite problem and the same
evaluator calculates its ranks, residuals, and null spaces.  This section first
describes the diagnostic state, then introduces language models, agents, skills, and
harnesses in terms familiar to a physics calculation, and finally records the policies
used in the five-point construction.

\subsection{Exact diagnostics and candidate revisions}

The fixed scientific specification \(\Omega\), mutable working specification
\(\sigma_t\), compiled tuple
\(\mathcal P_{\sigma_t}=(A_t,b_t;\cS_t;\cT_t)\), and exact diagnostic \(d_t\) were
defined in \cref{sec:problem}.  The active numerator basis and cut schedule enter the
factorized map in \cref{eq:compiled-factorization}.  Its exact evaluation returns the
ranks of \(A_t\) and \([A_t\mid b_t]\), the left and right null spaces, the action of
\(\cS_t\) on the right kernel, and the residuals on \(\cT_t\).  Thus \(d_t\) is a
finite description of what the current ansatz can express, which coefficient
directions the active cuts leave unresolved, and how those directions behave on the
specified readouts.

Let \(L_t\) and \(K_t\) denote the ordered left- and right-null bases.  Their pairings
with candidate columns and rows give the information needed for a revision.

Three finite matrix calculations then answer three different questions:
\begin{equation}
  L_t^T C_{\rm cand}\ne0,
  \qquad
  \rank(R_{\rm cand}K_t)>0,
  \qquad
  \cS_tK_t=0.
  \label{eq:adaptive-revision-tests}
\end{equation}
A candidate column block \(C_{\rm cand}\) is useful when it can change an inconsistent
left-null combination.  A candidate row block \(R_{\rm cand}\) is informative when it
measures a direction in the current right kernel.  Once the measurement coverage is
complete, the last equality states that every remaining cut-preserving deformation has
the same specified readout.  These exact tests evaluate candidates supplied by the
proposal policies described below.

This organization continues the inverse-problem logic of inverse CHY construction
\cite{Li:2026inverseCHY}.  In that setting factorization channels give exact integer
constraints and residual information is propagated through a subset lattice.  Here
local tensor columns and generalized-cut rows give a rational linear problem, and
left- and right-null data guide revisions of its finite realization.  In both cases the
physical target is fixed before the constraint problem is adapted.

\subsection{Language models, agents, skills, and harnesses}

A \emph{large language model} (LLM) maps a supplied context to a continuation, which
may contain prose, code, or a structured tool call.  Persistent calculation state
enters through an \emph{agent} and its control loop: it reads the current observation,
chooses an action, receives the result, and may act again.  In this paper the
observation is the exact record \(d_t\), and the action space consists of typed
revisions of \(\sigma_t\), such as activating a numerator sector, sampling a cut family,
or scheduling an independent test.

A \emph{skill} is the domain implementation of one such action.  It specifies its input
type, preconditions, transformation, output order, and local checks.  A \emph{harness}
is the execution layer around the skills and scientific programs.  It normalizes an
input, invokes the selected programs, passes their outputs to the next step, and records
the resulting dependency graph.  The \emph{compiler} turns a working specification
into \((A_t,b_t;\cS_t;\cT_t)\), while the \emph{evaluator} performs the exact algebra on
those objects.  Thus the proposal policy chooses a candidate operation, and the
remaining layers give that operation a definite mathematical meaning.

If \(\alpha_t\) names the selected operation and \(F_{\alpha_t}\) is its skill, one
iteration is
\begin{equation}
  \widehat\sigma_{t+1}=F_{\alpha_t}(\sigma_t,d_t),
  \qquad
  \sigma_{t+1}
  =\operatorname{Normalize}_{\Omega}(\widehat\sigma_{t+1}),
  \qquad
  d_{t+1}=\operatorname{Evaluate}
  \!\left(\operatorname{Compile}_{\Omega}(\sigma_{t+1})\right).
  \label{eq:adaptive-stage-transition}
\end{equation}
The normalized proposal has the same form whether it came from a person, an explicit
search rule, or an LLM-based agent.  The exact record \(d_{t+1}\) then determines
whether the requested improvement occurred.

The roles can be summarized in calculation language:
\begin{center}
\small
\begin{tabularx}{\textwidth}
  {@{}>{\raggedright\arraybackslash}p{.17\textwidth}
      >{\raggedright\arraybackslash}p{.34\textwidth}X@{}}
\toprule
component & input and operation & output used by the next component \\
\midrule
proposal policy & current specification, exact diagnostic, and allowed actions;
selects a candidate revision & operation name and typed parameters \\
skill & applies one domain transformation & normalized candidate basis, cut schedule,
or test schedule \\
harness & runs the specified programs and records their dependencies & ordered program
outputs and provenance \\
compiler & constructs the maps in \cref{eq:compiled-factorization} & explicit finite problem
\((A_t,b_t;\cS_t;\cT_t)\) \\
exact evaluator & performs elimination, null-space calculations, reconstruction, and
residual tests & diagnostic and validation results \(d_t\) \\
\bottomrule
\end{tabularx}
\end{center}

Model architecture and external orchestration address different scales of this
workflow.  PJ-RoPE, for example, organizes relative-position responses inside a model
through a learnable Fourier--jet--affine representation \cite{Zhang:2026pjrope}.  At the
workflow scale, the agent state, skills, harness, compiler, and evaluator organize a
sequence of model calls and exact scientific programs.

\begin{figure}[H]
\centering
\begin{tikzpicture}[x=1cm,y=1cm,scale=.96,transform shape,
  line cap=round,line join=round,
  every node/.append style={font=\footnotesize}]
  \node[evidence,text width=142mm,minimum height=10mm,inner xsep=1.5mm,
    font=\scriptsize] (omega) at (7.75,3.15)
    {fixed scientific specification \(\Omega\): physical target, representation class,
     conventions, exact field, equivalence, and independent tests};

  \node[pending,text width=21mm,inner xsep=1.5mm] (sigmat) at (1.35,1.40)
    {working stage\\\(\sigma_t\)};
  \node[flow,text width=22mm,inner xsep=1.5mm] (harness) at (4.30,1.40)
    {harness\\normalize\\run and record};
  \node[flow,text width=29mm,inner xsep=1.5mm] (compile) at (7.85,1.40)
    {compiler \(\operatorname{Compile}_{\Omega}\)\\
     \((A_t,b_t,\cS_t,\cT_t)\)};
  \node[evidence,text width=25mm,inner xsep=1.5mm] (evaluate) at (11.45,1.40)
    {exact evaluator\\row reduction\\and tests};
  \node[evidence,text width=20mm,inner xsep=1.5mm] (diag) at (14.35,1.40)
    {diagnostic \(d_t\)\\ranks and\\null spaces};

  \node[pending,text width=24mm,inner xsep=1.5mm] (propose) at (14.10,-.70)
    {proposal policy\\person, rule, or agent};
  \node[flow,text width=27mm,inner xsep=1.5mm] (skill) at (10.45,-.70)
    {selected skill\\typed\\transformation};
  \node[pending,text width=22mm,inner xsep=1.5mm] (sigmanext) at (6.75,-.70)
    {revised stage\\\(\sigma_{t+1}\)};

  \draw[arrow] (sigmat.east)--(harness.west);
  \draw[arrow] (harness.east)--(compile.west);
  \draw[arrow] (compile.east)--(evaluate.west);
  \draw[arrow] (evaluate.east)--(diag.west);
  \draw[arrow] (diag.south) to[out=-90,in=85] (propose.north);
  \draw[arrow] (propose.west)--(skill.east);
  \draw[arrow] (skill.west)--(sigmanext.east);
  \draw[-{Stealth[length=2.4mm]},thick,densely dashed,draw=bcjorange!80!black]
    (sigmanext.north west) to[out=145,in=-75]
    node[pos=.50,left=2pt,font=\scriptsize,align=center,fill=white,inner sep=1pt]
    {next recorded\\iteration} (harness.south);

  \draw[-{Stealth[length=2mm]},thin,densely dashed,draw=bcjcyan!75!black]
    ($(omega.south)+(-6.40,0)$) -- (sigmat.north);
  \draw[-{Stealth[length=2mm]},thin,densely dashed,draw=bcjcyan!75!black]
    ($(omega.south)+(0.10,0)$) -- (compile.north);
  \draw[-{Stealth[length=2mm]},thin,densely dashed,draw=bcjcyan!75!black]
    ($(omega.south)+(3.70,0)$) -- (evaluate.north);

  \node[evidence,text width=41mm,minimum height=14mm,inner xsep=1.5mm] (n1) at (2.75,-3.25)
    {basis transition\\square-free trial\\\(+300\) repeated contacts};
  \node[evidence,text width=41mm,minimum height=14mm,inner xsep=1.5mm] (n2) at (7.75,-3.25)
    {row transition\\\(\rank:916\longrightarrow920\)\\
     \(\nullity:211\longrightarrow207\)};
  \node[evidence,text width=41mm,minimum height=14mm,inner xsep=1.5mm] (n3) at (12.75,-3.25)
    {quotient transition\\\(\cS k_a=0\), \(a=1,\ldots,207\)\\
     \(646=350+296\)};
\end{tikzpicture}
\caption{A common interface for adaptive inverse-problem construction.  The fixed
scientific specification determines the meaning of the calculation.  A proposal
policy revises the working stage, and the harness, compiler, and exact evaluator return
the next diagnostic.  The lower strip lists the three recorded transitions in the
five-point case study.}
\label{fig:llm-agent-skill-harness}
\end{figure}

The public implementation provides this common interface.  Its five-point records
contain fixed rules for basis completion and cut-row selection, and the same interface
can receive proposals from a language-model agent.  The compiler, exact evaluator, and
held-out tests then assess the resulting stage.  The concrete schemas, skill contracts,
and record graph are given in
\cref{sec:agent-harness,app:pseudocode,app:evidence-ledger}.

\subsection{Five-point transitions and related work}

The five-point calculation contains three transitions, shown in the lower part of
\cref{fig:llm-agent-skill-harness}.

\paragraph{Completing the numerator basis.}
The first basis trial used the square-free products \(Y_iY_jN_{ij}\) with \(i<j\).
There are \(300\) distinct nonzero quotient directions involving \(Y_i^2\).  Extending
the label range to \(i\leq j\) adds the five repeated-contact tags times the \(60\)
associated tensor structures, hence \(300\) columns.  The construction and count are derived in
\cref{sec:ansatz-compiler}.

\paragraph{Completing the cut measurements.}
Maximal, box, and triple cuts gave rank \(916\) and a \(211\)-dimensional right kernel.
Candidate double-cut rows were evaluated in a fixed round-robin topology schedule.  The
first four independent rows raised the rank to \(920\), so the kernel dimension fell to
\(207\).  Once every physical topology had at least three successful trials, the next
\(64\) rows were dependent and consistent, so the schedule stopped.  The rank gain is the row test
\(\rank(R_{\rm double}K)=4\) in \cref{eq:adaptive-revision-tests}; its physical
visibility mechanism is analyzed in \cref{sec:q-visibility}.

\paragraph{Classifying the remaining freedom.}
Finally, the specified observable map is applied to the complete ordered kernel.  All
\(207\) vectors satisfy \(\cS k_a=0\).  The \(646\) source
conditions split into \(350\) exact coefficient identities and \(296\) fixed-route
scaleless sources, with the LSZ projection recorded separately.  This is a
classification of the final cut-solution fiber, described in
\cref{sec:stage-s,eq:stage-s-row-split,eq:stage-s-direction-theorem}.

Tool-using language-model systems provide general designs for proposal loops.  ReAct
organizes model calls as action--observation iterations, while Toolformer studies how
models select and populate external tool calls \cite{Yao:2023react,Schick:2023toolformer}.
SWE-agent and OpenHands place software tasks inside an agent--computer interface and an
executable environment, and the AI Scientist extends the loop across idea generation,
code execution, experiments, writing, and review
\cite{Yang:2024sweagent,Wang:2025openhands,Lu:2024aiscientist}.

High-energy-physics applications make the role of a scientific harness especially
clear.  Supervisor--coder workflows have been coupled to Snakemake for open-data
analysis; HEPTAPOD exposes schema- and run-card-controlled Monte Carlo operations;
ColliderAgent connects a hierarchical agent layer to an execution backend; and
HepScript supplies a restricted analysis language shared by users and agents
\cite{GendreauDistler:2025llmhep,HEPTAPOD:2025,Qiu:2026collideragent,Jiao:2026hepscript}.
Complete analysis chains have also been exercised on ALEPH, DELPHI, CMS, and BESIII
data, while Collider-Bench compares reproduced collider analyses through event yields
and histograms
\cite{Badea:2026agentic,Moreno:2026jfc,He:2026drsai,Faroughy:2026colliderbench}.

For exact theoretical calculations, executable evaluators can be built from algebraic
reductions and formal checks.  Generalized integration-by-parts equations made the
rapidity-divergent integrals in an \(N^3\)LO quark transverse-parton-distribution
calculation algorithmically reducible \cite{Luo:2019szz}.  FunSearch-based work uses
completed integration-by-parts reductions to evaluate proposed priority functions, and
HepLean provides a machine-checkable setting for formalized high-energy-physics
statements \cite{vonHippel:2025ibpml,Song:2025ibpoptimization,ToobySmith:2025heplean}.
Tree search with automated numerical feedback has also selected analytic methods for a
cosmic-string radiation integral \cite{Brenner:2026aiphysics}.  In the present problem,
the corresponding executable feedback is supplied by the BCJ compiler in
\cref{eq:compiled-factorization} and by the exact null-space diagnostic in
\cref{eq:complete-stage-diagnostic}.

The following sections construct the completed tensor basis, the generalized-cut rows,
the affine solution fiber, and its observable and validation maps.  The provider-neutral
implementation interface is given in \cref{sec:agent-harness}, and the reproducibility
record is described in \cref{sec:reproducibility,app:evidence-ledger}.

\section{Constructing the local numerator basis}
\label{sec:ansatz-compiler}

The columns of the reconstruction matrix define the coefficient directions available
to the compiled model.  Their construction is part of the scientific specification.
This section builds the final local tensor space.  The initial square-free two-contact
layer omits \(300\) repeated-contact directions; adding them gives the final enumerated
basis within the declared parity-even polynomial ring and power-counting bounds.
Graph symmetries and Jacobi maps then turn each retained coefficient into a coordinated
collection of cubic-graph numerators.

\subsection{Contact sectors and repeated-contact terms}

Let
\begin{equation}
  Y_i=\ell_i^2=(\ell+\Delta_i)^2,
  \qquad \Delta_i:=\ell_i-\ell,
  \qquad i=1,\ldots,5,
  \label{eq:contact-generators}
\end{equation}
be the five canonical inverse propagators associated with
\cref{eq:adjacent-routing}.  We write the representative pentagon numerator (the master
numerator) as a sum organized by the number of explicit inverse-propagator factors,
\begin{equation}
  N=N_0+\sum_{i=1}^{5}Y_iN_i
    +\sum_{1\leq i\leq j\leq5}Y_iY_jN_{ij}.
  \label{eq:contact-ansatz}
\end{equation}
This organization records how many propagators a numerator term can cancel.  For example,
inside a pentagon contribution,
\begin{equation}
  \frac{Y_iN_i}{Y_1Y_2Y_3Y_4Y_5}
  =\frac{N_i}{\prod_{j\ne i}Y_j},
  \label{eq:one-contact-cancels-propagator}
\end{equation}
so a term with one factor \(Y_i\) contributes to a box propagator topology.  A product
\(Y_iY_j\) with \(i\ne j\) cancels two distinct propagators and contributes to a
triangle topology.  When
\(i=j\), one copy cancels the propagator and the other remains as an inverse-propagator
factor in the lower-topology numerator.  We therefore include both products on two
different edges and repeated factors on one edge.

The residual mass dimensions and loop-rank bounds in these three layers are
\begin{equation}
  (N_0,N_1,N_2):\qquad (5,3,1).
  \label{eq:layer-power-counting}
\end{equation}
Each inverse propagator \(Y_i\) has mass dimension two.  Since the complete five-point
numerator has dimension five, the factors left in \(N_0\), \(N_i\), and \(N_{ij}\) must
have dimensions five, three, and one, respectively.  The same numbers are imposed as
upper bounds on their residual powers of the loop momentum.  Representative allowed
monomials are, schematically,
\begin{align}
  N_0:&\quad
  s_{12}(\varepsilon_1\mathbin{\cdot}\varepsilon_2)
  (\varepsilon_3\mathbin{\cdot}\varepsilon_4)
  (\varepsilon_5\mathbin{\cdot}\ell)(\ell\mathbin{\cdot}k_1),
  \notag\\
  N_i:&\quad
  (\varepsilon_1\mathbin{\cdot}\varepsilon_2)
  (\varepsilon_3\mathbin{\cdot}\varepsilon_4)
  (\varepsilon_5\mathbin{\cdot}\ell)(\ell\mathbin{\cdot}k_1),
  \notag\\
  N_{ij}:&\quad
  (\varepsilon_1\mathbin{\cdot}\varepsilon_2)
  (\varepsilon_3\mathbin{\cdot}\varepsilon_4)
  (\varepsilon_5\mathbin{\cdot}\ell),
  \label{eq:layer-monomial-examples}
\end{align}
with leg permutations and all other allowed contractions included in the enumeration.
The displayed examples illustrate how the residual dimension decreases from one layer
to the next.

Before algebraic duplicates are removed across the three layers, the enumeration is
\begin{center}
\small
\begin{tabular}{@{}lrrrr@{}}
\toprule
layer & residual dimension & structures per label & inverse-propagator labels & raw total\\
\midrule
\(N_0\) & 5 & 10,724 & 1 & 10,724\\
\(N_1\) & 3 & 1,240 & 5 & 6,200\\
\(N_2\) & 1 & 60 & 15 (\(i\le j\)) & 900\\
\midrule
total &&&& 17,824\\
\bottomrule
\end{tabular}
\end{center}
Algebraic identities can make two expressions in this list represent the same kinematic
tensor.  Removing these duplicates reduces the three layer totals to
\(5124,4700,900\).

Every external polarization appears linearly.  The list contains parity-even scalar
contractions and is constructed for generic \(D_s\).  Its reduction uses generic-
dimensional relations and excludes strictly four-dimensional Levi--Civita and Gram
relations.

\begin{remark}[Completing the repeated-contact basis]
\label{rem:repeated-contact-repair}
The initial two-contact basis used \(i<j\) and was therefore square-free.  In the
chosen polynomial ring and quotient, each direction \(Y_i^2N_{ii}\) represents a
nonzero quotient class distinct from the retained square-free directions.  There are
\(60\) allowed residual structures for each of the five repeated labels, giving
\begin{equation}
  5\times 60=300
  \label{eq:repeated-contact-extension-count}
\end{equation}
additional repeated-contact directions.  We therefore extend the label range from
\(i<j\) to \(i\leq j\), add these \(300\) columns, and rebuild the quotient, graph
descendants, routes, and cut rows.
The resulting two-contact layer contains \(15\times60=900\) structures.

This completes the enumerated two-contact sector within the chosen parity-even polynomial ring,
quotient, and power-counting bounds.  Cut-system consistency is tested after compilation
by evaluating \(\lambda^Tb\) for left-null vectors satisfying \(\lambda^TA=0\).  The use
of exact residual data to extend the candidate space parallels the discrete inverse CHY construction
\cite{Li:2026inverseCHY}.
\end{remark}

The preceding count treats every generated tensor expression as a separate item and
produces
\begin{equation}
  \dim V_{\rm raw}=17{,}824.
  \label{eq:raw-basis-dim}
\end{equation}
Some of these expressions are equal after momentum conservation, transversality, and
the other algebraic relations specified in \cref{sec:problem}.  Let \(\Phi\)
be the exact map that reduces an unreduced expression to a basis modulo those relations.
Its exact rank and nullity are
\begin{equation}
  \rank\Phi=10{,}724,
  \qquad \dim\ker\Phi=7100.
  \label{eq:phi-rank-nullity}
\end{equation}
The standard-basis layer counts are
\begin{equation}
  \dim B_0=5124,
  \qquad \dim B_1=4700,
  \qquad \dim B_2=900.
  \label{eq:standard-layer-counts}
\end{equation}
The quotient is described by two maps.  The projection
\(\pi:V_{\rm raw}\to V_{\rm ansatz}\) removes algebraic duplicates.  The map
\(s:V_{\rm ansatz}\to V_{\rm raw}\) chooses one explicit unreduced expression for each
basis vector.  They obey
\begin{equation}
  \pi\circ s=\id_{V_{\rm ansatz}},
  \label{eq:quotient-section}
\end{equation}
so every reduced vector can be expanded back into the specified monomials.  This chosen
representative makes the expansion unique for the calculation and fixes its algebraic
coordinates.  Physical equivalence is classified later by \(\cS\).

\begin{figure}[H]
\centering
\begin{tikzpicture}[x=1.15cm,y=1cm,line cap=round,line join=round]
  % Low-dimensional projection of the raw presentation space.
  \path[fill=bcjgray!7,draw=bcjgray!65,thick]
    (-.35,1.25)--(8.55,1.25)--(7.45,4.35)--(-1.45,4.35)--cycle;
  \node[anchor=north east,font=\small,bcjgray!90!black] at (8.35,4.18)
    {\(V_{\rm raw}\), \(\dim=17{,}824\)};

  % A transverse deterministic section and several presentation fibers.
  \draw[very thick,bcjcyan!80!black]
    (.15,1.72).. controls (2.35,2.03) and (5.45,2.70) .. (7.75,3.25);
  \node[font=\scriptsize,bcjcyan!55!black,rotate=13,fill=white,
    fill opacity=.90,text opacity=1,inner sep=1pt] at (6.55,3.27)
    {chosen representatives \(s(V_{\rm ansatz})\)};

  \coordinate (sone) at (1.55,1.98);
  \coordinate (stwo) at (4.05,2.43);
  \coordinate (sthree) at (6.50,2.95);
  \foreach \p in {sone,stwo,sthree}{
    \draw[bcjred!65,thick] ($(\p)+(-.66,1.45)$)--($(\p)+(.47,-1.03)$);
    \fill[bcjcyan!70!black] (\p) circle (2.2pt);
  }
  \node[align=left,font=\scriptsize,bcjred,anchor=west,fill=white,
    fill opacity=.90,text opacity=1,inner sep=1.5pt] (fibercallout) at (-1.02,3.75)
    {equivalent raw expressions\\\(s(q)+\ker\pi\)};
  \draw[-{Stealth[length=2.3mm]},bcjred,thick]
    (fibercallout.south east)--(1.02,3.10);

  % Quotient base.
  \fill[bcjblue!8] (4.05,-.15) ellipse[x radius=3.65,y radius=.72];
  \draw[bcjblue!80,thick] (4.05,-.15) ellipse[x radius=3.65,y radius=.72];
  \node[font=\small,bcjblue!75!black] at (4.05,-1.00)
    {\(V_{\rm ansatz}=V_{\rm raw}/\ker\pi\), \(\dim=10{,}724\)};

  \coordinate (qone) at (1.55,-.15);
  \coordinate (qtwo) at (4.05,-.15);
  \coordinate (qthree) at (6.50,-.15);
  \foreach \p in {qone,qtwo,qthree}{\fill[bcjblue] (\p) circle (2.2pt);}
  \draw[-{Stealth[length=2.3mm]},bcjgray!80,thick] (sone)--(qone);
  \draw[-{Stealth[length=2.3mm]},bcjgray!80,thick] (sthree)--(qthree);
  \draw[-{Stealth[length=2.5mm]},bcjblue,thick]
    (4.47,.02) to[bend left=15]
    node[right,font=\small,fill=white,inner sep=1pt] {\(s\)} (4.47,2.50);
  \draw[-{Stealth[length=2.5mm]},bcjred!85,thick]
    (3.63,2.35) to[bend left=15]
    node[left,font=\small,fill=white,inner sep=1pt] {\(\pi\)} (3.63,.02);
\end{tikzpicture}
\caption{Removing algebraically duplicate expressions.  Each red line represents raw
expressions that differ by an identity in \(\ker\pi\) and therefore reduce to the same
blue point.  The map \(s\) chooses one explicit raw expression for each reduced basis
vector, with \(\pi\circ s=\id\).  The drawing shows a low-dimensional slice.}
\label{fig:quotient-section-geometry}
\end{figure}

The declared ansatz is the quotient of the enumerated parity-even scalar contractions
with mass dimension five, at most two explicit inverse-propagator factors, the
loop-momentum bounds in \cref{eq:layer-power-counting}, and one power of every external
polarization.  Parity-odd structures lie outside this ansatz.  These conditions define
the finite tensor family used throughout the reconstruction.  The explicit
enumeration and quotient algorithms are given in \cref{app:pseudocode}, and the exact
dimension checks are indexed in \cref{app:evidence-ledger}.

\subsection{Symmetry reduction and graph descendants}

Rotations and reflections relate different labelings of the same pentagon ansatz.
After relabeling momenta and polarizations, the related expressions share one
coefficient; a reflection can also introduce a minus sign because it reverses cubic
vertex orientations.  An \emph{orbit} is the complete group of basis expressions tied
together in this way.  One coefficient represents each signed orbit, incorporating
all of its symmetry relations directly into the variables.

The pentagon master numerator is invariant under cyclic relabeling and changes sign
under the reflection implied by cubic-vertex antisymmetry.  Let
\(D_5=\langle C,R\mid C^5=R^2=1,RCR=C^{-1}\rangle\), and let \(\rho\) denote this signed
action on the standard tensor basis.  The sign assigned to each symmetry operation is a
sign rule, or in group-theory terms a one-dimensional character,
\begin{equation}
  \chi(C^k)=+1,\qquad \chi(RC^k)=-1,
  \label{eq:required-pentagon-character}
\end{equation}
because a rooted reflection reverses the five cubic vertices.  We identify all basis
elements related by this action and keep one coefficient for each signed orbit.  If an
expression is mapped to itself with a minus sign, its coefficient must be zero.  These
two steps reduce the number of unknown coefficients to
\begin{equation}
  \dim V_{\rm orbit}=1127.
  \label{eq:orbit-dimension}
\end{equation}
The symmetry analysis also finds \(53\) additional candidate orbits that map to their
own negatives and therefore have zero coefficient.  The dimension in
\cref{eq:orbit-dimension} also follows from the character projector
\begin{equation}
  P_\chi=\frac1{10}\sum_{g\in D_5}\chi(g)\rho(g).
  \label{eq:signed-character-projector}
\end{equation}
The signed traces are \(\operatorname{Tr}\rho(1)=10{,}724\),
\(\operatorname{Tr}\rho(C^k)=4\) for the four nontrivial rotations, and
\(\operatorname{Tr}\rho(RC^k)=-106\) for the five reflections.  Hence
\begin{equation}
  \dim\operatorname{im} P_\chi
  =\frac{10{,}724+4\cdot4-5(-106)}{10}
  =1127.
  \label{eq:signed-burnside-dimension}
\end{equation}
In this formula, the factor \(-1\) from \(\chi(RC^k)\) multiplies the trace
\(-106\) of the signed reflection action.  The \(53\) forced-zero coordinates come from
orbits that contain an operation mapping a basis element to its own negative.

After these symmetry identifications, the master ansatz is
\begin{equation}
  N_{1|\rho}(\ell;x)=\sum_{a=1}^{1127}x_a\,
  B_{a,1|\rho}(\ell),
  \qquad B_{a,1|\rho}\in V_{\rm ansatz}.
  \label{eq:orbit-master-ansatz}
\end{equation}
We fix an order for these \(1127\) coefficients.  The same order is used for the columns
of the cut equations and for the kernel vectors reported below.

The symmetry-reduced master ansatz determines the remaining graph numerators.  A
free-Lie bracket is a compact way to record the antisymmetric combination associated
with a cubic subgraph.  Starting from the pentagon master numerators, bracket trees
construct the numerator of every related cubic graph.  For example,
\begin{equation}
  N_{[1,2]|3|4|5}(\ell)
  =N_{1|2,3,4,5}(\ell)-N_{2|1,3,4,5}(\ell),
  \label{eq:simple-descendant}
\end{equation}
Further brackets are built recursively.  The second term begins with leg 2, whereas the
master convention roots the ordering at leg 1.  A cyclic or reflected relabeling returns
it to that convention, accompanied by the loop-momentum shift that restores the
canonical route.

Antisymmetry and Jacobi follow from
\begin{equation}
  [a,b]=-[b,a],
  \qquad [[a,b],c]+[[b,c],a]+[[c,a],b]=0.
  \label{eq:free-lie-relations}
\end{equation}
The word expansion is combined with the routing shift \(\tau_\Delta\).  Formally,
\begin{equation}
  J:\{N_{1|\rho}\}_{\rho\in S_4}
  \longrightarrow\{n_g\}_{g\in\Gamma^{(1)}_{5,3}},
  \qquad
  n_g=\sum_{\rho}J_{g\rho}[\tau_{\Delta_{g\rho}}]N_{1|\rho}.
  \label{eq:free-lie-compiler}
\end{equation}

\begin{proposition}[Jacobi identities for all constructed graph numerators]
If all graph numerators are produced by the free-Lie and routing map
\eqref{eq:free-lie-compiler}, every Jacobi triple in the graph list obeys
\(n_i+n_j+n_k=0\) identically in the ansatz coefficients and loop momentum.
\end{proposition}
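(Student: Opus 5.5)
The plan is to prove the statement as a consequence of linearity together with the free-Lie relations \eqref{eq:free-lie-relations}. The key point is that the map \eqref{eq:free-lie-compiler} factors through the free Lie algebra, so the Jacobi identities hold before any ansatz coefficient or loop momentum is specified. Concretely, I would encode each cubic one-loop graph in \(\Gamma^{(1)}_{5,3}\) by a cyclic word whose letters are nested Lie brackets of external labels, the tree subgraphs attached to the loop. The numerator is then obtained by two steps. First, expand every bracket \([a,b]\mapsto ab-ba\), which gives a signed sum of pentagon words \(1|\rho\) up to cyclic rotation and reflection. Second, apply the relabeling that roots the word at leg \(1\), together with the shift \(\tau_\Delta\) fixed by the routing table. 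Because both steps are linear in the master functions \(N_{1|\rho}\), and because \(N_{1|\rho}\) is itself linear in \(x\) by \eqref{eq:orbit-master-ansatz}, it is enough to show that, for every Jacobi triple, the formal signed combination of (word, shift) pairs cancels identically.

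I would classify Jacobi triples by the location of the edge being reconnected. \emph{Tree edges.} Here the triple differs by replacing a subtree \([[a,b],c]\) with \([[b,c],a]\) and \([[c,a],b]\) inside the same loop word, with the loop momentum unchanged. The Lie-polynomial expansions then sum to zero by the Jacobi identity in the free associative algebra. After expansion, each word in the sum appears with opposite signs and the same routing shift, so the corresponding \(N_{1|\rho}(\ell+\Delta)\) terms cancel exactly. \emph{Loop edges.} Here the triple consists of a graph with an external subtree \(T\) inserted between loop edges, the graph with \(T\) and its neighbour \(U\) swapped along the loop, and the graph in which \([T,U]\) is attached as a single subtree. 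Bracket expansion shows that the cyclic word with \([T,U]\) equals the word \(\cdots TU\cdots\) minus \(\cdots UT\cdots\), which is exactly the two-term antisymmetrization in \eqref{eq:simple-descendant}, extended recursively. Antisymmetry \(n_{\bar g}=-n_g\) follows from \([a,b]=-[b,a]\) together with the sign character \eqref{eq:required-pentagon-character}.

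The main obstacle is routing consistency. The same expanded word \(1|\rho\) must receive the same offset \(\Delta_{g\rho}\) whichever graph of the triple it comes from, and all three graphs must be expressed in a common loop momentum. Otherwise the terms differ by \(N(\ell+\Delta)-N(\ell+\Delta')\), which is not zero. My first step here would be to prove a lemma: for each graph, the routing table fixes the loop momentum on the edge immediately before the word's first letter. The offset needed to re-root at leg \(1\) is then the sum of external momenta passed over, and this depends only on the expanded word, not on the bracket structure that produced it. With the convention \eqref{eq:adjacent-routing}, this offset is determined by the position of leg \(1\) in the word and by which loop edge carries \(\ell\). I would take the Jacobi triple's loop momentum to be assigned on a loop edge untouched by the reconnection. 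The cancellation then occurs at the level of the functions \(\tau_\Delta N_{1|\rho}\), and the identity holds for every \(\ell\) and every \(x\). Finally, I would note that the quotient section \(s\) and the orbit map \(O\) are linear and act uniformly on all \(N_{1|\rho}\), so they preserve these cancellations.
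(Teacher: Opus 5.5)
Your proposal is correct and takes the paper's own approach. The signed word expansion of each Jacobi triple is a free-Lie identity, and the common canonical loop route gives every expanded word \(1|\rho\) the same shift \(\tau_{\Delta}\) in each graph of the triple, so the alternating sum cancels formally before any \(x\) or \(\ell\) is substituted; your split into internal tree edges (the Jacobi identity of \eqref{eq:free-lie-relations}) and loop-adjacent edges (the defining antisymmetrization of \eqref{eq:simple-descendant}) only spells out what the paper compresses into the phrase ``one local four-point subtree.''
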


\begin{proof}[Proof sketch]
A graph Jacobi triple differs only in the bracketing of one local four-point subtree.
Its signed word expansion is the free-Lie Jacobi relation.  The graph list assigns a
common canonical loop route, so matching words receive matching affine shifts.  The
alternating sum therefore vanishes before any kinematic value is substituted.
\end{proof}

Thus every coefficient vector in the ansatz satisfies the Jacobi identities before the
sampled kinematic rows are constructed.

The complete graph list contains \(297\) cubic graphs.  Their color factors span \(12\)
independent ring orderings, in which the external legs are arranged cyclically around
the loop.  The map from graph color factors to these ring orderings has rank \(12\), and
the subsequent map to the trace basis also has rank \(12\).  Graph antisymmetry, routed
Jacobi relations, cyclic and reflected relabelings, and crossings are preserved through
these changes of representation.

Schematically,
\begin{equation}
\begin{tikzpicture}[baseline=-2pt,node distance=8mm]
  \node[flow,minimum width=18mm] (x) {\(x\in\Q^{1127}\)};
  \node[flow,right=of x,minimum width=20mm] (m) {24 master\\numerators};
  \node[flow,right=of m,minimum width=20mm] (g) {297 graphs};
  \node[flow,right=of g,minimum width=20mm] (r) {12 color\\ring orders};
  \node[flow,right=of r,minimum width=20mm] (t) {trace basis};
  \draw[arrow] (x)-- node[above=1pt,font=\scriptsize,fill=white,inner sep=.7pt]
    {\(\mathsf{Build}\)} (m);
  \draw[arrow] (m)-- node[above=1pt,font=\scriptsize,fill=white,inner sep=.7pt]
    {\(\mathsf{Jac}\)} (g);
  \draw[arrow] (g)-- node[above=1pt,font=\scriptsize,fill=white,inner sep=.7pt]{color} (r);
  \draw[arrow] (r)-- node[above=1pt,font=\scriptsize,fill=white,inner sep=.7pt]{trace} (t);
\end{tikzpicture}
\label{eq:graph-color-chain}
\end{equation}
This chain separates the kinematic construction from the later choice of color basis.
Its explicit graph registry, routing rules, and exact checks are described in
\cref{app:pseudocode,app:evidence-ledger}.

The resulting \(1127\)-component vector now parametrizes every graph numerator.  The next
section evaluates those graphs on generalized cuts and turns each evaluation into a row
of the constraint matrix.

\section{Adaptive row design: cuts and topology-dependent visibility}
\label{sec:cuts-and-visibility}
\label{sec:cut-bootstrap}

\subsection{Physical cut targets and the measurement filtration}

Once the coefficient space has been fixed, each row specifies a measurement of that
space.  The row-design problem is to add physically defined measurements that act on
directions still invisible to the current system.  For each generalized cut, the target
is the physical-state sum
\begin{equation}
  b_C=\sum_{\{h_e\}}\prod_{v\in C}
  A_v^{(0)}(\{k,\varepsilon\}_v;\{h_e\}),
  \label{eq:tree-side-target}
\end{equation}
evaluated on exact on-shell kinematics with a \(D_s\)-dimensional physical-state
projector.  The tree amplitudes obey their Ward identities, so the target depends only
on physical internal states.\footnote{The calculation evaluates the same target by
Berends--Giele recursion and by an explicit color-ordered vertex expansion.  The two
constructions and their exact comparison are recorded in
\cref{app:pseudocode,app:evidence-ledger}.}

Putting a propagator on shell removes it as an off-shell connection and exposes tree
amplitudes on the two sides of the cut.  We use four groups of equations, denoted by
\(M,B,T,D\).  They put five, four, three, and two propagators on shell, respectively;
the letters stand for maximal, box, triple, and double cuts.  Maximal cuts constrain the
terms with no explicit inverse propagator.  Cuts with fewer on-shell propagators retain
more contact terms and can therefore constrain coefficient combinations that vanish on
maximal cuts.  The linear system is built by adding these four groups in that order.

The way cut depth exposes successive contact layers can be seen directly from one
canonical pentagon contribution.  Suppressing graph labels, write
\begin{equation}
  \I_{\rm pent}(\ell)
  =\frac{N_0+\sum_iY_iN_i+\sum_{i\le j}Y_iY_jN_{ij}}
  {Y_1Y_2Y_3Y_4Y_5}.
  \label{eq:pentagon-contact-integrand-example}
\end{equation}
On the fivefold maximal cut \(C_M\), multiply by all five propagators and impose
\(Y_1=\cdots=Y_5=0\).  Every explicit contact factor vanishes, leaving
\begin{equation}
  \Res_{C_M}\I_{\rm pent}=N_0(\ell^{(q)}),
  \qquad
  b_{C_M}^{(q)}=\sum_{\rm states}\prod_{v=1}^{5}A^{\rm tree}_{3,v}.
  \label{eq:explicit-maximal-cut-row}
\end{equation}
Here \(\Res_C\) means that the cut propagators have been multiplied out and their
on-shell conditions imposed, and \(\ell^{(q)}\) is one exact solution of those
conditions.  The full color-dressed equation sums all graphs compatible with the same
cut, including their signs and loop-momentum routes.

Now cut \(Y_1,\ldots,Y_4\) and keep \(Y_5\) uncut.  Direct cancellation in
\cref{eq:pentagon-contact-integrand-example} gives
\begin{equation}
  \Res_{C_B}\I_{\rm pent}
  =\left[\frac{N_0}{Y_5}+N_5+Y_5N_{55}\right]_{Y_1=\cdots=Y_4=0}.
  \label{eq:explicit-box-contact-row}
\end{equation}
The box equation therefore contains both the single contact on the uncut edge and the
repeated contact \(Y_5^2N_{55}\), while the maximal equation contains neither.
Expanding each displayed \(N\) in the orbit basis gives, for example,
\begin{equation}
  A_{(C_B,q),a}
  =\left[\frac{B_{0a}}{Y_5}+B_{5a}+Y_5B_{55,a}\right]_{\ell=\ell^{(q)}}
  \label{eq:explicit-box-matrix-entry}
\end{equation}
where \(B_{0a},B_{5a},B_{55,a}\) are the functions multiplying the unknown coefficient
\(x_a\) in the three corresponding numerator layers.  The expression becomes a known exact
matrix entry, and the product of trees on that box cut supplies the
corresponding \(b_{C_B}^{(q)}\).  This is the loop-level version of the equation-row
construction in \cref{eq:fourpoint-two-column-row-example}.

Rows are grouped by topology and introduced cumulatively:
\begin{equation}
  \cC_{\le M}\subset\cC_{\le B}\subset
  \cC_{\le T}\subset\cC_{\le D}.
  \label{eq:cut-filtration}
\end{equation}
Let \(A_{\le q}\) denote the matrix containing all equations through cut group \(q\).
For a numerator subspace \(W\subseteq V_{\rm orbit}\), define the cumulative visible
dimension
\begin{equation}
  \nu_W(q)=\rank(A_{\le q}|_W).
  \label{eq:visibility-barcode-general}
\end{equation}
The first cut group at which \(\nu_W\) grows is the first one that constrains a component
of \(W\).  Thus the question of which cuts constrain a numerator structure is answered
by an exact rank calculation.

For the full system, each added cut group reduces the subspace left unconstrained by the
previous equations, as shown in \cref{fig:cut-filtration-geometry}.

\begin{figure}[t]
\centering
\begin{tikzpicture}[
  x=1.08cm,y=.90cm,line cap=round,line join=round,
  every node/.style={font=\small}]
  % A schematic tube whose cross-sections are the nested right kernels.
  \path[fill=bcjgray!7]
    (0,2.35).. controls (.75,2.16) and (1.38,1.84) .. (2.15,1.67)
    -- (2.15,-1.67).. controls (1.38,-1.84) and (.75,-2.16) .. (0,-2.35)--cycle;
  \path[fill=bcjblue!9]
    (2.15,1.67).. controls (2.85,1.56) and (3.58,1.36) .. (4.30,1.25)
    -- (4.30,-1.25).. controls (3.58,-1.36) and (2.85,-1.56) .. (2.15,-1.67)--cycle;
  \path[fill=bcjcyan!11]
    (4.30,1.25).. controls (5.00,1.18) and (5.73,1.02) .. (6.45,.96)
    -- (6.45,-.96).. controls (5.73,-1.02) and (5.00,-1.18) .. (4.30,-1.25)--cycle;
  \path[fill=bcjgold!18]
    (6.45,.96).. controls (7.14,.95) and (7.86,.94) .. (8.60,.94)
    -- (8.60,-.94).. controls (7.86,-.94) and (7.14,-.95) .. (6.45,-.96)--cycle;

  \draw[bcjgray!55,thick]
    (0,2.35).. controls (.75,2.16) and (1.38,1.84) .. (2.15,1.67)
    .. controls (2.85,1.56) and (3.58,1.36) .. (4.30,1.25)
    .. controls (5.00,1.18) and (5.73,1.02) .. (6.45,.96)
    .. controls (7.14,.95) and (7.86,.94) .. (8.60,.94);
  \draw[bcjgray!55,thick]
    (0,-2.35).. controls (.75,-2.16) and (1.38,-1.84) .. (2.15,-1.67)
    .. controls (2.85,-1.56) and (3.58,-1.36) .. (4.30,-1.25)
    .. controls (5.00,-1.18) and (5.73,-1.02) .. (6.45,-.96)
    .. controls (7.14,-.95) and (7.86,-.94) .. (8.60,-.94);

  \fill[bcjgray!16] (0,0) ellipse[x radius=.23,y radius=2.35];
  \draw[bcjgray!75,thick] (0,0) ellipse[x radius=.23,y radius=2.35];
  \fill[bcjblue!27] (2.15,0) ellipse[x radius=.22,y radius=1.67];
  \draw[bcjblue!80,thick] (2.15,0) ellipse[x radius=.22,y radius=1.67];
  \fill[bcjcyan!31] (4.30,0) ellipse[x radius=.21,y radius=1.25];
  \draw[bcjcyan!70!black,thick] (4.30,0) ellipse[x radius=.21,y radius=1.25];
  \fill[bcjgold!42] (6.45,0) ellipse[x radius=.20,y radius=.96];
  \draw[bcjgold!55!black,thick] (6.45,0) ellipse[x radius=.20,y radius=.96];
  \fill[bcjred!28] (8.60,0) ellipse[x radius=.20,y radius=.94];
  \draw[bcjred!85,very thick] (8.60,0) ellipse[x radius=.20,y radius=.94];
  \draw[-{Stealth[length=2.8mm]},bcjgray!75,thick] (-.38,0)--(9.05,0);

  % Draw the magnifier guides before the rank arrows, so the guides recede behind
  % the data-bearing marks rather than cutting through them.
  \draw[bcjgray!22,densely dotted] (6.45,.70)--(9.47,1.58);
  \draw[bcjgray!22,densely dotted] (8.60,.70)--(9.47,.36);

  \draw[-{Stealth[length=2.2mm]},bcjblue,thick]
    (.42,2.64)--(1.78,2.05)
    node[midway,above=2pt,sloped,font=\scriptsize] {\(M:\ \Delta r=547\)};
  \draw[-{Stealth[length=2.2mm]},bcjcyan!70!black,thick]
    (2.48,1.92)--(4.00,1.48)
    node[midway,above=2pt,sloped,font=\scriptsize] {\(B:\ \Delta r=265\)};
  \draw[-{Stealth[length=2.2mm]},bcjgold!55!black,thick]
    (4.62,1.46)--(6.14,1.15)
    node[midway,above=2pt,sloped,font=\scriptsize] {\(T:\ \Delta r=104\)};
  \draw[-{Stealth[length=2.2mm]},bcjred,thick]
    (6.78,1.16)--(8.28,1.11)
    node[midway,above=4pt,sloped,font=\scriptsize] {\(D:\ \Delta r=4\)};

  \node[align=center,font=\scriptsize] at (0,-2.72)
    {\(V_{\rm orbit}\)\\\(n=1127\)};
  \node[align=center,font=\scriptsize,bcjblue!85!black] at (2.15,-1.98)
    {\(\mathcal N_M\)\\\(n=580\)};
  \node[align=center,font=\scriptsize,bcjcyan!55!black] at (4.30,-1.56)
    {\(\mathcal N_B\)\\\(n=315\)};
  \node[align=center,font=\scriptsize,bcjgold!42!black] at (6.45,-1.30)
    {\(\mathcal N_T\)\\\(n=211\)};
  \node[align=center,font=\scriptsize,bcjred!90!black] at (8.60,-1.28)
    {\(\mathcal N_D\)\\\(n=207\)};

  % Magnify the small but exact T-to-D contraction.
  \fill[white] (10.28,.98) circle (1.18);
  \draw[bcjgray!55,thick] (10.28,.98) circle (1.18);
  \fill[bcjgold!24] (10.28,.98) ellipse[x radius=.82,y radius=.55];
  \draw[bcjgold!55!black,thick] (10.28,.98) ellipse[x radius=.82,y radius=.55];
  \fill[bcjred!24] (10.28,.98) ellipse[x radius=.70,y radius=.45];
  \draw[bcjred!85,very thick] (10.28,.98) ellipse[x radius=.70,y radius=.45];
  \node[font=\scriptsize,align=center] at (10.28,2.42)
    {\(T\to D\) detail};
  \node[font=\scriptsize,align=center,bcjred!90!black] at (10.28,.98)
    {\(\Delta r=4\)};
  \node[font=\scriptsize,align=center,bcjred!90!black] at (10.28,-.42)
    {directions resolved};

  \node[font=\scriptsize,anchor=west] at (1.10,-3.38)
    {\(\mathcal N_D\subset\mathcal N_T\subset\mathcal N_B\subset\mathcal N_M
       \subset V_{\rm orbit},\qquad \mathcal N_q:=\ker A_{\le q}\)};
\end{tikzpicture}
\caption{How successive cut groups reduce the unconstrained coefficient space.  Adding
\(M/B/T/D\) equations raises the cumulative rank
\(0\to547\to812\to916\to920\) and contracts the nullity
\(1127\to580\to315\to211\to207\).  The magnified final slice emphasizes that the
double-cut group constrains four additional directions.  Cross-sectional areas indicate
subspace inclusion in a schematic two-dimensional projection.}
\label{fig:cut-filtration-geometry}
\end{figure}
The final group adds four pivot directions and reduces the nullity from \(211\) to
\(207\).

\subsection{Information gain on the current right kernel}

For a system \(Ax=b\) over \(\F_p\), the ranks of \(A\) and \([A|b]\) distinguish
inconsistency, residual freedom, and uniqueness.  The right null space records
coefficient combinations invisible to the accumulated cuts, while the left null space
records compatibility conditions among the cut targets:

\begin{center}
\begin{tabularx}{\textwidth}{@{}lX@{}}
\toprule
algebraic result & meaning \\ \midrule
\(\rank A<\rank[A|b]\) & a vector \(\lambda\) exists with
\(\lambda^TA=0\) and \(\lambda^Tb\ne0\); the target lies outside the column space of
the chosen ansatz; \\
\(\rank A=\rank[A|b]<n\) & the equations define an affine solution family whose
directions are the vectors in \(\ker A\); \\
\(\rank A=\rank[A|b]=n\) & the coefficient vector is uniquely determined. \\
\bottomrule
\end{tabularx}
\end{center}

These null spaces also identify which change can add genuinely new information.  If
\(\lambda^TA=0\) and \(\lambda^Tb\ne0\), a candidate basis element gives a new column
\(c\), and its overlap with the obstruction is
\begin{equation}
  s_{\rm col}(c)=\lambda^Tc.
  \label{eq:obstruction-column-score}
\end{equation}
A nonzero value changes the incompatible combination of equations.  Conversely, let
the columns of \(K\) form an ordered basis of \(\ker A\), and let \(R\) be a block of
candidate measurement rows.  The restriction \(RK\) records the response of every
candidate measurement on every currently blind coefficient direction.  Its exact
information gain is
\begin{equation}
  \Delta_{\rm info}(R\mid A)
  :=\rank(RK)
  =\dim\ker A-\dim(\ker A\cap\ker R).
  \label{eq:kernel-row-score}
\end{equation}
For a single row this rank is zero or one; for a cut family it counts the independent
blind directions resolved by the whole block.  Left-null information repairs the
column model, while right-kernel information designs the measurements.  Directions
remaining after the specified cut-topology coverage is complete are passed to the
observable quotient \(\cS\).  The analogous use of exact residual information in inverse
CHY constructions is discussed in \cite{Li:2026inverseCHY}.

For the present system, maximal, box, and triple cuts give a matrix \(A_{\le T}\) of
rank \(916\) with a \(211\)-dimensional kernel.  The candidate double-cut block obeys
\begin{equation}
  \Delta_{\rm info}(R_D\mid A_{\le T})
  =\rank(R_DK_{\le T})=4.
  \label{eq:double-cut-information-gain}
\end{equation}
After those four directions are resolved, the selected maximal, box, triple, and double
coverage has stable rank \(920\) and a \(207\)-dimensional right kernel.

Independent validation evaluates the resulting affine family on new cut equations.  If
\((A_h,b_h)\) denotes such an equation, the two conditions are
\begin{equation}
  A_h\bigl(x^{(0)}+D_sx^{(1)}\bigr)=b_h,
  \qquad A_hk_a=0\quad(a=1,\ldots,207).
  \label{eq:sealed-validation-equations}
\end{equation}
The first condition checks the particular solution and the second checks the complete
right kernel.  Equivalently, both residuals
\(A_h(x^{(0)}+D_sx^{(1)})-b_h\) and \(A_hk_a\) vanish on independently generated
equations.  The arithmetic and
row-building algorithm are described in
\cref{app:exact-arithmetic,app:pseudocode}; the validation-record index and its
v0.1/v0.2 distribution boundary are given in
\cref{app:evidence-ledger}.  The following worked
example isolates a small tensor sector in which the need for a new cut topology can be
derived directly.

\subsection{A three-polarization example of topology-dependent visibility}
\label{sec:q-visibility}

The cut system constructed above contains directions that different topologies measure
differently.  The complete solve in \cref{sec:reconstruction} leaves a
\(207\)-dimensional kernel, and \cref{sec:stage-s} classifies its observable content.
Before giving that full solution, we isolate a smaller example in which the origin of a
cut kernel can be seen directly.  We restrict to tensors containing one loop-momentum
contraction with each of three external polarizations and call this the
three-polarization \(Q\)-sector.  A route reflection splits the sector into even and odd
parts, and the corner structure of a cut determines which part it measures.

For the five loop-momentum routes, define
\begin{equation}
  A_i=\ell_i\cdot\varepsilon_1,
  \qquad B_i=\ell_i\cdot\varepsilon_2,
  \qquad C_i=\ell_i\cdot\varepsilon_3,
  \qquad i=1,\ldots,5.
  \label{eq:q-routed-variables}
\end{equation}
The \(5^3=125\) products \(A_aB_bC_c\) obey \(61\) linear relations from the on-shell,
momentum-conservation, and transversality conditions.  Hence
\begin{equation}
  125_{\rm routed}=61_{\rm relations}+64_{\rm independent}.
  \label{eq:q-125-balance}
\end{equation}

An adjacent-route reflection \(\sigma\) acts on each local four-dimensional route space
with signature \(3_+\oplus1_-\).  More generally, for an \(m\)-fold tensor product of a
reflection with signature \((r-1)_+\oplus1_-\), the two parity dimensions are
\begin{equation}
  \dim V_\pm=\frac{r^m\pm(r-2)^m}{2}.
  \label{eq:reflection-parity-count}
\end{equation}
Setting \(r=4\) and \(m=3\) gives
\begin{equation}
  64=36_{\sigma+}\oplus28_{\sigma-}.
  \label{eq:q-even-odd-split}
\end{equation}
This decomposition is fixed by reflection symmetry before any cut equation is
evaluated.  A useful set of reflection-odd covariants begins with
\begin{align}
  \delta_A&=2A_5-A_1-A_4=(k_4-k_5)\cdot\varepsilon_1,\nonumber\\
  \delta_B&=2B_5-B_1-B_4,\qquad
  \delta_C=2C_5-C_1-C_3,
  \label{eq:q-deltas}
\end{align}
where the external forms of \(\delta_B\) and \(\delta_C\) follow by relabeling and
transversality.  Each \(\delta_X\) changes sign under \(\sigma\) and is independent of
the loop momentum.

On an adjacent box cut, descendant graphs occur in reflection pairs with equal cut
weight.  The cut map therefore factors through the even projector:
\begin{equation}
  R_{\Box}=M_{\Box}(1+\sigma)
  =2M_{\Box}P_+,
  \qquad P_+=\frac{1+\sigma}{2}.
  \label{eq:box-even-factorization}
\end{equation}
It follows that
\begin{equation}
  R_{\Box}|_{V_{\sigma-}}=0,
  \qquad \rank R_{\Box}\le36.
  \label{eq:box-odd-blind}
\end{equation}
The exact rank is \(36\): the box cuts measure the entire even subspace and none of the
\(28\) odd directions.

The balanced triple cuts have corner-size patterns
\begin{equation}
  (1,2,2),\qquad(2,1,2),\qquad(2,2,1).
  \label{eq:balanced-triples}
\end{equation}
At a corner containing two external legs, the legs either lie along a cubic path or
first join in a bracket.  For a reflection-odd tensor \(Q^-\), these two local forms have
equal spectator and state-sum factors and opposite signs:
\begin{equation}
  R_{\rm path}^{de}[Q^-]
  =+\eta_\alpha(\varepsilon_d\cdot\varepsilon_e)Q^-_{de},
  \qquad
  R_{\rm bracket}^{de}[Q^-]
  =-\eta_\alpha(\varepsilon_d\cdot\varepsilon_e)Q^-_{de}.
  \label{eq:spectator-local-pair}
\end{equation}
The pair cancels locally, before the other corners are summed.  Every nonzero odd
placement in a balanced triple contains such a two-leg corner, and therefore
\begin{equation}
  R_{\rm balanced}|_{V_{\sigma-}}=0,
  \qquad R_{\rm balanced}=R_{\rm balanced}P_+.
  \label{eq:balanced-odd-blind}
\end{equation}
Every box and balanced-triple row factors through the same even projector.  Additional
rows from these families therefore leave the odd kernel unchanged.

The odd directions become visible when the cut contains a three-leg corner, where the
local path--bracket pairing above is absent.  The
anchored filtration contains the following \(13\) topologies:
\begin{center}
\begin{tabular}{@{}cll@{}}
\toprule
indices & family & corner sizes\\ \midrule
0 & maximal & \((1,1,1,1,1)\)\\
1--4 & adjacent boxes & cyclic anchored placements of \((2,1,1,1)\)\\
5--7 & balanced triples & \((1,2,2),(2,1,2),(2,2,1)\)\\
8--10 & three-leg-corner triples & \((1,1,3),(1,3,1),(3,1,1)\)\\
11--12 & doubles & \((2,3),(3,2)\)\\
\bottomrule
\end{tabular}
\end{center}
In this order, the cumulative rank is
\begin{equation}
  (0,36,36,36,36,36,36,36,64,64,64,64,64),
  \label{eq:q-visibility-barcode}
\end{equation}
with the refined parity ranks
\begin{equation}
  \nu_{\sigma+}=(0,36,\ldots,36),
  \qquad
  \nu_{\sigma-}=(0,\ldots,0,28,\ldots,28).
  \label{eq:q-refined-barcodes}
\end{equation}
The first three-leg-corner triple has rank \(28\) on the odd subspace and therefore
measures every direction that the preceding cut families miss.

\begin{figure}[H]
\centering
\begin{tikzpicture}[x=.75cm,y=.065cm]
  \draw[->,thick] (0,0)--(13.8,0) node[right,font=\small]{cut index};
  \draw[->,thick] (0,0)--(0,72) node[above,font=\small]{rank};
  \draw[very thick,bcjblue]
    (0,0)--(1,36)--(7,36)--(8,64)--(13,64);
  \draw[very thick,bcjcyan,dashed]
    (0,0)--(1,36)--(13,36);
  \draw[very thick,bcjred,dash dot]
    (0,0)--(7.9,0)--(8,28)--(13,28);
  \foreach \x in {0,...,12}{\draw (\x,0)--++(0,-1.4);}
  \draw[bcjgray,dotted] (0,36)--(13,36);
  \draw[bcjgray,dotted] (0,64)--(13,64);
  \node[left,font=\scriptsize] at (0,36) {36};
  \node[left,font=\scriptsize] at (0,64) {64};
  \node[font=\scriptsize,bcjblue] at (11,68) {total};
  \node[font=\scriptsize,bcjcyan!70!black] at (11,41) {even};
  \node[font=\scriptsize,bcjred] at (11,24) {odd};
  \draw[decorate,decoration={brace,amplitude=4pt},bcjgray]
    (1,72)--(7,72) node[midway,above=3pt,font=\scriptsize]{boxes + balanced triples};
  \node[font=\scriptsize,align=center,anchor=south] (triplelabel) at (9.15,76)
    {first three-leg-corner\\triple};
  \draw[-{Stealth[length=1.6mm]},thin,bcjgray]
    (triplelabel.south west) -- (8.05,65.5);
\end{tikzpicture}
\caption{Cumulative rank in the three-polarization \(Q\)-sector.  The flat segment
reflects the fact that box and balanced-triple rows act only on the reflection-even
subspace.}
\label{fig:q-visibility}
\end{figure}

The fixed-routing representative gives an explicit reflection-odd tensor on a
\(22\)-monomial support, denoted by \(Q_{\mathrm{FR}}\).  On this support,
\begin{equation}
  \rank(1+\sigma)=12,
  \qquad \dim\ker(1+\sigma)=10.
  \label{eq:q-support-rank}
\end{equation}
A convenient odd basis is
\begin{align}
 &\delta_A B_1C_3,\ \delta_A B_2C_1,\ \delta_A B_4C_3,\
   \delta_A B_2C_3,\nonumber\\
 &\delta_B A_1C_3,\ \delta_B A_4C_1,\nonumber\\
 &\delta_C A_1B_1,\ \delta_C A_1B_4,\ \delta_C A_1B_2,\
   \delta_C A_4B_2.
  \label{eq:q-ten-odd-basis}
\end{align}
The first odd-visible triple fixes the coefficients
\begin{equation}
  (1,1,-1,1,-1,-1,-1,1,1,1),
  \label{eq:q-ten-coefficients}
\end{equation}
and hence
\begin{align}
  Q_{\mathrm{FR}}={}&
  \delta_A(B_1C_3+B_2C_1-B_4C_3+B_2C_3)
  -\delta_B(A_1C_3+A_4C_1)\nonumber\\
  &+\delta_C(-A_1B_1+A_1B_4+A_1B_2+A_4B_2),
  \label{eq:q-strict-bracket}
\end{align}
with \(\sigma Q_{\mathrm{FR}}=-Q_{\mathrm{FR}}\).  Its leading loop-quadratic term is
\begin{equation}
  Q_{\mathrm{FR}}\big|_{\ell^2}
  =2\left[
  \delta_A(\ell\cdot\varepsilon_2)(\ell\cdot\varepsilon_3)
  -\delta_B(\ell\cdot\varepsilon_1)(\ell\cdot\varepsilon_3)
  +\delta_C(\ell\cdot\varepsilon_1)(\ell\cdot\varepsilon_2)
  \right].
  \label{eq:q-leading-covariant}
\end{equation}

This sector gives a complete local model of topology-dependent visibility.  Symmetry
identifies a \(28\)-dimensional space annihilated by the box and balanced-triple rows,
and the three-leg-corner topology supplies a rank-\(28\) measurement of that space.  In
the full reconstruction, the same test restricts a candidate cut block \(R\) to the
current kernel \(K\).  A topology has positive information gain precisely when
\(\rank(RK)>0\).  A family that vanishes on the symmetry-protected kernel contributes
zero information gain at every additional sample, so the next informative measurement
uses a different topology.  Applying this test to the full
\(1127\)-coordinate system produces the cut matrix whose affine solution is derived in
the next section.

\section{The exact solution fiber and observable quotient}
\label{sec:solution-equivalence}
\label{sec:reconstruction}

\subsection{Obstructions, deformations, and the exact solution fiber}

Set \(\F=\Q(D_s)\), and extend the symmetry-reduced coefficient space to
\begin{equation}
  V=V_{\rm orbit}\otimes_{\Q}\F\cong\F^{1127}.
\end{equation}
Let \(W\) be the \(\F\)-vector space containing the collected maximal, box, triple,
and double-cut data.  The cut construction is a linear map
\begin{equation}
  A:V\longrightarrow W,
  \qquad x\longmapsto Ax,
  \label{eq:cut-linear-map}
\end{equation}
and the inverse problem is to find a preimage of the specified data vector \(b\in W\).

\begin{definition}[Obstruction and deformation spaces]
\label{def:inverse-obstruction-deformation}
For a linear inverse problem \(Ax=b\), define
\begin{equation}
  \mathrm{Ob}(A):=\operatorname{coker}A
     =W/\operatorname{im}A,
  \qquad
  \mathrm{Def}(A):=\ker A.
  \label{eq:obstruction-deformation-spaces}
\end{equation}
The obstruction carried by the data is the class
\(\operatorname{ob}_A(b):=[b]\in\mathrm{Ob}(A)\).  It vanishes exactly when the
inverse problem has a solution.  Once a solution exists, \(\mathrm{Def}(A)\) is the
vector space of changes that preserve all cut data.
\end{definition}

The same definitions can be organized as the cohomology of a two-term complex.  In
that language, adding ansatz columns and adding measurement rows produce two exact
sequences whose connecting maps are precisely the obstruction-repair and
kernel-reduction operators used in the construction.  A self-contained account is
given in \cref{app:homological-view}.

\begin{theorem}[Exact cut solution]
For the coefficient space obtained from the ansatz of \cref{sec:ansatz-compiler} after
cyclic and reflection symmetries are imposed, the exact cut system obeys
\begin{equation}
  \rank A=920,
  \qquad \dim\ker A=207,
  \qquad \rank A=\rank[A|b].
  \label{eq:rank-nullity-theorem}
\end{equation}
Consequently,
\begin{equation}
  \operatorname{ob}_A(b)=0,
  \qquad
  \dim_{\F}\mathrm{Def}(A)=207.
  \label{eq:exact-obstruction-deformation-result}
\end{equation}
\end{theorem}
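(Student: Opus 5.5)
The theorem is a finite exact-rank statement, so the plan is a certified computation over \(\F=\Q(D_s)\) organized so that every step reduces to elimination over prime fields plus a lifting argument. By \cref{eq:full-inverse-problem}, \(A\) has entries in \(\Q\) and does not depend on \(D_s\). Hence \(\rank_\F A=\rank_\Q A\), and \(\ker_\F A=\ker_\Q A\otimes\F\). The rank can be computed once over \(\Q\). The \(D_s\)-dependence enters only through \(b(D_s)\), and since \(b\) is affine in \(D_s\) we write \(b=b^{(0)}+D_sb^{(1)}\). I would therefore split the claim into two parts. The first is \(\rank A=920\), which by rank–nullity on the \(1127\) columns fixed by \cref{eq:orbit-dimension} gives \(\dim\ker A=207\). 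The second is consistency, \(\rank A=\rank[A\mid b]\).

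For the rank, I would bound it from both sides. For the \emph{lower bound}, I exhibit a \(920\times920\) minor, namely the pivot rows and pivot columns recorded by the elimination along the filtration \cref{eq:cut-filtration} with increments \(547,265,104,4\). I then show this minor is nonzero modulo some prime \(p\). A nonzero minor mod \(p\) implies a nonzero rational minor, so \(\rank_\Q A\ge920\) follows unconditionally. For the \emph{upper bound}, I produce \(207\) explicit rational vectors \(k_a\) and verify \(Ak_a=0\) exactly over \(\Q\). Their rational reconstruction is certified by checking the identity directly in \(\Q\), not merely modulo primes. They are linearly independent because the RREF normalization places an identity block on the free columns. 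Together these give \(\dim\ker A\ge207\), so \(\rank A\le920\).

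For consistency, I would exhibit the particular solution \(x^{(0)}+D_sx^{(1)}\) and check the two rational identities \(Ax^{(0)}=b^{(0)}\) and \(Ax^{(1)}=b^{(1)}\). This requires that \(b\) really is affine in \(D_s\). I would justify that from the state-sum structure of \cref{eq:tree-side-target}: at one loop in pure Yang–Mills the internal projector trace contributes at most one power of \(D_s\). I would also confirm it by sampling three or more values of \(D_s\) and checking that the second finite differences vanish. A solution in \(\F\) exists, so \(b\in\im A\). Hence \(\operatorname{ob}_A(b)=[b]=0\) in \(\mathrm{Ob}(A)\) by \cref{def:inverse-obstruction-deformation}, and \(\dim_\F\mathrm{Def}(A)=207\) follows from the rank result.

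The main obstacle is that the rows are defined by evaluating at finitely many exact on-shell points \(\ell^{(q)}\). So "the" matrix \(A\) is a specific sampled matrix, and the statement must be about that matrix. The upper bound \(207\) on the kernel could fail to be generic: an accidental kinematic specialization could lower the rank. I would handle this by fixing the theorem's \(A\) to be the recorded sampled matrix, which makes the certificates above rigorous. Genericity would be supported separately by the stabilization evidence: the \(64\) further double-cut rows are dependent and consistent, and fresh samples and primes satisfy \cref{eq:sealed-validation-equations}. The remaining practical difficulty is verifying \(Ak_a=0\) over \(\Q\) for vectors reconstructed from several primes. I would first try certifying it via Chinese remaindering with a Hadamard-type bound on the entries, and fall back to direct sparse rational multiplication if that bound is too loose.
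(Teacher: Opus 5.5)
Your proposal is correct and follows essentially the paper's own route. The paper also uses exact modular elimination along the $M/B/T/D$ filtration to get rank $920$, lifts the RREF particular solution and identity-normalized kernel to $\Q$ with $D_s$-affinity (checked there by vanishing second differences), and then reads off $\operatorname{ob}_A(b)=0$ and $\mathrm{Def}(A)=\ker A$ from \cref{def:inverse-obstruction-deformation}. Your rank sandwich has three parts: a nonzero $920\times920$ minor mod $p$ gives $\rank A\ge920$; exact verification over $\Q$ of $Ak_a=0$ with $K_{I_{\rm free}}=I_{207}$ gives $\rank A\le920$; and $Ax^{(0)}=b^{(0)}$, $Ax^{(1)}=b^{(1)}$ give consistency. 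This is exactly the content of the recorded assertions in \cref{eq:minimal-reader-checks}, stated slightly more rigorously than the paper, whose bound \cref{eq:ratrec-condition-appendix} by itself certifies only uniqueness of the lift under the assumed size bounds.
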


The vanishing obstruction establishes existence.  Choosing one particular solution
\(x_{\rm p}(D_s)\), the deformation space gives the complete solution fiber,
\begin{equation}
\begin{aligned}
  \mathcal F_b(D_s)
  &:=A^{-1}(b)
    =\{x\in\Q(D_s)^{1127}:Ax=b\},\\
  &=x_{\rm p}(D_s)+\mathrm{Def}(A)\\
  &=x_{\rm p}(D_s)+\Span_{\F}\{k_1,\ldots,k_{207}\},
  \qquad \dim_{\F}\mathcal F_b=207.
\end{aligned}
  \label{eq:solution-affine-space}
\end{equation}
Non-uniqueness is a structural feature of inverse representation problems.  The inverse
CHY construction can admit several integral edge assignments realizing the same channel
data \cite{Li:2026inverseCHY}; here the corresponding freedom is linear and is resolved
into one particular solution plus the \(207\)-dimensional right kernel.

To give reproducible coordinates to this affine family, we use the fixed coefficient
ordering of \cref{eq:orbit-master-ansatz} and put the matrix in reduced row-echelon form
(RREF).  Let \(I_{\rm piv}\) and \(I_{\rm free}\) be the ordered index sets of pivot and
free coordinates, with \(|I_{\rm piv}|=920\) and \(|I_{\rm free}|=207\).  After ordering
the columns by \((I_{\rm piv},I_{\rm free})\), an invertible row-operation matrix \(U\)
gives
\begin{equation}
  UA_{(I_{\rm piv},I_{\rm free})}
  =\begin{pmatrix}I_{920}&R\\0&0\end{pmatrix},
  \qquad
  Ub(D_s)=\begin{pmatrix}c(D_s)\\0\end{pmatrix}.
  \label{eq:canonical-rref-blocks}
\end{equation}
The chosen RREF particular solution is defined by
\begin{equation}
  (x_{\rm p})_{I_{\rm free}}=0,
  \qquad (x_{\rm p})_{I_{\rm piv}}=c(D_s).
  \label{eq:canonical-particular}
\end{equation}
For each free coordinate \(f_a\in I_{\rm free}\), the corresponding normalized kernel
vector obeys, with \(e_a\) the \(a\)th standard basis vector of \(\F^{207}\),
\begin{equation}
  (k_a)_{f_b}=\delta_{ab},
  \qquad (k_a)_{I_{\rm piv}}=-R e_a,
  \label{eq:canonical-kernel-vector}
\end{equation}
Thus \(K_{I_{\rm free}}=I_{207}\): kernel vector \(k_a\) has value one in its own free
coordinate and zero in the other free coordinates.  This convention fixes the order and
normalization of all \(207\) kernel vectors.

The particular solution is affine in the internal spin dimension and the kernel is
independent of it:
\begin{equation}
  x_{\rm p}(D_s)=x^{(0)}+D_sx^{(1)},
  \qquad K(D_s)=K.
  \label{eq:particular-kernel-ds}
\end{equation}
The coefficient of each kernel vector can be any rational function of \(D_s\), so the
general solution is
\begin{equation}
  x(D_s,\boldsymbol\beta)
  =x^{(0)}+D_sx^{(1)}+\sum_{a=1}^{207}\beta_a(D_s)k_a.
  \label{eq:general-ds-affine-family}
\end{equation}
All vectors \(x^{(0)},x^{(1)}\), and \(k_a\) have rational entries.  Dependence on
\(D_s\) enters through the displayed particular solution and through the freely chosen
functions \(\beta_a(D_s)\).

\begin{figure}[t]
\centering
\begin{tikzpicture}[scale=1.18,line cap=round,line join=round,
  every node/.style={font=\small}]
  % Ambient coefficient space, drawn as a perspective wire-frame volume.
  \coordinate (A) at (-.15,-.35); \coordinate (B) at (8.65,-.35);
  \coordinate (C) at (9.95,.95);  \coordinate (D) at (1.15,.95);
  \coordinate (At) at (-.15,3.55); \coordinate (Bt) at (8.65,3.55);
  \coordinate (Ct) at (9.95,4.85); \coordinate (Dt) at (1.15,4.85);
  \path[fill=bcjlight!75] (A)--(B)--(C)--(Ct)--(Dt)--(At)--cycle;
  \draw[bcjgray!55,thick]
    (A)--(B)--(C)--(D)--cycle (At)--(Bt)--(Ct)--(Dt)--cycle
    (A)--(At) (B)--(Bt) (C)--(Ct) (D)--(Dt);
  \draw[bcjgray!25,dashed] (D)--(Dt);
  \node[anchor=north east,bcjgray!90!black] at (9.65,4.65)
    {\(V_{\rm orbit}\cong\Q^{1127}\)};

  % Representative cut-row hyperplanes.
  \path[fill=bcjblue!24,fill opacity=.42,draw=bcjblue!75,thick]
    (.30,.30)--(8.95,1.55)--(9.10,3.68)--(.55,2.42)--cycle;
  \path[fill=bcjgold!35,fill opacity=.38,draw=bcjgold!60!black,thick]
    (.55,3.40)--(8.92,.18)--(9.55,1.28)--(1.16,4.42)--cycle;
  \path[fill=bcjred!20,fill opacity=.34,draw=bcjred!75,thick]
    (1.08,.18)--(7.42,4.43)--(8.82,4.03)--(2.33,-.20)--cycle;

  % Their common affine intersection.
  \path[fill=bcjcyan!42,fill opacity=.88,draw=bcjcyan!65!black,very thick]
    (2.15,1.84).. controls (3.05,1.22) and (5.92,1.30) .. (7.45,2.10)
    .. controls (6.72,3.05) and (3.47,3.28) .. (2.15,1.84)--cycle;
  \node[font=\small,bcjcyan!35!black,align=center,fill=white,fill opacity=.94,
    text opacity=1,rounded corners=1.5pt,inner sep=2pt] (flabel) at (3.35,5.16)
    {\(\mathcal F_b=x_{\rm p}+\ker A\)\\[-1pt]
     \(\dim\mathcal F_b=207\)};
  \draw[-{Stealth[length=2mm]},bcjcyan!60!black,thick]
    (flabel.south)--(4.42,2.55);

  \coordinate (xp) at (3.05,1.87);
  \node[circle,draw=bcjgray!90!black,fill=bcjblue,minimum size=5.6pt,
    inner sep=0pt] at (xp) {};

  \coordinate (xarb) at (4.72,2.12);
  \draw[-{Stealth[length=2.7mm]},very thick,bcjred] (xp)--(xarb)
    node[midway,below=4pt,font=\scriptsize,fill=white,inner sep=1pt]
    {\(K\boldsymbol\beta\)};
  \node[rectangle,draw=bcjgray!90!black,fill=bcjred,minimum size=5.8pt,
    inner sep=0pt] at (xarb) {};

  \coordinate (fl) at (6.58,2.34);
  \draw[dashed,thick,bcjgray] (xp)--(fl);
  \node[regular polygon,regular polygon sides=3,draw=bcjgray!90!black,
    fill=bcjgold!75!black,minimum size=7.2pt,inner sep=0pt] at (fl) {};

  % External key: long semantic labels stay outside the geometric intersection.
  \node[circle,draw=bcjgray!90!black,fill=bcjblue,minimum size=5.4pt,
    inner sep=0pt] at (.65,-.82) {};
  \node[anchor=west,font=\scriptsize] at (.82,-.82)
    {RREF representative \(x_{\rm p}\)};
  \node[rectangle,draw=bcjgray!90!black,fill=bcjred,minimum size=5.6pt,
    inner sep=0pt] at (4.05,-.82) {};
  \node[anchor=west,font=\scriptsize] at (4.22,-.82)
    {\(x_{\boldsymbol\beta}=x_{\rm p}+K\boldsymbol\beta\)};
  \node[regular polygon,regular polygon sides=3,draw=bcjgray!90!black,
    fill=bcjgold!75!black,minimum size=7pt,inner sep=0pt] at (7.18,-.82) {};
  \node[anchor=west,font=\scriptsize] at (7.36,-.82)
    {published \(x_{\rm FL}\) (15/207 active)};
\end{tikzpicture}
\caption{A projection of the coefficient-space geometry.  Each cut equation defines a
hyperplane \(H_r\), and their common intersection is the \(207\)-dimensional affine
solution family.  RREF selects the point \(x_{\rm p}\).  The previously published
forward-limit numerator is another point in the same family, reached by adding kernel
vectors.  The drawing is schematic.}
\label{fig:affine-geometry}
\end{figure}

\subsection{Exact rational realization and the published representative}
\label{sec:known-representative}

The coefficient vectors in \cref{eq:particular-kernel-ds} are rational and have a common
ordering at every value of \(D_s\).  Their exact status is summarized by the following
statement.

\begin{proposition}[Exact rational spin-dimension lift]
The two coefficient vectors \(x^{(0)},x^{(1)}\) defining the particular solution and the ordered kernel
\(k_1,\ldots,k_{207}\) are uniquely reconstructed within the specified rational bounds.
They solve all reconstruction equations exactly, and the same kernel vectors apply at
every value of \(D_s\).
\end{proposition}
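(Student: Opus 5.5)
The argument has three parts: the structure of the system, uniqueness of the lifted vectors, and an exact check over \(\Q(D_s)\).

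\emph{Structure.} By \cref{eq:full-inverse-problem}, the graph-side matrix \(A\) has rational entries independent of \(D_s\). The target \(b(D_s)\) is affine in \(D_s\), because the internal-state sum contributes at most one trace of the \(D_s\)-dimensional projector per loop; write \(b=b^{(0)}+D_sb^{(1)}\). Apply the RREF normalization of \cref{eq:canonical-rref-blocks,eq:canonical-particular,eq:canonical-kernel-vector}.
\begin{itemize}
  \item The pivot set \(I_{\rm piv}\), the transform \(U\), and the block \(R\) are computed from \(A\) alone, so they are rational and independent of \(D_s\).
  \item The kernel vectors \(k_a\) are read off as \((e_a,-Re_a)\). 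They are therefore \(D_s\)-independent, and the same kernel applies at every value of \(D_s\).
  \item The particular solution has free part \(0\) and pivot part \(c(D_s)\), the first \(920\) entries of \(U(b^{(0)}+D_sb^{(1)})\). Hence \(x_{\rm p}=x^{(0)}+D_sx^{(1)}\) with \(x^{(i)}\) rational.
\end{itemize}
This gives the structural form in \cref{eq:particular-kernel-ds}. Uniqueness within the RREF convention is immediate: the pivot and free index sets are determined by \(A\) and the fixed column order, and the normalization \(K_{I_{\rm free}}=I_{207}\), \((x_{\rm p})_{I_{\rm free}}=0\) leaves no freedom.

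\emph{Computation.} In practice the vectors are not obtained symbolically but by modular computation, following \cref{app:exact-arithmetic}.
\begin{itemize}
  \item Reduce \(A\) modulo several primes \(p\). On each \(\F_p\), solve at at least two sample values of \(D_s\); the affine form then follows by interpolation on two points, checked on a third.
  \item Require that the pivot pattern agrees across primes. Agreement rules out unlucky primes where the rank drops below \(920\), and \cref{eq:rank-nullity-theorem} fixes the generic rank.
  \item Combine the residues by Chinese remaindering. Apply rational reconstruction entrywise to each of the \(209\) vectors (\(x^{(0)}\), \(x^{(1)}\), and the \(207\) kernel vectors). The specified bounds \(|{\rm num}|,|{\rm den}|\le\sqrt{M/2}\), with \(M\) the product of the primes, make each reconstructed rational unique.
\end{itemize}

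\emph{Exact verification.} Uniqueness of reconstruction alone does not prove the candidates are correct; the modulus might be too small. So I would finish with an exact check over \(\Q\) (or over further unused primes, with an explicit height bound), verifying \(Ak_a=0\), \(Ax^{(0)}=b^{(0)}\), and \(Ax^{(1)}=b^{(1)}\). Since \(A\) has rank \(920\) and the \(k_a\) are independent by construction (\(K_{I_{\rm free}}=I_{207}\)), these \(207\) vectors span \(\ker A\). Linearity in \(D_s\) then gives \(A(x^{(0)}+D_sx^{(1)})=b(D_s)\) identically in \(D_s\).

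I expect this verification step to be the main obstacle. One must either carry out the exact rational matrix products on a \(m\times1127\) system with large entries, or justify a bound guaranteeing that vanishing modulo enough primes implies vanishing over \(\Q\). I would try the direct exact product first, since \(A\) is sparse and the check is linear.
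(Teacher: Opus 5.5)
Your proposal is correct. Its computational core matches the paper's \cref{app:exact-arithmetic} and lifting pseudocode: modular RREF in the fixed column order, pivot/free signatures aligned across primes, interpolation in \(D_s\), Chinese remaindering, and bounded rational reconstruction of the \(209\) vectors.

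It departs from the paper in two places.

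First, you \emph{derive} the \(D_s\) structure. Because \(A\) is \(D_s\)-free and \(b=b^{(0)}+D_sb^{(1)}\) (at most one closed index loop at one loop), the pivot set, \(U\) and \(R\) are \(D_s\)-independent, and \(x_{\rm p}\) is affine by linearity. The paper states the kernel's \(D_s\)-independence this way. It asserts affinity of \(x_{\rm p}\), however, and certifies it numerically: second differences vanish over three \(D_s\) values, and an extrapolation test is run at \(D_s=12\). Your argument makes affinity a consequence of the one-loop state sum, so the interpolation becomes only a consistency check.

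Second, you certify by an exact product over \(\Q\). Combined with the modular lower bound \(\rank_{\F_p}A\le\rank_{\Q}A\), this would also re-prove \(\rank A=920\) and show that the \(k_a\) span \(\ker A\). It is also what actually validates the pivot set; agreement of pivots across primes is only heuristic. The paper instead checks that the lifted rationals remap to every modular solution. It then evaluates \(x^{(0)}+D_sx^{(1)}\) and all \(207\) kernel vectors on fresh primes and disjoint seeds, obtaining \(624\) and \(43{,}056\) zero residuals. That is a Schwartz--Zippel-type probabilistic certificate, not a deterministic one.

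Note what your check presupposes. In the actual computation the cut rows are sampled directly over \(\F_p\) at random kinematic points, so what is common to all primes is the coefficient vectors, not one rational matrix. Your exact product requires rational on-shell sample points. This is consistent with the formal statement \(A\in\Q^{m\times1127}\), but it is heavier. The paper's route is cheap and also tests genuinely held-out kinematics.

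Finally, the ``specified bounds'' are \(A=B=10^8\) with \(M\approx1.1\times10^{18}>2AB\), not \(\sqrt{M/2}\). Uniqueness in that smaller box follows a fortiori from yours. You must still check that every one of the \(235{,}543\) reconstructed coordinates lies in it, as the paper records.
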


Independent cut equations then test the whole affine family: the particular-solution
residual and all \(207\) kernel residuals in
\cref{eq:sealed-validation-equations} vanish.  The finite-field calculation, rational
reconstruction bounds, and validation procedure are described in
\cref{app:exact-arithmetic,app:pseudocode}; the calculation-record index and its
v0.1/v0.2 distribution boundary are given in \cref{app:evidence-ledger}.

The same coordinates locate a previously published formula inside the solution family.
That formula is expressed first in the
\(72\)-term tensor basis used to write it, then in the \(10{,}724\)-dimensional standard tensor
basis, and finally in the \(1127\) symmetry-reduced coordinates used here.  Solving for
its position relative to \((x_{\rm p},K)\) gives the exact relation
\begin{equation}
  x_{\rm FL}(D_s)-x_{\rm p}(D_s)
  =K\bigl(\boldsymbol\alpha_0+D_s\boldsymbol\alpha_1\bigr).
  \label{eq:known-coordinate-bridge}
\end{equation}
Exactly \(15\) of the \(207\) pairs
\((\alpha_{0,a},\alpha_{1,a})\) are nonzero.  Nine have
\(\alpha_{1,a}\ne0\), so their coefficients depend on \(D_s\).

\begin{corollary}[The published representative lies in the solution family]
The published forward-limit five-point numerator is an exact member of the
reconstructed affine family.
\end{corollary}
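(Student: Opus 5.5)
The corollary follows from the exact coordinate relation \eqref{eq:known-coordinate-bridge} together with the description of the solution fiber in \eqref{eq:solution-affine-space}. To prove it, I will show that \(x_{\rm FL}(D_s)\) satisfies \(A\,x_{\rm FL}(D_s)=b(D_s)\) over \(\F=\Q(D_s)\). Membership in \(\mathcal F_b(D_s)=x_{\rm p}(D_s)+\ker A\) is then immediate. Equivalently, since \(\mathcal F_b\) is an affine space modelled on \(\Span_{\F}\{k_1,\ldots,k_{207}\}\), it suffices to exhibit \(\boldsymbol\beta(D_s)\in\F^{207}\) with \(x_{\rm FL}=x_{\rm p}+K\boldsymbol\beta\). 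The relation \eqref{eq:known-coordinate-bridge} supplies exactly this, with \(\boldsymbol\beta=\boldsymbol\alpha_0+D_s\boldsymbol\alpha_1\), a polynomial and hence an element of \(\F^{207}\).

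The first step is to make sure \(x_{\rm FL}\) is a well-defined vector in the same coordinate space \(V_{\rm orbit}\otimes\F\). I would push the \(72\)-term published tensor expression through the quotient projection \(\pi\) into \(V_{\rm ansatz}\) (dimension \(10{,}724\)). I then need that the resulting vector lies in the image of the signed orbit map \(O\), i.e.\ that it is fixed by the character projector \(P_\chi\) of \eqref{eq:signed-character-projector}. This should follow from the cyclic invariance and reflection antisymmetry of the published crossing-symmetric master, which matches the character \eqref{eq:required-pentagon-character}. Pulling back along \(O\) gives \(x_{\rm FL}\in\F^{1127}\). The residual \(P_\chi v-v\) for the image vector \(v\) is an exact finite check.

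The second step is the linear solve. I would apply the RREF data \eqref{eq:canonical-rref-blocks} to \(x_{\rm FL}-x_{\rm p}\). Because \(K_{I_{\rm free}}=I_{207}\) by \eqref{eq:canonical-kernel-vector}, the only candidate is \(\boldsymbol\beta=(x_{\rm FL}-x_{\rm p})_{I_{\rm free}}\). The remaining content is the check \((x_{\rm FL}-x_{\rm p})_{I_{\rm piv}}=-R\,\boldsymbol\beta\), which is equivalent to \(A(x_{\rm FL}-x_{\rm p})=0\). Given that check, \(A x_{\rm FL}=A x_{\rm p}=b\) by the theorem on the exact cut solution, and \(x_{\rm FL}\in\mathcal F_b\). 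The affine dependence on \(D_s\) should be verified at two or more values of \(D_s\) plus one extra value, or symbolically, since both sides are polynomials of degree at most one in \(D_s\).

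The main obstacle is the first step: convention matching, not linear algebra. The published numerator uses its own routing, orientation signs and normalization. Any mismatch in the loop-momentum shift \(\tau_\Delta\), in the overall phase or coupling, or in the quotient section would produce a vector that fails \(A x=b\), or even fails to lie in \(\im O\). I would first test the candidate on maximal cuts alone, where only \(N_0\) contributes by \eqref{eq:explicit-maximal-cut-row}, to fix signs and normalization. I would then confirm the pivot-block identity exactly through the finite-field and rational-reconstruction route of \cref{app:exact-arithmetic}.
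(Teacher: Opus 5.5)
Your proposal is correct and follows essentially the same route as the paper. The paper carries the published formula from its \(72\)-term basis through the \(10{,}724\)-dimensional standard basis into the \(1127\) orbit coordinates, then solves the intercept and slope differences separately in the kernel basis; the resulting exact relation \(x_{\rm FL}-x_{\rm p}=K(\boldsymbol\alpha_0+D_s\boldsymbol\alpha_1)\) places \(x_{\rm FL}\) in \(x_{\rm p}+\ker A=\mathcal F_b\), and your \(P_\chi\)-invariance check and free-coordinate reading \(\boldsymbol\beta=(x_{\rm FL}-x_{\rm p})_{I_{\rm free}}\) are explicit versions of those same embedding and kernel-solve steps.
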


The published formula is therefore a compact analytic representative of the reconstructed
fiber.  Its difference from the RREF representative lies entirely in the span of the
ordered kernel matrix \(K\).  The fiber dimension is \(207\) by
\cref{eq:rank-nullity-theorem}.  The next step determines how many of these deformations
remain distinct under the stated amputated observable by evaluating its quotient map on
the full ordered kernel.

\subsection{Observable differences and closedness modulo the readout}
\label{sec:stage-s}

The specified map \(\cS\) first projects the graph expansion onto the color-ring
coefficient \(R_{12345}\).  All quotient statements in this subsection refer to this fixed
color projection.  To compare two points in
\(\mathcal F_b=x_{\rm p}+\ker A\), let \(\cQ_S\) be the vector space of canonical
scalar-source records after the
specified LSZ projection and dimensional-regularization identities.  Color projection,
common routing, graph expansion, coefficient collection, LSZ projection, and passage to
this source
space are all linear.  Their composite therefore defines an ambient readout
\begin{equation}
  \cS:V\longrightarrow\cQ_S.
  \label{eq:stage-s-map}
\end{equation}
Every graph is first written in the same loop-momentum routing, so terms with identical
propagators and numerator powers can be compared.  The LSZ projector removes graphs with
a bubble on an external line.  The remaining expressions are collected as coefficients
of scalar integral sources, specified by the powers of the five propagators; negative
powers represent irreducible numerator factors.  Sources with no physical scale are
then zero in the codomain.  The exact implementation, compact v0.1 certificate, and
v0.2 production-record boundary for this map are described in
\cref{app:pseudocode,sec:reproducibility,app:evidence-ledger}.

The readout defines an equivalence relation on the solution fiber by acting on
differences:
\begin{equation}
  x\sim_{\cS}y
  \quad\Longleftrightarrow\quad
  \cS(x-y)=0,
  \qquad x,y\in\mathcal F_b.
  \label{eq:observable-equivalence-relation}
\end{equation}
The deformation directions that remain visible to this readout form the quotient
\begin{equation}
  \mathrm H^{\cS}(A)
  :=\frac{\mathrm{Def}(A)}{\mathrm{Def}(A)\cap\ker\cS}
  =\frac{\ker A}{\ker A\cap\ker\cS}
  \cong\operatorname{im}\!\left(\cS|_{\ker A}\right).
  \label{eq:observable-deformation-space}
\end{equation}
The final isomorphism is the natural one supplied by the first isomorphism theorem.

\begin{definition}[Closed inverse problem modulo \(\cS\)]
\label{def:closed-inverse-problem-mod-s}
The inverse problem \((A,b,\cS)\) is \emph{closed modulo \(\cS\)} when
\begin{equation}
  \operatorname{ob}_A(b)=0,
  \qquad
  \mathrm H^{\cS}(A)=0.
  \label{eq:closed-inverse-problem-conditions}
\end{equation}
The first condition gives existence of a numerator satisfying the cuts.  The second says
that the cuts leave no deformation visible to the stated readout.
\end{definition}

\begin{remark}[Algebraic closure and full validation]
\label{rem:algebraic-versus-charter-closure}
\Cref{def:closed-inverse-problem-mod-s} gives the algebraic core of closure.  For the
five-point result, it is supplemented by complete coverage of the selected cut
topologies, zero held-out residuals for the particular solution and ordered kernel,
successful tests in \(\cT\), and reproducible execution.
\end{remark}

\begin{remark}[The readout acts on differences of solutions]
\label{rem:s-equivalence-on-differences}
The relation \(\sim_{\cS}\) compares two solutions through \(\cS(x-y)\).  The
reconstruction equation remains \(Ax=b\), and the common physical readout is
\(\cS(x_{\rm p})\).  Appending this value gives
\begin{equation}
  \begin{bmatrix}A\\ \cS\end{bmatrix}x
  =\begin{bmatrix}b\\ \cS(x_{\rm p})\end{bmatrix},
  \qquad
  \ker\!\begin{bmatrix}A\\ \cS\end{bmatrix}
  =\ker A\cap\ker\cS,
  \label{eq:correct-readout-stacked-system}
\end{equation}
When \(\ker A\subseteq\ker\cS\), the stacked kernel equals \(\ker A\).  The appended
readout therefore records the common observable value of all \(207\) cut-preserving
directions and leaves the choice of representative free.
\end{remark}

\begin{proposition}[Observable quotient and factorization through cut data]
\label{prop:observable-factorization-through-cuts}
Suppose \(b\in\operatorname{im}A\).  Then
\(\mathcal F_b/{\sim_{\cS}}\) is an affine torsor modeled on
\(\mathrm H^{\cS}(A)\).  Moreover, the following statements are equivalent:
\begin{enumerate}[label=(\roman*)]
\item \(\mathrm H^{\cS}(A)=0\);
\item \(\ker A\subseteq\ker\cS\);
\item there is a unique linear map
  \(\overline{\cS}:\operatorname{im}A\to\cQ_S\) such that
  \begin{equation}
    \cS=\overline{\cS}\circ A_{\rm im},
    \qquad
    A_{\rm im}:V\twoheadrightarrow\operatorname{im}A,
    \label{eq:observable-factorization-through-image}
  \end{equation}
  where \(A_{\rm im}\) is \(A\) with its codomain restricted to its image.
\end{enumerate}
When these conditions hold,
\begin{equation}
  \left|\mathcal F_b/{\sim_{\cS}}\right|=1,
  \qquad
  \cS(x)=\overline{\cS}(b)
  \quad\text{for every }x\in\mathcal F_b.
  \label{eq:cut-determined-observable-readout}
\end{equation}
\end{proposition}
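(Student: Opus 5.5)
The argument is pure linear algebra over \(\F=\Q(D_s)\), using only that \(A:V\to W\) and \(\cS:V\to\cQ_S\) are linear and that \(b\in\operatorname{im}A\). We first establish the torsor statement, then prove the cycle (i)\(\Rightarrow\)(ii)\(\Rightarrow\)(iii)\(\Rightarrow\)(i), and finally read off \cref{eq:cut-determined-observable-readout}.

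\textbf{Torsor structure.} Since \(b\in\operatorname{im}A\), we may fix some \(x_{\rm p}\in\mathcal F_b\). Then \(\mathcal F_b=x_{\rm p}+\ker A\) is a torsor under \(\ker A\). The relation \(\sim_{\cS}\) of \cref{eq:observable-equivalence-relation} says \(x\sim_\cS y\) exactly when \(x-y\in\ker A\cap\ker\cS\). This is the same as identifying points of \(\mathcal F_b\) that differ by an element of the subspace \(\ker A\cap\ker\cS\). It follows that:
\begin{itemize}
\item the quotient \(\mathcal F_b/{\sim_\cS}\) is the orbit space of this subgroup acting on \(\mathcal F_b\);
\item the residual action of \(\ker A/(\ker A\cap\ker\cS)=\mathrm H^{\cS}(A)\) on it is well defined, free and transitive.
\end{itemize}
Freeness holds because a class \([v]\) fixing \([x]\) means \(v\in\ker\cS\). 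Transitivity is inherited from the action of \(\ker A\) on \(\mathcal F_b\).

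\textbf{Equivalences.} (i)\(\Leftrightarrow\)(ii) is immediate from the definition \cref{eq:observable-deformation-space}: the quotient \(\ker A/(\ker A\cap\ker\cS)\) vanishes iff \(\ker A\cap\ker\cS=\ker A\).

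For (ii)\(\Rightarrow\)(iii), use that \(A_{\rm im}\) is surjective. Define \(\overline\cS(Ax):=\cS(x)\). This is well defined because \(Ax=Ax'\) gives \(x-x'\in\ker A\subseteq\ker\cS\). It is linear because \(\cS\) and \(A\) are linear. It is unique because any two factorizations agree on \(\operatorname{im}A=A_{\rm im}(V)\), as \(A_{\rm im}\) is an epimorphism.

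For (iii)\(\Rightarrow\)(ii), take \(v\in\ker A\). Then \(\cS v=\overline\cS(A_{\rm im}v)=\overline\cS(0)=0\).

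\textbf{Consequences.} Assume the conditions hold. The torsor under the zero space is a single point, so \(|\mathcal F_b/{\sim_\cS}|=1\); here we use that \(\mathcal F_b\ne\emptyset\), which is the hypothesis \(b\in\operatorname{im}A\). For any \(x\in\mathcal F_b\) we have \(Ax=b\), and therefore \(\cS(x)=\overline\cS(A_{\rm im}x)=\overline\cS(b)\).

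There is no real obstacle. The only points needing care are that nonemptiness of \(\mathcal F_b\) is essential for the cardinality claim, and that uniqueness in (iii) relies on the codomain being restricted to \(\operatorname{im}A\) rather than all of \(W\).
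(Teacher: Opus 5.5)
Your proposal is correct and follows essentially the same route as the paper's proof: the torsor structure comes from translation by \(\ker A\) with \(\ker A\cap\ker\cS\) acting trivially, (i)\(\Leftrightarrow\)(ii) is read off from the quotient, \(\overline{\cS}(Ax):=\cS(x)\) is well defined because \(\ker A\subseteq\ker\cS\) and is unique because \(A_{\rm im}\) is surjective, and the converse follows by evaluating \(\overline{\cS}\circ A_{\rm im}\) on kernel vectors. You announce a cycle (i)\(\Rightarrow\)(ii)\(\Rightarrow\)(iii)\(\Rightarrow\)(i) but actually prove (i)\(\Leftrightarrow\)(ii) together with (ii)\(\Leftrightarrow\)(iii); this suffices and matches the paper's organization.
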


\begin{proof}
Translation by \(\mathrm{Def}(A)=\ker A\) acts freely and transitively on
\(\mathcal F_b\).  Under \(\sim_{\cS}\), the translations that become trivial are
exactly \(\ker A\cap\ker\cS\).  The quotient fiber is therefore a torsor for the
quotient in \cref{eq:observable-deformation-space}.

The equivalence of (i) and (ii) follows directly from that quotient.  If (ii) holds,
define
\(\overline{\cS}(Ax):=\cS(x)\).  If \(Ax=Ay\), then
\(x-y\in\ker A\subseteq\ker\cS\), so this definition is independent of the chosen
preimage.  It is linear and unique because \(A_{\rm im}\) is surjective, proving (iii).
Conversely, (iii) gives \(\cS(k)=\overline{\cS}(Ak)=0\) for every \(k\in\ker A\).
Finally, if \(Ax=b\), the factorization gives
\(\cS(x)=\overline{\cS}(b)\), and the quotient fiber has one element.
\end{proof}

The map \(\overline{\cS}\) in
\cref{eq:observable-factorization-through-image} has the natural domain
\(\operatorname{im}A\).  Consistent data lie in this image.  Values on a complementary
subspace of \(W\) are undetermined when \(\operatorname{coker}A\) is nonzero and play no
role in the readout.

The LSZ part of the map is an explicit projector.  At this post-assembly stage it acts on
a \(56\)-object LSZ/source registry, not on the complete \(297\)-graph cubic registry used
to build the numerator ansatz.  This post-assembly registry contains \(31\) regular
objects and \(25\) formal objects with an external-line bubble, and
\begin{equation}
  P_{\rm LSZ}^2=P_{\rm LSZ},
  \qquad \rank P_{\rm LSZ}=31,
  \qquad \dim\ker P_{\rm LSZ}=25.
  \label{eq:lsz-projector}
\end{equation}
The LSZ projector removes external-line bubbles before the integral identities are
applied.  It acts before external helicities are chosen, so its action on the numerator
family is helicity independent within this graph space.  A different renormalized-field
prescription supplies the corresponding LSZ and \(\delta Z\) contribution at this step.

The remaining vanishing sources have no scale after the prescribed loop-momentum shift.
Typical identities are
\begin{equation}
  \int d^D\ell\,(\ell^2)^r=0,
  \qquad
  \int\frac{d^D\ell\,P(\ell)}{(\ell+\Delta)^2}=0
  \label{eq:scaleless-examples}
\end{equation}
when the shifted integral contains no nonzero external invariant.  Canonical routing is
therefore part of the definition of \(\cS\): it makes the absence of a scale a property
of a stated scalar source, with graph-dependent momentum labels converted to the
canonical route.

\subsection{The exact kernel certificate and observable quotient}

Applying \(\cS\) to the graph expansion of the full kernel gives \(646\) distinct
coefficient conditions.  They divide as
\begin{equation}
  646=350_{\rm coefficient\ identity}+296_{\rm route\ scaleless}.
  \label{eq:stage-s-row-split}
\end{equation}
For each of the first \(350\) conditions \(r\), each kernel direction \(a\), and each of
the \(243\) tensor monomials \(m\), exact collection of the sparse polynomial terms gives
an integer coefficient
\begin{equation}
  c_{ram}=0,
  \qquad
  r=1,\ldots,350,\qquad
  a=1,\ldots,207,\qquad
  m=1,\ldots,243.
  \label{eq:generic-component-zero}
\end{equation}
These coefficient identities hold before integration, independently of the
scaleless-integral rules.

The other \(296\) sources vanish after their canonical loop-momentum shifts:
\begin{equation}
\begin{array}{c|r}
\text{reason the integral is scaleless}&\text{count}\\ \hline
\text{polynomial with no loop denominator}&21\\
\text{massless tadpole after a loop shift}&175\\
\text{adjacent massless external bubble after a loop shift}&100
\end{array}
\label{eq:scaleless-class-counts}
\end{equation}
The three classes exhaust the \(296\) sources.  Thus each condition is removed either by
an exact coefficient identity or by a stated dimensional-regularization identity; the
external-line graphs are removed independently by \(P_{\rm LSZ}\).

The sparse term counts, source-condition summary, and versioned distribution of the
row-level records are given in
\cref{app:pseudocode,sec:reproducibility,app:evidence-ledger}.

The preceding cancellations apply to every vector in the complete ordered kernel.

\begin{theorem}[Observable equivalence of all directions left free by the cuts]
For every ordered kernel vector \(k_a\) of the reconstructed five-point affine family,
\begin{equation}
  \cS(k_a)=0,
  \qquad a=1,\ldots,207.
  \label{eq:stage-s-direction-theorem}
\end{equation}
Equivalently,
\begin{equation}
  \ker A\subseteq\ker\cS,
  \qquad
  \mathrm H^{\cS}(A)=0.
  \label{eq:kernel-in-readout-kernel}
\end{equation}
\end{theorem}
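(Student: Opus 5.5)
The plan is to prove \cref{eq:stage-s-direction-theorem} one kernel vector at a time, using the linear structure of \(\cS\) and the fixed ordered kernel basis from \cref{eq:canonical-kernel-vector}. The second formulation \cref{eq:kernel-in-readout-kernel} then follows from \cref{prop:observable-factorization-through-cuts}. Since \(\cS\) is linear and \(k_1,\ldots,k_{207}\) span \(\ker A\) over \(\F\), it suffices to show \(\cS(k_a)=0\) for each \(a\). This gives \(\ker A\subseteq\ker\cS\), and the equivalence (i)\(\Leftrightarrow\)(ii) of that proposition yields \(\mathrm H^{\cS}(A)=0\). The kernel is \(D_s\)-independent and rational by \cref{eq:particular-kernel-ds}, so each \(\cS(k_a)\) is a rational computation with no \(D_s\) dependence.

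For a single \(k_a\), I will factor \(\cS\) as described in \cref{sec:stage-s}. The steps, in order, are: apply \(\mathsf{Build}\) and \(\mathsf{Jac}\) to obtain the graph numerators of \(k_a\); project onto the \(R_{12345}\) color-ring coefficient; rewrite every graph in the canonical routing \cref{eq:adjacent-routing}; apply \(P_{\rm LSZ}\) from \cref{eq:lsz-projector}; and collect coefficients of scalar sources, meaning propagator-power vectors together with tensor monomials. The codomain \(\cQ_S\) identifies scaleless sources with zero.

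I will split the \(646\) source conditions according to \cref{eq:stage-s-row-split}. For the \(350\) coefficient-identity sources, I will verify \cref{eq:generic-component-zero}: after exact sparse collection, every integer coefficient \(c_{ram}\) over the \(243\) monomials vanishes. No integration is needed for these. For the remaining \(296\) sources, I will verify that each belongs to one of the three scaleless classes in \cref{eq:scaleless-class-counts}. These are a pure polynomial with no denominator, a massless tadpole after the recorded shift, or an adjacent massless external bubble after that shift. In each case the appropriate identity \cref{eq:scaleless-examples} holds. External-line bubble graphs are removed separately by \(P_{\rm LSZ}\). Exhaustiveness of the partition, together with the fact that every contribution of \(k_a\) lands in one of the \(646\) listed sources, gives \(\cS(k_a)=0\).

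The main obstacle is making this certificate trustworthy rather than merely computed. Two points need care: that the canonical-routing shifts assigned to each graph are the same ones used to declare a source scaleless, and that no source outside the \(646\) listed ones receives a nonzero coefficient. I would address the first by checking the routing table against the shift \(\tau_\Delta\) in \cref{eq:free-lie-compiler}. I would address the second by recomputing the full collected source vector of each \(k_a\) independently, over several primes, and comparing its support with the registry.

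As a conceptual cross-check, I would use the fact that kernel directions vanish on all cuts. Their integrands should therefore reduce to contact terms that either cancel by color Jacobi, as in the four-point gauge shift \cref{eq:fourpoint-gauge-amplitude}, or collapse to scaleless topologies. This explains the observed split into \(350\) coefficient identities and \(296\) scaleless sources.
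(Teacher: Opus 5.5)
Your proposal is correct and takes the same approach as the paper. Linearity reduces the claim to the ordered basis \(k_1,\ldots,k_{207}\). The \(350\) coefficient-identity conditions vanish by exact integer cancellation. The remaining \(296\) sources vanish by the exhaustive scaleless classification, with external-line graphs removed by \(P_{\rm LSZ}\). Then \(\ker A\subseteq\ker\cS\) gives \(\mathrm H^{\cS}(A)=0\).

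Your routing-consistency and support checks, and the heuristic that cut-invisible directions must land on contact-cancelling or scaleless sources, are reasonable additions but are not part of the paper's argument.
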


\begin{proof}
The normalized vectors \(k_1,\ldots,k_{207}\) form a basis of \(\ker A\).  For each
basis vector, the readout produces the \(646\) conditions in
\cref{eq:stage-s-row-split}.  The first \(350\) vanish coefficient by coefficient by
\cref{eq:generic-component-zero}.  The remaining \(296\) vanish in \(\cQ_S\) by the
exhaustive source classification in \cref{eq:scaleless-class-counts}, while the
external-line contributions are removed separately by the projector in
\cref{eq:lsz-projector}.  Hence \(\cS(k_a)=0\) for every basis vector.  Linearity gives
\(\ker A\subseteq\ker\cS\), and
\cref{eq:observable-deformation-space} then gives \(\mathrm H^{\cS}(A)=0\).
\end{proof}

\begin{corollary}[The inverse problem is closed modulo \(\cS\)]
\label{cor:five-point-closed-mod-s}
The reconstructed five-point inverse problem is closed modulo \(\cS\), and
\begin{equation}
  \dim_{\F}\mathcal F_b=207,
  \qquad
  \left|\mathcal F_b/{\sim_{\cS}}\right|=1.
  \label{eq:fiber-dimension-versus-quotient-size}
\end{equation}
For any coefficient functions
\(\boldsymbol\beta(D_s)\in\Q(D_s)^{207}\),
\begin{equation}
  \cS\!\left(x_{\rm p}(D_s)+K\boldsymbol\beta(D_s)\right)
  =\cS\!\left(x_{\rm p}(D_s)\right)
  =\overline{\cS}(b).
  \label{eq:stage-s-family-equivalence}
\end{equation}
In particular, the published representative of
\cref{eq:known-coordinate-bridge} and the reduced-row-echelon representative define the
same observable-equivalence class.
\end{corollary}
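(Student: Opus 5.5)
The corollary follows by combining three results already in place: the exact cut solution theorem (\cref{eq:rank-nullity-theorem}), the observable-equivalence theorem (\cref{eq:stage-s-direction-theorem}), and \cref{prop:observable-factorization-through-cuts}.

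First, closedness modulo \(\cS\) means checking the two conditions of \cref{def:closed-inverse-problem-mod-s}. The exact cut solution theorem gives \(\rank A=\rank[A|b]\). Hence \(b\in\operatorname{im}A\), and so \(\operatorname{ob}_A(b)=0\) by \cref{eq:exact-obstruction-deformation-result}. The observable-equivalence theorem gives \(\mathrm H^{\cS}(A)=0\). Together these establish closedness.

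Second, the dimension statement \(\dim_{\F}\mathcal F_b=207\) is read off from \cref{eq:solution-affine-space}. Since \(b\in\operatorname{im}A\), the fiber is nonempty, and \(\mathcal F_b=x_{\rm p}+\ker A\) with \(\dim\ker A=207\).

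Third, for the quotient I apply \cref{prop:observable-factorization-through-cuts}. Its hypothesis \(b\in\operatorname{im}A\) has just been verified, and condition (ii), \(\ker A\subseteq\ker\cS\), is exactly \cref{eq:kernel-in-readout-kernel}. The proposition then gives the factorization \(\cS=\overline{\cS}\circ A_{\rm im}\), the one-element quotient \(|\mathcal F_b/{\sim_{\cS}}|=1\), and \(\cS(x)=\overline{\cS}(b)\) for every \(x\in\mathcal F_b\).

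For \cref{eq:stage-s-family-equivalence} I would also argue directly from linearity of \(\cS\): \(\cS(x_{\rm p}+K\boldsymbol\beta)=\cS(x_{\rm p})+\sum_a\beta_a\cS(k_a)=\cS(x_{\rm p})\). The equality with \(\overline{\cS}(b)\) then follows because \(Ax_{\rm p}=b\).

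The one point that needs care is that \(\boldsymbol\beta\) has entries in \(\Q(D_s)\), not \(\Q\). So \(\cS\) must be read as the \(\F\)-linear extension to \(V=V_{\rm orbit}\otimes_{\Q}\F\), and the theorem's statement \(\cS(k_a)=0\) for the rational basis vectors must extend to \(\F\)-combinations. This holds because the \(k_a\) are rational and, by \cref{eq:particular-kernel-ds}, independent of \(D_s\). Hence \(\{k_a\}\) is simultaneously a \(\Q\)-basis of \(\ker A\) over \(\Q\) and an \(\F\)-basis after extension of scalars. I would state this extension explicitly.

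Finally, the statement about the published representative follows from \cref{eq:known-coordinate-bridge}. There \(x_{\rm FL}-x_{\rm p}=K(\boldsymbol\alpha_0+D_s\boldsymbol\alpha_1)\), which is the special case \(\boldsymbol\beta=\boldsymbol\alpha_0+D_s\boldsymbol\alpha_1\in\Q(D_s)^{207}\). Hence \(x_{\rm FL}\sim_{\cS}x_{\rm p}\).
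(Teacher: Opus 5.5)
Your proposal is correct and follows the paper's proof. The exact cut-solution theorem gives existence and \(\dim_{\F}\mathcal F_b=207\), the observable-equivalence theorem gives \(\mathrm H^{\cS}(A)=0\), \cref{prop:observable-factorization-through-cuts} gives the one-element quotient and \(\cS(x)=\overline{\cS}(b)\), and the coordinate bridge with the same ordered \(K\) places \(x_{\rm FL}\) in that single class. Your direct linearity computation and your remark that \(\cS\) is read as \(\F\)-linear, so that \(\cS(k_a)=0\) extends to \(\Q(D_s)\)-combinations, spell out steps the paper leaves implicit and change nothing.
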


\begin{proof}
The exact cut solution gives \(\operatorname{ob}_A(b)=0\) and
\(\dim_{\F}\mathcal F_b=207\).  The theorem gives
\(\mathrm H^{\cS}(A)=0\).  Thus the two closedness conditions in
\cref{def:closed-inverse-problem-mod-s} hold, and
\cref{prop:observable-factorization-through-cuts} gives the quotient size and the
cut-determined readout.  The coordinate bridge in \cref{eq:known-coordinate-bridge}
uses the same ordered kernel matrix \(K\), so it also places the published representative
in this unique \(\cS\)-equivalence class.
\end{proof}

\begin{figure}[H]
\centering
\begin{tikzpicture}[x=1.10cm,y=1cm,line cap=round,line join=round]
  % The reconstructed affine leaf.
  \path[fill=bcjcyan!12,draw=bcjcyan!75!black,very thick]
    (.15,.55).. controls (1.55,.08) and (4.20,.42) .. (5.65,1.08)
    .. controls (5.10,2.58) and (2.15,3.18) .. (.72,2.30)--cycle;
  \draw[bcjcyan!45,thin]
    (.65,1.02).. controls (2.05,.72) and (4.08,.94) .. (5.22,1.42);
  \draw[bcjcyan!45,thin]
    (.78,1.55).. controls (2.35,1.18) and (4.28,1.45) .. (5.15,1.88);
  \draw[bcjcyan!45,thin]
    (1.02,2.08).. controls (2.50,1.73) and (4.15,1.92) .. (4.82,2.30);
  \node[font=\small,bcjcyan!45!black,fill=white,fill opacity=.90,
    text opacity=1,inner sep=1.5pt] at (2.65,3.18)
    {cut solution family \(\mathcal F_b=x_{\rm p}+\ker A\)};

  \coordinate (p0) at (1.25,1.25);
  \coordinate (pb) at (3.05,1.65);
  \coordinate (pfl) at (4.72,1.72);
  \fill[bcjblue] (p0) circle (2.6pt)
    node[below left=2pt,font=\scriptsize] {\(x_{\rm p}\)};
  \fill[bcjred] (pb) circle (2.6pt)
    node[below=3pt,font=\scriptsize] {\(x_{\rm p}+K\boldsymbol\beta\)};
  \fill[bcjgold!70!black] (pfl) circle (2.7pt)
    node[below right=3pt,font=\scriptsize,fill=white,inner sep=1pt]
    {\(x_{\rm FL}\)};
  \draw[-{Stealth[length=2.2mm]},bcjred,thick] (p0)--(pb)
    node[midway,above=2pt,font=\scriptsize] {\(K\boldsymbol\beta\)};

  % Quotient map collapses the whole leaf to one equivalence class.
  \draw[-{Stealth[length=4mm]},line width=2.2pt,bcjblue]
    (5.75,1.62)--(7.12,1.62)
    node[midway,above=5pt,font=\large] {\(\pi_{\cS}\)};

  \fill[bcjblue!6] (8.55,1.62) ellipse[x radius=1.42,y radius=1.05];
  \draw[bcjblue!70,thick] (8.55,1.62) ellipse[x radius=1.42,y radius=1.05];
  \draw[bcjblue!25] (8.55,1.62) ellipse[x radius=.92,y radius=.66];
  \fill[bcjblue] (8.55,1.62) circle (3.3pt);
  \node[font=\small,bcjblue!80!black,fill=white,inner sep=1.5pt]
    at (8.55,2.92) {\(\mathcal F_b/{\sim_{\cS}}\)};
  \node[below=7pt,font=\small,bcjblue!80!black] at (8.55,1.62)
    {\([x_{\rm p}]_{\cS}\)};

  % Independent proof mechanisms remain visually and logically distinct.
  \draw[very thick,bcjblue] (.65,-.08)--(1.25,-.08);
  \node[anchor=west,font=\footnotesize] at (1.35,-.08)
    {350 exact coefficient identities};
  \draw[very thick,dashed,bcjgold!70!black] (.65,-.43)--(1.25,-.43);
  \node[anchor=west,font=\footnotesize] at (1.35,-.43)
    {296 scaleless integral sources};
  \draw[very thick,dash dot,bcjred] (.65,-.78)--(1.25,-.78);
  \node[anchor=west,font=\footnotesize] at (1.35,-.78)
    {explicit \(P_{\rm LSZ}\)};
\end{tikzpicture}
\caption{The quotient of the reconstructed fiber.  Every direction \(k_a\) lies in
\(\ker\cS\), so the quotient projection \(\pi_{\cS}\) sends the full affine fiber to one
equivalence class.  The three certificate mechanisms are exact coefficient
cancellation, scaleless integration after a loop-momentum shift, and LSZ projection.}
\label{fig:stage-s-flow}
\end{figure}

The observable quotient in \cref{eq:fiber-dimension-versus-quotient-size} has the sole
element
\begin{equation}
  [x_{\rm p}]_{\cS}
  =\left[x_{\rm p}+K\boldsymbol\beta(D_s)\right]_{\cS},
  \qquad \boldsymbol\beta(D_s)\in\Q(D_s)^{207}.
  \label{eq:stage-s-quotient-class}
\end{equation}
Thus \(207\) is the affine dimension of the exact numerator fiber, whereas \(1\) is the
number of its \(\cS\)-equivalence classes.  The explicit kernel coordinates remain useful
for choosing sparse, local, or integration-adapted representatives, while the readout is
fixed by the cut data.  The next section evaluates selected representatives through
integral reduction and helicity amplitudes.

\section{Downstream validation and comparison of prescription maps}
\label{sec:validation}

The validation collection \(\cT\) tests consequences of the reconstructed numerator
family using information independent of the cut equations and the quotient \(\cS\).  It assembles graph
and color factors, reduces loop integrals to a common master basis, compares
four-dimensional helicity amplitudes, and evaluates two specified prescription maps.
These readouts are independent of the equations used to determine the numerator fiber.

\subsection{Independent master-integral reduction}

After graph and color projection, every numerator contribution is written in terms of
scalar-source integrals
\begin{equation}
  I_{\bm a}^{(D)}
  =\int\frac{d^D\ell}{i\pi^{D/2}}
  \frac{1}{D_1^{a_1}\cdots D_5^{a_5}},
  \qquad \bm a\in\Z^5.
  \label{eq:source-integrals}
\end{equation}
Negative entries represent irreducible numerator powers.  The regular graph sector
produces \(472\) source exponents.  The observable-equivalence analysis of
\cref{sec:stage-s} sets \(296\) of them to zero, leaving \(176\) sources in the regular
cut-visible quotient.  External-line graphs are removed by the LSZ projector before
this scalar-source representation is compared with an amputated amplitude.

Integration-by-parts identities follow from total derivatives,
\begin{equation}
  0=\int d^D\ell\,
  \frac{\partial}{\partial\ell^\mu}
  \left(\frac{v^\mu(\ell,k)}{D_1^{a_1}\cdots D_5^{a_5}}\right),
  \label{eq:ibp-identity}
\end{equation}
and reduce each source to
\begin{equation}
  I_{\bm a}^{(D)}=\sum_{m=1}^{11}
  R_{\bm a m}(D,s_{ij})\,I_m.
  \label{eq:ibp-reduction}
\end{equation}
The same-\(D\) scalar rank is \(11\).  We use the basis
\begin{equation}
  \mathcal B_{\rm MI}
  =\left\{P_5^{(D+2)},\widehat B_{4,1},\ldots,\widehat B_{4,5},
  U_{2,1},\ldots,U_{2,5}\right\},
  \label{eq:eleven-master-basis}
\end{equation}
where the shifted pentagon is related to the same-\(D\) integrals by an exact dimension
recurrence \cite{Bern:1992em}.  Its coefficient may begin at \(O(\epsilon)\), and its
finite contribution is obtained by combining that valuation with the Laurent order of
\(P_5^{(6-2\epsilon)}\).  Kira and FIRE independently give the same rank, master set, and
rational reduction coefficients for the \(21\) same-dimensional unit-power anchor
integrals.  Kira supplies the full map \(R_{\bm a m}\) for the \(472\)-source table,
including its non-unit-power sources.

There are two natural routes from the graph representation to the master coefficients.
The first expands the graph numerators directly into scalar sources and applies
\cref{eq:ibp-reduction}.  The second reconstructs the \(176\) cut-visible source
coefficients and applies the same master-basis map.  Their relation is
\begin{equation}
\begin{tikzpicture}[baseline=-2pt,node distance=11mm and 24mm]
  \node[flow,minimum width=34mm] (graphs) {graphs and source\\coefficients};
  \node[evidence,right=of graphs,minimum width=32mm] (mia) {11 MIs\\direct reduction};
  \node[flow,below=of graphs,minimum width=34mm] (blind) {cut-reconstructed\\176-vector};
  \node[evidence,right=of blind,minimum width=32mm] (mib) {11 MIs\\cut reconstruction};
  \draw[arrow] (graphs)-- node[above,font=\scriptsize]{direct reduction} (mia);
  \draw[arrow] (graphs)-- node[left,font=\scriptsize]{cut evaluation} (blind);
  \draw[arrow] (blind)-- node[below,font=\scriptsize]{reduction} (mib);
  \draw[<->,very thick,bcjcyan] (mia)-- node[right,font=\scriptsize]{exact equality} (mib);
\end{tikzpicture}
\label{eq:path-ab-square}
\end{equation}
The two routes agree exactly component by component.  The \(296\) cut-invisible sources
vanish independently under the scaling test and the
symbolic integral reduction.  The source tables, reduction programs, and independent
comparison data are indexed in \cref{sec:reproducibility,app:evidence-ledger}.

\subsection{Helicity-amplitude benchmarks and factorization targets}

For four-dimensional external states, the five-point tree MHV amplitude is
\begin{equation}
  A_5^{(0)}(1^-,2^-,3^+,4^+,5^+)
  =i\frac{\langle12\rangle^4}
  {\langle12\rangle\langle23\rangle\langle34\rangle
   \langle45\rangle\langle51\rangle},
  \label{eq:parke-taylor-five}
\end{equation}
while
\begin{equation}
  A_5^{(0)}(+,+,+,+,+)=0,
  \qquad A_5^{(0)}(-,+,+,+,+)=0.
  \label{eq:tree-zero-sectors}
\end{equation}
The all-plus and single-minus one-loop amplitudes therefore isolate rational terms.  The
MHV amplitudes test the infrared poles, scheme conversion, logarithms, and finite
functions.  We compare with the analytic expressions of
\cite{Bern:1993mq,Bern:1994zx,Bern:2005hs}; the dimension-shift relation to self-dual
and supersymmetric amplitudes is described in \cite{Bern:1996ja}.

In the all-plus sector, the expected pole and logarithmic structures vanish and the
shifted-pentagon term has the expected valuation.  For the leading single-trace color
ordering, define the normalized, Parke--Taylor-stripped finite object
\begin{equation}
  F_{+++++}^{\rm PT}
  :=\langle12\rangle\langle23\rangle\langle34\rangle\langle45\rangle\langle51\rangle
  \left[\frac{A_{5;1}^{(1)}(+,+,+,+,+)}{i c_\Gamma}\right]_{\epsilon^0},
  \label{eq:all-plus-pt-stripped-object}
\end{equation}
where \(A_{5;1}^{(1)}\) is the leading single-trace partial amplitude and
\(c_\Gamma=(4\pi)^\epsilon\Gamma(1+\epsilon)\Gamma^2(1-\epsilon)/
[16\pi^2\Gamma(1-2\epsilon)]\).  Write
\((s_1,s_2,s_3,s_4,s_5)=(s_{12},s_{23},s_{34},s_{45},s_{51})\), with cyclic indices
modulo five.  The reconstruction specifies that \(F_{+++++}^{\rm PT}\) is a homogeneous
mass-dimension-four polynomial invariant under cyclic relabeling, with no additional Mandelstam
denominators, in the ordered basis
\begin{equation}
  \left(
  \sum_{i=1}^5s_i^2,
  \sum_{i=1}^5s_is_{i+1},
  \sum_{i=1}^5s_is_{i+2},
  \trfive
  \right).
  \label{eq:all-plus-c5-basis}
\end{equation}
In this four-term function class, its coefficient vector is
\begin{equation}
  (0,\tfrac16,0,-\tfrac16).
  \label{eq:all-plus-reconstructed-coefficients}
\end{equation}
The adjacent-MHV amplitudes reproduce the universal double pole, all five logarithmic
single-pole coefficients, and the finite HV--FDH scheme shift.  Their finite Euclidean
values agree with the analytic amplitudes at independent Euclidean kinematics.

The single-minus comparison satisfies the Ward, cyclic, parity, little-group, pole, and
logarithmic relations.  The nonadjacent-MHV amplitudes agree at independent kinematic
points, helicity families, schemes, and fixed analytic-continuation branches.
Complex-ball evaluation gives overlapping prediction and reference intervals in every
case.  The number of points, the exact rational reconstruction, and the branch choices
are given in \cref{app:exact-arithmetic,sec:reproducibility,app:evidence-ledger}.

The standard one-loop factorization relations provide additional functional validation
targets.  For adjacent collinear momenta, the target relation is
\begin{equation}
 A_n^{(1)}\xrightarrow{a\parallel b}
 \sum_{\lambda}\left[
 \operatorname{Split}^{(0)}_{-\lambda}(a,b)A_{n-1}^{(1)}(P^\lambda)
 +\operatorname{Split}^{(1)}_{-\lambda}(a,b)A_{n-1}^{(0)}(P^\lambda)
 \right],
 \label{eq:one-loop-collinear}
\end{equation}
The corresponding target relation for multiparticle channels includes tree--loop
factorization and the one-loop factorization function of \cite{Bern:1995ix}.  A
positive-helicity soft leg has leading factor
\begin{equation}
  S^{(0)}(a,s^+,b)=
  \frac{\langle ab\rangle}{\langle as\rangle\langle sb\rangle}.
  \label{eq:positive-soft-factor}
\end{equation}
These relations state how the amplitude comparison extends beyond isolated kinematic
points.  The calculation records used here validate the finite-kinematics benchmarks above.
The exact adjacent-collinear and soft-limit comparisons are separate pending validation
steps.

\subsection{Comparing the fixed-routing and forward-limit prescriptions}
\label{sec:prescription-counterexample}

An equality statement is defined relative to its readout map.  Let
\(x_{\rm FR}\) and \(x_{\rm FL}\) denote the fixed-routing and forward-limit points in
the common coefficient space.  The observable quotient establishes
\begin{equation}
  x_{\rm FR}\sim_{\cS} x_{\rm FL}
  \quad\Longleftrightarrow\quad
  \cS(x_{\rm FR}-x_{\rm FL})=0.
  \label{eq:s-equivalence-versus-prescriptions}
\end{equation}
The fixed-routing and forward-limit calculations define distinct downstream maps
\(P_{\rm FR}\) and \(P_{\rm FL}\).  Their output comparison is a separate statement:
\begin{equation}
  \cS(x_{\rm FR}-x_{\rm FL})=0
  \quad\not\Longrightarrow\quad
  P_{\rm FR}(x_{\rm FR})=P_{\rm FL}(x_{\rm FL}).
  \label{eq:map-relative-nonimplication}
\end{equation}
The map \(\cS\) first puts a difference in its canonical routing, applies its LSZ
projector, and sets exact cancellations and scaleless sources to zero.  The maps
\(P_{\rm FR}\) and \(P_{\rm FL}\) retain the raw source assignments associated with
their respective prescriptions and express both outputs in one master-integral basis.
Surface, external-field, and scheme terms must be specified to complete a comparison of
the two maps.  The calculation below identifies a nonzero difference between the raw
maps; it does not construct this completion.

The directly reconstructed fixed-routing representative
\(\A^{\mathrm{FR}}\) and the published forward-limit representative
\(\A^{\mathrm{FL}}\) of \cite{Cao:2025forward} are two points in the affine numerator family.
Three independent calculations of the prescription-specific source and master
coefficients give
\begin{equation}
  C_{\mathrm{FR}}-C_{\mathrm{FL}}=C_{\mathrm{FR-FL}}
  \label{eq:fr-fl-coefficient-identity}
\end{equation}
component by component.

At the preregistered exact kinematic point
\begin{equation}
  (s_{12},s_{23},s_{34},s_{45},s_{51})=(-172,-79,7,-231,320),
  \label{eq:p2-witness-kinematics}
\end{equation}
consider the color-ring ordering \(R_{12345}\), the all-plus helicity assignment, and
\(D_s=3\).  The selected difference in the coefficient multiplying the master integral
\(\widehat B_{4,1}\) is the rational function below.  Here \(D\) is the loop-integration
dimension introduced in \cref{eq:source-integrals} and is kept distinct from the internal
spin dimension \(D_s\).  In this witness \(D_s=3\) is fixed while \(D\) remains symbolic.
\begin{equation}
 C_{\mathrm{FR-FL}}(D)
 =\frac{
 \frac{1591723521}{22400}D^2
 -\frac{667592581}{2800}D
 +\frac{33527363}{200}}
 {D^2-4D+3}
 \ne0.
 \label{eq:p2-nonzero-witness}
\end{equation}
The master integral itself is not evaluated in this coefficient-level witness.
The detailed component maps and the historical filenames used for this comparison are
listed in \cref{app:evidence-ledger}.

Thus the two raw prescription maps have different coefficients in this component of the
stated master basis.  This is the explicit nonzero test of
\cref{eq:prescription-hypothesis}.  The result demonstrates that equivalence is
map-relative: the cut reconstruction has a unique readout modulo \(\cS\), while equality
between \(P_{\rm FR}\) and \(P_{\rm FL}\) requires a separately specified surface,
external-field, and scheme completion.  Reconstruction, quotient, and downstream
prescriptions therefore remain distinct parts of \(\mathcal P_\sigma\).

\section{Summary and outlook}
\label{sec:discussion}

We treat BCJ numerator construction as a finite exact inverse problem.  The specification
\(\Omega\) fixes the physical target, representation space, graph identities,
equivalence map, and validation tests.  At each stage, \(\sigma_t\) selects a finite
tensor basis and cut set.  Compilation yields
\(\mathcal P_{\sigma_t}=(A_t,b_t;\cS_t;\cT_t)\), and its exact ranks and null spaces
guide the next change.

For the one-loop five-gluon problem, the final local parity-even ansatz begins with
\(17{,}824\) tensor monomials.  Kinematic identities reduce them to \(10{,}724\)
independent structures, and signed graph symmetries leave \(1127\) coefficient
variables.  Maximal, box, triple, and double cuts have cumulative ranks
\(547,812,916,920\).  Their complete solution over \(\Q(D_s)\) is the affine fiber
\begin{equation}
  \mathcal F_b
  =\left\{
  x^{(0)}+D_sx^{(1)}+K\boldsymbol\beta(D_s)
  :\boldsymbol\beta(D_s)\in\Q(D_s)^{207}
  \right\},
  \qquad \dim\mathcal F_b=207.
  \label{eq:discussion-main-result}
\end{equation}
The reduced-row-echelon convention selects one point in this fiber, and the ordered
columns of \(K\) give every coefficient deformation invisible to the specified cuts.

The observable map \(\cS\) for the \(R_{12345}\) color-ring projection classifies the
residual freedom.  It sends all \(207\) kernel directions to zero through exact coefficient identities,
fixed-routing scaleless relations, and LSZ amputation of external-line bubbles.
Consequently
\begin{equation}
  \mathrm H^{\cS}(A)
  =\frac{\ker A}{\ker A\cap\ker\cS}=0,
  \qquad
  \left|\mathcal F_b/\!\sim_{\cS}\right|=1.
  \label{eq:discussion-closure}
\end{equation}
In the final calculation, the cut system is consistent, the selected topology coverage
is complete, all held-out residuals vanish, the remaining kernel is classified by
\(\cS\), and the completed finite-kinematics and coefficient-level downstream validations
pass.  The factorization relations in \cref{sec:validation} remain pending validation
targets.  The cuts determine a
\(207\)-dimensional numerator fiber, and \(\cS\) assigns the same readout to every point
in that fiber.

Exact algebra identifies two necessary additions.  The repeated-contact sector
\(Y_i^2N_{ii}\) adds \(300\) basis directions.  Maximal, box, and triple cuts then leave
\(211\) right-kernel directions; the double-cut block has information gain
\(\rank(R_DK_{\le T})=4\), leaving the final \(207\)-dimensional kernel.  The
three-polarization example explains why a new topology was needed: symmetry forced
repeated box and balanced-triple measurements to annihilate the same odd subspace.  In
general the left null space tests whether the ansatz can reproduce the target, the
right kernel guides additional measurements, and \(\cS\) determines whether the
remaining freedom changes the specified readout.

The downstream collection \(\cT\) provides independent evidence.  A direct
scalar-source reduction and a separate route based on cuts give the same coefficients
in a basis of \(11\) master integrals.  All-plus, single-minus, and MHV amplitudes
reproduce the expected rational terms, infrared poles, scheme conversion, analytic
branches, and finite-kinematics values.  The published forward-limit numerator is an exact point
in \(\mathcal F_b\), using \(15\) kernel directions.  Its raw forward-limit
prescription map and the fixed-routing prescription map differ in one stated
master-integral coefficient.  This nonzero component shows that equality is relative
to the chosen map: uniqueness modulo \(\cS\) and equality under two different
integration prescriptions are separate statements.

The public implementation exposes the same diagnostic state and typed operations to a
physicist, a fixed search policy, or an agent.  The five-point calculation uses fixed
rules for the repeated-contact extension and double-cut row
search.  The compact v0.1 release contains the proposal interface, affine vectors, and
selected certificates; the primitive production inputs are assigned to the v0.2
payload described in \cref{sec:reproducibility,app:evidence-ledger}.

The v0.1 compact affine fiber supports further work without repeating the cut
reconstruction.  Its coordinates can be optimized for sparsity, double-copy use, or
integration cost; the difference between prescription maps can be followed through
surface and LSZ completions; and additional observables can be added to \(\cT\).
At higher multiplicity, the same framework begins with a fixed scientific
specification, compiles a
finite formulation, and uses exact obstruction, deformation, quotient, and validation
data to decide whether the formulation is closed or which candidate revision should be
tested next.

\clearpage
\appendix
\section{Homological view of adaptive inverse-problem construction}
\label{app:homological-view}

The linear algebra used in the main text has a compact formulation in homological
algebra.  It organizes four objects from the construction: the obstruction to
solving the cut equations, the deformations left after they are solved, the effect of
adding ansatz terms, and the effect of adding measurements.  In this formulation the
observable quotient classifies the remaining freedom after the cut equations are
solved.

Throughout this appendix, let \(X\) be a finite-dimensional coefficient space, let
\(Y\) be a finite-dimensional data space, and let
\begin{equation}
  A:X\longrightarrow Y
  \label{eq:homological-forward-map}
\end{equation}
be the compiled forward map.  The ground field can be \(\Q\), a finite field used in an
intermediate calculation, or the rational-function field \(\Q(D_s)\) used for the final
reconstruction.  All statements below hold over any field.

\subsection{The two-term complex: obstructions and deformations}

Place the coefficient space in degree \(-1\), the data space in degree \(0\), and the
forward map between them:
\begin{equation}
  \mathsf C_A^\bullet:
  \qquad
  0\longrightarrow X
  \xrightarrow{\ A\ }Y
  \longrightarrow0.
  \label{eq:two-term-forward-complex}
\end{equation}
This is a two-term cochain complex.  Its two cohomology spaces are
\begin{equation}
  H^{-1}(\mathsf C_A^\bullet)=\ker A,
  \qquad
  H^0(\mathsf C_A^\bullet)=\operatorname{coker}A
  =Y/\operatorname{im}A.
  \label{eq:two-term-cohomology}
\end{equation}
Thus the right kernel is the degree-\(-1\) deformation cohomology, while the cokernel is
the degree-\(0\) obstruction cohomology.  This is exactly the pair
\(\mathrm{Def}(A)\) and \(\mathrm{Ob}(A)\) defined in
\cref{def:inverse-obstruction-deformation}.

The target \(b\in Y\) determines a class
\begin{equation}
  [b]_A\in H^0(\mathsf C_A^\bullet).
  \label{eq:target-obstruction-class}
\end{equation}
The equation \(Ax=b\) is solvable precisely when this class vanishes.  If it vanishes
and \(x_{\rm p}\) is one solution, every solution is \(x_{\rm p}+k\) with
\(k\in H^{-1}(\mathsf C_A^\bullet)\).  The solution set is therefore an affine torsor
over \(H^{-1}\): subtraction of two solutions gives a canonical element of the vector
space \(H^{-1}\); choosing a particular solution supplies an origin for this affine set.

The left-null-space test is the dual form of the same statement.  In finite dimensions,
\begin{equation}
  H^0(\mathsf C_A^\bullet)^*
  \cong (Y/\operatorname{im}A)^*
  \cong\ker A^*.
  \label{eq:cokernel-left-null-duality}
\end{equation}
After bases and the standard coordinate pairing are chosen, \(A^*\) is represented by
\(A^T\).  A left-null vector \(\lambda\) evaluates the obstruction class by
\(\lambda^Tb\).  A nonzero value proves that \(b\notin\operatorname{im}A\).  A basis of
the left null space gives coordinates on the dual obstruction space and therefore a
complete consistency test.

\subsection{Adding ansatz directions and the forward correction operator}

Suppose that the current coefficient space \(X\) is enlarged by a candidate ansatz
sector \(U\).  Compiling those new structures gives a column map
\begin{equation}
  C:U\longrightarrow Y,
\end{equation}
and the enlarged forward map is
\begin{equation}
  A_C:X\oplus U\longrightarrow Y,
  \qquad
  A_C(x,u)=Ax+Cu.
  \label{eq:augmented-column-map}
\end{equation}
Let \(\pi_A:Y\to Y/\operatorname{im}A\) be the quotient map.  The induced map
\begin{equation}
  \mathfrak f_C:=\pi_A\circ C:
  U\longrightarrow H^0(\mathsf C_A^\bullet),
  \qquad
  u\longmapsto[Cu]_A,
  \label{eq:forward-correction-operator}
\end{equation}
will be called the \emph{forward correction operator}.  It propagates a proposed
ansatz direction through the compiled physics map and projects the result onto the
cokernel of the current ansatz.  It is therefore the relevant operator for
repairing an ansatz obstruction.  The enlarged equation \(Ax+Cu=b\) is solvable if and
only if
\begin{equation}
  [b]_A\in\operatorname{im}\mathfrak f_C.
  \label{eq:column-repair-solvability}
\end{equation}

This construction is summarized by the exact sequence
\begin{equation}
\begin{split}
  0\longrightarrow\ker A
  &\longrightarrow\ker A_C
  \longrightarrow U
  \xrightarrow{\ \mathfrak f_C\ }
  \operatorname{coker}A\\
  &\longrightarrow\operatorname{coker}A_C
  \longrightarrow0.
\end{split}
  \label{eq:column-extension-exact-sequence}
\end{equation}
The map \(\ker A_C\to U\) forgets the old coefficient \(x\), and the map between the two
cokernels is the natural quotient.  Exactness at \(U\) says that a combination of new
columns can be absorbed by old columns exactly when its obstruction class is zero.
Exactness at \(\operatorname{coker}A\) says that the obstruction classes removed by the
enlargement are precisely those reached by \(\mathfrak f_C\).

The sequence follows directly from the short exact sequence of
two-term complexes
\begin{equation}
  0\longrightarrow\mathsf C_A^\bullet
  \longrightarrow\mathsf C_{A_C}^\bullet
  \longrightarrow (0\longrightarrow U\longrightarrow0)
  \longrightarrow0,
  \label{eq:column-short-exact-complexes}
\end{equation}
where \(U\) occupies degree \(-1\).  Its long exact cohomology sequence is
\cref{eq:column-extension-exact-sequence}; the connecting homomorphism is
\(\mathfrak f_C\).  Equivalently, one can verify exactness element by element from
\cref{eq:augmented-column-map}.

Two dimension identities make the tradeoff visible:
\begin{equation}
\begin{aligned}
  \dim\ker A_C-\dim\ker A
    &=\dim\ker\mathfrak f_C,\\
  \dim\operatorname{coker}A-\dim\operatorname{coker}A_C
    &=\rank\mathfrak f_C.
\end{aligned}
  \label{eq:column-extension-dimensions}
\end{equation}
New columns remove \(\rank\mathfrak f_C\) independent obstruction classes.  Combinations
in \(\ker\mathfrak f_C\) simultaneously introduce deformation directions whose compiled
effect is cancelled by old columns.  Ansatz enlargement therefore changes existence and
uniqueness through two separately measured subspaces.

In left-null coordinates, the forward correction operator is represented by the
overlap matrix \(L^TC\), where the columns of \(L\) span \(\ker A^T\).  For a witness
\(\lambda\), a candidate column \(c\) with \(\lambda^Tc\ne0\) can change the obstruction
detected by \(\lambda\), so the overlap gives a direct score for selecting useful
columns.  Complete repair by the full block \(C\) is certified by
\cref{eq:column-repair-solvability}.

The repeated-contact step in the five-point construction completes the enumerated
two-contact sector.  The square-free choice \(i<j\) omits the nonzero directions
\(Y_i^2N_{ii}\) in the chosen polynomial quotient.  With \(60\) residual structures
for each of five repeated
labels, this gives \(5\times60=300\) omitted directions.  Enlarging the label range to
\(i\leq j\) adds all \(300\) columns.  Once
compiled into the forward map, this new sector is an instance of an added coefficient
space \(U\), and its cut data are governed by the correction operator
\(\mathfrak f_C\).

The repeated-contact calculation establishes that the chosen two-contact tensor space
contains \(300\) directions beyond the square-free ansatz.  After these columns are
added, cut-system consistency is tested separately through the cokernel of the compiled
cut map.  For the completed consistent system, every left-null vector satisfies
\(\lambda^TA=0\) and \(\lambda^Tb=0\).  A nonzero value of \(\lambda^Tb\) would instead
be a certificate that the current system remains inconsistent.

\subsection{Adding measurements and classifying the remaining freedom}

Now keep the coefficient space fixed and add a block of candidate measurements
\(R:X\to Z\).  The stacked forward map is
\begin{equation}
  A_R:X\longrightarrow Y\oplus Z,
  \qquad
  A_Rx=(Ax,Rx).
  \label{eq:augmented-row-map}
\end{equation}
There is a second exact sequence,
\begin{equation}
\begin{split}
  0\longrightarrow\ker A_R
  &\longrightarrow\ker A
  \xrightarrow{\ R|_{\ker A}\ } Z
  \longrightarrow\operatorname{coker}A_R\\
  &\longrightarrow\operatorname{coker}A
  \longrightarrow0.
\end{split}
  \label{eq:row-extension-exact-sequence}
\end{equation}
It is the long exact sequence associated with the projection of two-term complexes
\(\mathsf C_{A_R}^\bullet\to\mathsf C_A^\bullet\); the kernel complex contains \(Z\) in
degree \(0\).  The connecting map from \(H^{-1}(\mathsf C_A^\bullet)=\ker A\) to \(Z\)
is simply the restriction \(R|_{\ker A}\).  Consequently the number of old deformation
directions distinguished by the new measurements is
\begin{equation}
  \dim\ker A-\dim\ker A_R
  =\rank(R|_{\ker A})=\rank(RK),
  \label{eq:row-information-gain-homological}
\end{equation}
where the columns of \(K\) form any basis of \(\ker A\).  This is the information-gain
criterion used for cut selection in the main text.

Rows can also introduce a new compatibility question.  Suppose \(Ax=b\) is solvable,
the desired new data are \(d\in Z\), and \(x_{\rm p}\) is any solution of the old
system.  Define the row residual
\begin{equation}
  \rho_R=d-Rx_{\rm p}.
\end{equation}
Changing \(x_{\rm p}\) by \(k\in\ker A\) changes \(\rho_R\) by an element of
\(\operatorname{im}(R|_{\ker A})\).  Hence the class
\begin{equation}
  [\rho_R]\in
  \operatorname{coker}(R|_{\ker A})
  =Z/\operatorname{im}(R|_{\ker A})
  \label{eq:row-compatibility-obstruction}
\end{equation}
is independent of the chosen old solution.  It vanishes exactly when some point of the
old solution fiber also satisfies \(Rx=d\).  Thus added rows may both remove deformation
directions and reveal incompatible target data.  In the five-point calculation, the
double-cut block obeyed \(\rank(R_DK_{\le T})=4\), reducing the kernel dimension from
\(211\) to \(207\); the corresponding enlarged system remained compatible.

The observable map \(\cS:X\to\cQ_S\) acts on the degree-\(-1\) cohomology after the cut
problem has been solved:
\begin{equation}
  \cS|_{H^{-1}}:\ker A\longrightarrow\cQ_S.
  \label{eq:observable-on-deformation-cohomology}
\end{equation}
Its observable deformation space is
\begin{equation}
  \mathrm H^{\cS}(A)
  =\frac{\ker A}{\ker A\cap\ker\cS}
  \cong\operatorname{im}(\cS|_{\ker A}).
  \label{eq:homological-observable-deformation-space}
\end{equation}
This quotient retains the part of the deformation cohomology that changes the
specified readout.  When it vanishes, \(\cS\) is constant on every affine solution fiber;
equivalently, it factors through the cut image as stated precisely in
\cref{prop:observable-factorization-through-cuts}.  Observable equivalence acts on
differences of solutions through this restricted map.  For the final five-point system,
\(\dim H^{-1}(\mathsf C_A^\bullet)=207\) and
\(\mathrm H^{\cS}(A)=0\): the cut equations leave \(207\) representation directions,
all in one specified observable class.

The roles of the main operators can be read from the following summary:
\begin{center}
\begin{tabular}{@{}p{0.25\textwidth}p{0.66\textwidth}@{}}
\toprule
Object & Meaning in the inverse problem \\
\midrule
\([b]_A\in H^0\) & Ansatz obstruction; evaluated by left-null witnesses. \\
\(H^{-1}=\ker A\) & Cut-invisible deformation space; probed by \(RK\). \\
\(\mathfrak f_C:U\to H^0\) & Forward correction supplied by candidate ansatz columns. \\
\(\cS|_{H^{-1}}\) & Part of the remaining freedom visible to the specified readout. \\
\(\cT\) & Independent validation maps, external to the cohomology of \(A\). \\
\bottomrule
\end{tabular}
\end{center}
These cohomology spaces and exact sequences describe the algebraic core of the adaptive
construction.  Cut coverage, held-out tests, and physical validations complete the
calculation described in the main text.

\section{Conventions and kinematic inputs}
\label{app:conventions}

This appendix collects the conventions used by the theoretical construction and by its
computer implementation.  Fixing them in one place makes the routing maps, finite-field
samples, and amplitude comparisons refer to the same kinematic data.

\subsection{Metric, propagators, and loop-momentum routing}

We use a mostly-minus metric and take all external momenta to be outgoing.  Outputs from
scalar-integral programs that use another metric convention are converted to this one.
The amplitude-level integration measure is
\begin{equation}
  \int_\ell\equiv\int\frac{d^D\ell}{(2\pi)^D},
  \label{eq:measure-convention}
\end{equation}
while integral-reduction programs may use \(d^D\ell/(i\pi^{D/2})\).  The conversion
factor is stored with the integral-basis record.  Propagators have the Feynman prescription
\begin{equation}
  D_i=(\ell+\Delta_i)^2+i0.
  \label{eq:propagator-convention}
\end{equation}
Finite-field reconstruction uses the algebraic propagators without the formal \(+i0\).
The prescription is restored on the physical complex path before numerical scalar
evaluation.

\paragraph{Five-point routing.}

With cyclic labels understood modulo five,
\begin{equation}
  \Delta_1=0,
  \qquad \Delta_i=\sum_{j=1}^{i-1}k_j,
  \qquad \ell_i=\ell+\Delta_i,
  \qquad Y_i=\ell_i^2.
  \label{eq:appendix-route}
\end{equation}
Momentum conservation gives \(\Delta_6=0\).  A cyclic relabeling acts simultaneously on
external labels, master words, routing indices, and the stored graph list.  A reflection also
includes the signs induced by cubic-vertex orientation.

\subsection{External states and dimensional conventions}

The external polarization vectors obey
\begin{equation}
  k_i\cdot\varepsilon_i=0,
  \qquad \varepsilon_i\sim\varepsilon_i+\alpha_i k_i.
  \label{eq:polarization-gauge}
\end{equation}
Linearized field strengths are
\begin{equation}
  f_i^{\mu\nu}=k_i^\mu\varepsilon_i^\nu-\varepsilon_i^\mu k_i^\nu.
  \label{eq:linearized-field-strength}
\end{equation}
The raw basis contains exactly one factor of each external polarization.  Ward tests
replace \(\varepsilon_i\to k_i\) after graph/color assembly or on independently generated
tree targets, depending on the calculation being checked.

\paragraph{Momentum, polarization, and regulator dimensions.}

The implementation records
\begin{equation}
  (D_{\rm mom,ext},D_{\rm pol,ext},D_\ell,D_s)
  \label{eq:dimension-data-model}
\end{equation}
as four distinct entries.  Five massless momenta subject to conservation span at most
four dimensions.  Higher-dimensional samples therefore vary the loop momenta and
transverse polarizations while keeping the external momentum span at four dimensions.

For amplitude comparisons, external momenta and helicities are four dimensional and
\(D=4-2\epsilon\).  We use
\begin{equation}
  u=D_s-2
  \label{eq:u-variable}
\end{equation}
for the number of internal gluon polarization states in integer-dimensional samples.
The four-dimensional-helicity (FDH) and 't~Hooft--Veltman (HV) specializations are
applied with the amplitude convention stated for each comparison.  A formula written as
affine in \(u\) can equivalently be written affine in \(D_s\); the file index states which
variable is used in each recorded result.

\subsection{Spinor-helicity and comparison conventions}

For four-dimensional massless momenta,
\begin{equation}
  k_{\alpha\dot\alpha}=\lambda_\alpha\widetilde\lambda_{\dot\alpha},
  \quad
  \langle ij\rangle=\epsilon^{\alpha\beta}\lambda_{i\alpha}\lambda_{j\beta},
  \quad
  [ij]=\epsilon^{\dot\alpha\dot\beta}
  \widetilde\lambda_{i\dot\alpha}\widetilde\lambda_{j\dot\beta},
  \label{eq:spinor-conventions}
\end{equation}
and \(s_{ij}=\langle ij\rangle[ji]\) in the fixed sign convention.  Exact rational
kinematic families include paired parity sheets.  Targets and predictions are compared
after applying the same color, coupling, loop-measure, \(r_\Gamma\), and
Parke--Taylor-stripping conventions.

\paragraph{Levels of comparison.}

The paper distinguishes the following four levels of comparison:
\begin{description}[leftmargin=27mm,style=nextline]
  \item[fixed-routing pointwise.] The adjacent routing and raw cut state sums are compared
  before applying loop shifts or an integration quotient.
  \item[observable equivalence.] The scalar-source, scaleless-integral, and LSZ
  quotient of \cref{sec:stage-s}.  Some computational filenames use the internal label
  \texttt{Stage-S} for this map.
  \item[amputated benchmark.] The \(31\)-graph image of \(P_{\rm LSZ}\) matched to the
  external analytic benchmark convention.
  \item[integrated amplitude.] A stated regulator, scheme, analytic-continuation branch,
  integral basis, color projection, and external-state convention.
\end{description}

Each comparison in the paper states which of these levels it uses.

\section{Exact arithmetic, rational reconstruction, and validation statistics}
\label{app:exact-arithmetic}

This appendix gives the arithmetic details behind the exact affine solution in
\cref{sec:reconstruction}.  It describes the finite-field solve, the lift to rational
functions of \(D_s\), and the recorded checks on the lifted result.

\subsection{Finite-field solution and \texorpdfstring{\(D_s\)}{Ds} interpolation}

Let \(p\) be an odd prime chosen so that all sampled denominators are nonzero.  We compute
the normalized RREF
\begin{equation}
  U A=\begin{pmatrix}I_r&R\\0&0\end{pmatrix}
  \label{eq:rref-form}
\end{equation}
in the specified coordinate order.  Applying the same row operations to \(b\) exposes
inconsistency if a zero matrix row has a nonzero target.  Otherwise the chosen
reduced-row-echelon particular solution and kernel basis are
\begin{equation}
  x_{\rm p}=\binom{(Ub)_{1:r}}{0},
  \qquad
  K=\binom{-R}{I_{n-r}},
  \label{eq:rref-solution-form}
\end{equation}
after undoing the pivot/free column placement.  The solver output records this
permutation.

An independent implementation using Python and FLINT repeats the solve on exact systems
with the dimensions and sparse layout of the full calculation.  The two implementations
compare the rank, pivots, particular vector, and normalized kernel.

\paragraph{Dependence on \(D_s\).}

The same solve also determines the dependence on the internal-state dimension.  At a
fixed prime and kinematic sample set, solutions are computed at several integer \(D_s\)
values.  Affinity requires
\begin{equation}
  x(D_s+2)-2x(D_s+1)+x(D_s)=0
  \label{eq:ds-second-difference}
\end{equation}
componentwise, after aligning the same free variables.  The calculation record shows that
all second differences vanish and that all cells yield one common intercept, slope, and
kernel.  An independent sample set at \(D_s=12\) tests extrapolation beyond the values
used for interpolation.

\subsection{Rational reconstruction and finite-field validation}

Suppose a rational number \(a/b\) in lowest terms obeys
\begin{equation}
  |a|\le A,
  \qquad 0<b\le B,
  \qquad a\equiv rb\pmod M.
  \label{eq:ratrec-conditions}
\end{equation}
If
\begin{equation}
  M>2AB,
  \label{eq:ratrec-condition-appendix}
\end{equation}
there is at most one such rational number.  Indeed, if \(a/b\) and \(a'/b'\) both map to
\(r\), then \(M\mid ab'-a'b\), while
\begin{equation}
  |ab'-a'b|\le2AB<M,
  \label{eq:ratrec-uniqueness-proof}
\end{equation}
forcing equality.  The extended Euclidean algorithm constructs the candidate, and direct
modular remapping verifies it.

For the affine lift,
\begin{equation}
  M=1100178405778335937,
  \qquad A=B=10^8,
  \qquad M>2AB.
  \label{eq:actual-crt-bounds}
\end{equation}
Every one of the \(235{,}543\) coordinates satisfies the bounds \(A=B=10^8\) stated
above.

\paragraph{Interpretation of finite-field samples.}

If a polynomial \(f\) of total degree \(d\) is nonzero over \(\F_p\), a uniformly random
sample in \(\F_p^N\) is a zero with probability at most \(d/p\).  This
Schwartz--Zippel bound quantifies the generic finite-field checks.  Separately, the
observable-equivalence identity for the kernel coefficients is computed in
characteristic zero by exact integer accumulation.

The sampling rule rejects a point when a required denominator, minor, spinor bracket, or
Gram condition vanishes.  Such a rejection stops the whole batch.  The production
calculation record includes the rejection count and the random-seed namespace.

\subsection{Recorded calculation scale and numerical integral checks}

The following table records the size of the exact reconstruction and the number of
independent checks applied to it.  These counts describe implementation coverage; the
mathematical interpretation of the rank and kernel is given in
\cref{sec:reconstruction,sec:stage-s}.

\begin{center}
\begin{longtable}{@{}>{\raggedright\arraybackslash}p{.35\textwidth}r
  >{\raggedright\arraybackslash}p{.44\textwidth}@{}}
\toprule
quantity & value & interpretation\\ \midrule
\endhead
raw tensor list & 17,824 & tensor expressions before algebraic reduction\\
independent tensor structures & 10,724 & basis after algebraic reduction\\
algebraically duplicate directions & 7,100 & relations removed from the raw list\\
active orbit coordinates & 1,127 & signed-dihedral solve coordinates\\
final cut rank & 920 & cumulative M/B/T/D measurement rank\\
final cut nullity & 207 & complete right kernel in orbit coordinates\\
kinematic sample groups & 12 & independent groups with common pivot columns\\
sample--\(D_s\) combinations & 36 & inputs used for affine reconstruction\\
rational vectors & 209 & intercept, slope, and 207 kernel vectors\\
rational coordinates & 235,543 & entries subjected to bounded reconstruction\\
nonzero rational entries & 15,686 & v0.1 sparse-vector payload\\
validation runs/sample sets & 8/24 & independent primes and disjoint seeds\\
fresh validation groups & 208 & separate samples, including tests at \(D_s=12\)\\
particular-solution equation checks & 624 & all exact zero residual\\
kernel checks & 43,056 & all exact zero residual\\
observable-equivalence records & 207 & one per ordered kernel direction\\
observable-equivalence conditions & 646 & 350 coefficient identities + 296 scaleless\\
\bottomrule
\end{longtable}
\end{center}

\paragraph{Complex scalar integrals.}

The exact coefficient checks are supplemented by numerical tests of the scalar
integrals.  These finite-part checks use complex paths with a stated \(+i0\) orientation.  Exact
polynomial gcd or root isolation verifies that the relevant functions remain nonzero
along each path.  Scalar values are enclosed in arbitrary-precision complex balls, and
the comparison requires the prediction and target balls to overlap under a criterion
fixed before evaluation.  Independent QCDLoop and Wolfram/analytic evaluations are used
for selected scalar checks.  These calculations test analytic continuation and
normalization; the coefficients themselves are reconstructed exactly.

\section{Algorithm outlines}
\label{app:pseudocode}

This appendix translates the linear maps and exact checks defined in the main text into
compact pseudocode.  The listings expose the order of operations and the asserted
identities; implementation-specific data formats are described in
\cref{sec:agent-harness,sec:reproducibility}.

\subsection{Constructing the basis and one cut equation}

The first routine enumerates the specified covariants, removes their exact algebraic
relations, and constructs the signed dihedral orbit coordinates.  Its assertions match
the basis dimensions used in \cref{sec:ansatz-compiler}.

\begin{lstlisting}[language=Python,caption={Construction of the declared finite basis.}]
def build_complete_basis(spec):
    raw = enumerate_covariants(
        polarizations="multilinear",
        mass_dimension=5,
        parity="even",
        contact_degree=range(0, 3),
        allow_repeated_contacts=True,
        loop_rank=spec.layer_loop_ranks)

    relations = exact_kinematic_relations(
        on_shell=True,
        momentum_conservation=True,
        transversality=True,
        exclude_4d_gram_relations=True)

    projection, section = quotient(raw, relations)
    assert projection @ section == identity(10724)
    assert len(raw) == 17824
    assert rank(projection) == 10724
    assert nullity(projection) == 7100

    orbit_map = signed_D5_orbit_projection(section)
    assert orbit_map.coordinate_count == 1127
    independent_backend_audit(raw, relations, orbit_map)
    return BasisArtifact(raw, projection, section, orbit_map)
\end{lstlisting}

\newpage
\paragraph{One cut equation.}

Once the basis artifact has fixed the coordinate maps, a cut row is obtained by routing
the master columns through the graph registry and comparing their residue with an
independently computed tree-amplitude target.

\begin{lstlisting}[language=Python,caption={One cut equation with two tree-amplitude calculations.}]
def compile_cut_row(cut, point, basis, spec):
    masters = basis.build_master_columns(point)
    graphs = free_lie_routing_compiler(masters, spec.graph_registry)
    lhs = residue(cut, propagator_assembly(graphs), point)

    target_T1 = berends_giele_state_sum(cut, point, spec.D_s)
    target_T2 = explicit_vertex_state_sum(cut, point, spec.D_s)
    assert target_T1 == target_T2
    assert ward_audit_T1(cut, point)
    assert ward_audit_T2(cut, point)

    return lhs, target_T1
\end{lstlisting}

The equality of the two target calculations and their Ward tests is checked before the
row is admitted to the linear system.

\subsection{Completing the system with left and right null spaces}

The next routines implement the two null-space diagnostics described in
\cref{sec:cut-bootstrap}.  A left-null obstruction selects ansatz columns that can change
an inconsistent image condition.

\begin{lstlisting}[language=Python,caption={Select columns that act on an inconsistent left-null direction.}]
def respond_to_inconsistency(A, b, candidate_columns):
    assert rank(A) < rank(augment(A, b))
    lam = left_null_obstruction(A, b)
    assert transpose(lam) @ A == 0
    assert transpose(lam) @ b != 0

    scored = [(transpose(lam) @ col, col)
              for col in candidate_columns]
    useful = [col for score, col in scored if score != 0]
    return rank_increment_selection(A, useful)
\end{lstlisting}

\paragraph{Selecting rows with the right kernel.}

For a consistent system, candidate cut families are ranked by their action on
the current right kernel.

\begin{lstlisting}[language=Python,caption={Rank candidate row families on the current kernel.}]
def respond_to_ambiguity(A, b, candidate_row_families):
    particular, K = exact_affine_solve(A, b)
    assert A @ particular == b
    assert A @ K == 0

    scores = {}
    for family in candidate_row_families:
        R = compile_fresh_rows(family)
        scores[family.id] = rank(R @ K)
    return fixed_tiebreak(scores)
\end{lstlisting}

\paragraph{Solving the ordered cut groups.}

The layered solve applies these diagnostics while maximal, box, triple, and double cut
groups are appended in their stored order.

Here \texttt{topology\_coverage\_complete} means that every physical topology has at
least three successful trials.  The frozen \texttt{closure\_criterion\_complete}
requires \(64\) consecutive rows that are dependent and consistent.  The StageSpec does
not prescribe the final rank, nullity, or pivot set.

\begin{lstlisting}[language=Python,caption={Adding cut groups with the frozen topology-coverage and dependent-row stop rule.}]
def layered_solve(calculation_spec):
    A, b = empty_system(columns=1127)
    audit = []

    for group in calculation_spec.round_robin_groups:
        rows = compile_group(group)
        A, b = append_exact_rows(A, b, rows)
        state = exact_solver_state(A, b)
        audit.append((group.id, state.rank, state.pivots,
                      state.dependent_consistent_rows))

        if state.inconsistent:
            return certified_left_obstruction(A, b)
        if topology_coverage_complete(audit) and \
           closure_criterion_complete(audit):
            break

    x_p, K = canonical_rref_affine_solution(A, b)
    assert A @ x_p == b and A @ K == 0
    # Rank and nullity are outputs of the topology-coverage criterion.
    return ModularAffineSolution(A, b, x_p, K, audit)
\end{lstlisting}

\subsection{Rational reconstruction and independent validation}

After the modular ranks and pivot order have stabilized, the modular vectors are aligned
and lifted to a common affine function of \(D_s\).

\begin{lstlisting}[language=Python,caption={Bounded exact lift.}]
def lift_QDs(modular_artifacts, A_bound=10**8, B_bound=10**8):
    align_canonical_pivot_and_free_signatures(modular_artifacts)
    intercept, slope = interpolate_affine_Ds(modular_artifacts)
    K = assert_common_Ds_independent_kernel(modular_artifacts)

    M, residues = chinese_remainder([intercept, slope, *K])
    assert M > 2 * A_bound * B_bound
    vectors = bounded_rational_reconstruct(
        residues, M, A_bound, B_bound)
    assert len(vectors) == 209

    for modular in modular_artifacts:
        assert remap(vectors, modular.prime) == modular.vectors
    return RationalAffineData(vectors, gauge="RREF-free-zero")
\end{lstlisting}

\paragraph{Validation on separate samples.}

The lifted particular solution and every kernel vector are then evaluated on primes and
fresh seed namespaces reserved for validation.

\begin{lstlisting}[language=Python,caption={Validation of the whole affine space.}]
def validate_affine_family(affine_data, validation_spec):
    assert validation_spec.primes_are_disjoint_from_reconstruction
    assert validation_spec.seed_namespaces_are_disjoint

    for profile in validation_spec.profiles:
        A_h, b_h = compile_fresh_group(profile)
        x = affine_data.intercept + profile.D_s * affine_data.slope
        assert A_h @ x == b_h
        for k in affine_data.kernel_vectors:
            assert A_h @ k == 0

    return ExactValidationResult(all_profiles_checked=True)
\end{lstlisting}

\subsection{Reducing the kernel and locating a published representative}

The last pair of routines applies the observable-equivalence map direction by direction
and expresses the published representative in the resulting affine coordinates.

\begin{lstlisting}[language=Python,caption={Direction-by-direction quotient proof.}]
def prove_observable_equivalence_zero(kernel_vectors, equivalence_spec):
    records = []
    for a, k in enumerate(kernel_vectors):
        support = expand_graph_source_support(k)
        regular, formal = apply_LSZ_graph_projector(support)
        conservative, route_scaleless = classify_routes(regular)

        coeffs = sparse_integer_polynomial_accumulate(conservative)
        assert every_component_is_exactly_zero(coeffs)

        proofs = [canonical_scaleless_proof(row)
                  for row in route_scaleless]
        assert all(proof.valid for proof in proofs)
        records.append(DirectionReduction(a, coeffs, proofs, formal))

    assert exact_order_isomorphism(records, kernel_vectors)
    return ObservableEquivalenceResult(records)
\end{lstlisting}

\paragraph{Coordinates of the published representative.}

The coordinate calculation begins after the independent validation of the affine family
and verifies the two kernel equations exactly.

\begin{lstlisting}[language=Python,caption={Exact coordinates of a comparison representative.}]
def coordinates_of_representative(affine_data, representative):
    require(affine_data.independent_validation_passed)
    K = affine_data.kernel_matrix
    x_rep = embed_standard_orbit_coordinates(representative)
    delta0 = x_rep.intercept - affine_data.intercept
    delta1 = x_rep.slope - affine_data.slope
    alpha0 = solve_in_kernel_basis(delta0)
    alpha1 = solve_in_kernel_basis(delta1)
    assert K @ alpha0 == delta0 and K @ alpha1 == delta1
    return CoordinateBridge(alpha0, alpha1)
\end{lstlisting}

\section{Agent-compatible implementation workflow}
\label{app:agent-harness}
\label{sec:agent-harness}

The main text defines the fixed scientific specification \(\Omega\), the mutable
working specification \(\sigma_t\), and the compiled problem
\(\mathcal P_{\sigma_t}=(A_t,b_t;\cS_t;\cT_t)\).  This appendix describes the software
interface that connects an exact diagnostic to the next finite calculation.  The
interface is provider-neutral: a physicist, an explicit search policy, or a
language-model agent can read the same diagnostic state and return the same typed
proposal.  The compiler and evaluator follow one execution path for all three cases.

At stage \(t\), the proposer receives \(\sigma_t\), its exact diagnostic \(d_t\), and
the list of available operations.  It returns a named operation \(\alpha_t\) and its
typed parameters.  The corresponding scientific skill applies that operation, after
which normalization fixes ordering and checks compatibility with \(\Omega\):
\begin{equation}
\begin{aligned}
  d_t&=\operatorname{Evaluate}
  \!\left(\operatorname{Compile}_{\Omega}(\sigma_t)\right),\\
  \widehat\sigma_{t+1}&=F_{\alpha_t}(\sigma_t,d_t),\\
  \sigma_{t+1}&=\operatorname{Normalize}_{\Omega}
  (\widehat\sigma_{t+1}).
\end{aligned}
  \label{eq:normalized-state-transition}
\end{equation}
The compiler output, exact diagnostic, and held-out tests form the diagnostic and
validation record for the normalized stage.

\subsection{Stage specifications and exact diagnostic states}

A machine-readable \texttt{StageSpec} contains the information needed to reconstruct
one ordered finite problem:
\begin{equation}
\begin{split}
  \texttt{StageSpec}=\{&
  \text{scientific-specification and coefficient-field identifiers},\\
  &\text{active basis and contact sectors},\\
  &\text{compiler, graph, and routing-table versions},\\
  &\text{cut, quotient, and held-out-test schedules},\\
  &\text{ordered input bindings},\\
  &\text{primes, seeds, \(D_s\) samples, and resource limits}\}.
\end{split}
  \label{eq:stagespec-fields}
\end{equation}
The normalized record fixes the coordinate order, graph names, row order, arithmetic
profile, and serialization format.  Compilation then constructs
\begin{equation}
  A_C=\mathsf{Cut}_C\,\mathsf{Prop}\,\mathsf{Jac}\,\mathsf{Build},
  \qquad A_Cx=b_C,
  \label{eq:harness-compiled-state}
\end{equation}
together with the maps needed to lift a reduced coefficient vector back to the raw
tensor representation.

The exact evaluator returns
\begin{equation}
  d_t=\left(
  \rank A_t,\ \rank[A_t\mid b_t],\
  L_t,\ L_t^Tb_t,\ K_t,\ \cS_tK_t,\
  \rho^{\cT}_{p,t},\ \rho^{\cT}_{K,t}
  \right).
  \label{eq:exact-diagnostic-state}
\end{equation}
Here \(L_t\) and \(K_t\) are ordered bases of the left and right null spaces.
The two ranks and the product \(L_t^Tb_t\) test consistency; the retained basis \(L_t\)
also scores candidate columns.  The right kernel records every coefficient
direction left unresolved by the active rows.  The products with \(\cS_t\) and
\(\cT_t\) classify these directions and test them on information outside the solve.

A reusable operation, called a skill in the implementation, has the following interface:
\begin{equation}
  \text{preconditions}\longrightarrow
  \text{typed transformation}\longrightarrow
  \text{ordered output}\longrightarrow
  \text{local exact checks}.
  \label{eq:skill-contract}
\end{equation}
The operations used by the reconstruction are summarized below.
\begin{center}
\small
\begin{tabularx}{\textwidth}
  {@{}>{\raggedright\arraybackslash}p{.25\textwidth}
      >{\raggedright\arraybackslash}p{.31\textwidth}X@{}}
\toprule
operation & transformation & returned exact information \\
\midrule
symmetry and Jacobi compilation &
raw tensor basis \(\to\) routed graph coordinates &
signs, orbit order, dimension balances, and symbolic graph identities; \\

kinematic quotient &
raw tensor expressions \(\to\) quotient basis and section &
projection, section, presentation kernel, and identity checks; \\

obstruction-column scoring &
candidate column block \(C\) \(\to\) \(L_t^TC\) &
which obstruction components the block can change; \\

kernel-row scoring &
candidate measurement block \(R\) \(\to\) \(R K_t\) &
rank gain and the directions first visible to the new rows; \\

observable classification &
ordered kernel \(K_t\) \(\to\) \(\cS_tK_t\) &
the observable quotient of the cut-preserving deformation space; \\

held-out validation &
affine family \(\to\) fresh cuts and downstream readouts &
particular-solution residuals, kernel residuals, and physical comparison records. \\
\bottomrule
\end{tabularx}
\end{center}

\subsection{Proposal-policy loop and the two feedback branches}

The proposal interface can be written independently of the policy that selects an
operation:
\begin{lstlisting}[language=Python,
caption={A proposal-policy loop with deterministic exact evaluation.},
label={lst:promotion}]
def exact_constraint_search(initial_spec, proposal_policy):
    spec = normalize(initial_spec)
    history = []

    while True:
        compiled = compile_exact_problem(spec)
        diag = exact_evaluate(compiled)
        history.append(record(spec, compiled, diag))

        if completion_criteria_hold(diag, spec):
            validation = independent_validate(spec, diag.affine_family)
            history.append(record(validation))
            if validation.accepted:
                return diag.affine_family, history

        proposal = proposal_policy.propose(
            spec=spec,
            diagnostic=diag,
            available_operations=spec.skills)
        if proposal is None:
            return NO_ADMISSIBLE_REFINEMENT, history

        operation = select_typed_operation(proposal, spec)
        spec = normalize(operation.apply(spec, proposal, diag))
\end{lstlisting}

There are two basic feedback branches.  For an inconsistent system,
\(\lambda^TA=0\) and \(\lambda^Tb\ne0\) define a left-null obstruction.  Candidate
columns are scored by \(\lambda^TC\), and the revised stage is recompiled with the same
physical targets and conventions.  For a consistent underdetermined system,
candidate rows are scored by \(\rank(RK)\).  A row block with nonzero score measures
directions that were invisible to the previous cut set.

The public repository contains small rational examples of both branches.  Their
recorded operation sequences are
\[
  \texttt{add\_column}\longrightarrow\texttt{accept},
  \qquad
  \texttt{add\_row}\longrightarrow\texttt{accept}.
\]
They exercise the proposal schema, typed operations, exact evaluator, and artifact
format.  In the five-point calculation a repeated-contact sector check activates the
repeated-contact block, while the row diagnostic gives
\(\rank(R_DK_{\le T})=4\) for the selected double-cut block.  The released compact
records bind these deterministic transitions to their compiled stages and exact
diagnostics.

\subsection{Recorded outputs and public release}

Each evaluated stage produces a canonical record
\begin{equation}
  \begin{aligned}
  a=\{\,&\texttt{artifact\_type},\texttt{schema\_version},
  \texttt{stage\_spec\_id},\texttt{operation\_id},\\
  &\texttt{upstream\_ids},\texttt{diagnostic\_payload},
  \texttt{content\_id}\,\}.
  \end{aligned}
  \label{eq:artifact-schema}
\end{equation}
The dependency graph connects a proposal to the compiled system, exact diagnostic,
held-out tests, affine lift, and observable-classification records.

\begin{figure}[H]
\centering
\begin{tikzpicture}[node distance=7mm and 8mm,scale=.92,transform shape]
  \node[flow,minimum width=27mm] (spec) {stage input};
  \node[flow,right=of spec,minimum width=29mm] (proposal) {typed proposal};
  \node[flow,right=of proposal,minimum width=31mm] (compiled) {compiled problem};
  \node[evidence,right=of compiled,minimum width=31mm] (diag)
    {exact diagnostic};

  \node[evidence,below=11mm of diag,minimum width=30mm] (held)
    {held-out tests};
  \node[evidence,left=of held,minimum width=31mm] (lift)
    {affine lift and\\kernel checks};
  \node[evidence,left=of lift,minimum width=31mm] (manifest)
    {compact manifest};

  \draw[arrow] (spec)--(proposal);
  \draw[arrow] (proposal)--(compiled);
  \draw[arrow] (compiled)--(diag);
  \draw[arrow] (diag.south)--(held.north);
  \draw[arrow] (held.west)--(lift.east);
  \draw[arrow] (lift.west)--(manifest.east);
  \draw[-{Stealth[length=2.2mm]},thick,densely dashed,
    draw=bcjorange!75!black]
    (diag.north) to[out=90,in=20]
    node[above,font=\scriptsize]{next candidate revision} (proposal.north);
\end{tikzpicture}
\caption{Logical record graph for the proposal interface.  Exact diagnostics either
start another typed revision or feed the affine lift, kernel classification, held-out
tests, and release manifest.}
\label{fig:artifact-dag}
\end{figure}

The companion repository is
\url{https://github.com/ybzhang-nxu/AutoBCJFind}.  Its v0.1 release contains the public
schemas, skills, provider-neutral agent profiles, exact-feedback examples, compact
five-point vectors, and selected certificates.  The scientific claim-to-record map is
given in \cref{app:evidence-ledger}.  Current file locations and the boundary between
the compact release and the planned production archive are indexed in the repository's
\href{https://github.com/ybzhang-nxu/AutoBCJFind/blob/main/docs/PAPER_MAP.md}
{\texttt{docs/PAPER\_MAP.md}} and
\href{https://github.com/ybzhang-nxu/AutoBCJFind/blob/main/docs/ARTIFACTS.md}
{\texttt{docs/ARTIFACTS.md}}.

\section{Reproducibility and the v0.1/v0.2 release boundary}
\label{app:reproducibility}
\label{sec:reproducibility}

The companion repository
\url{https://github.com/ybzhang-nxu/AutoBCJFind} currently distributes the compact
v0.1 layer and defines a planned v0.2 production layer.  The compact release contains
the public proposal interface, exact sparse affine
vectors, selected certificates, and the research specifications and runners.  It
supports inspection of the reported mathematical objects and deterministic tests of
the public harness.  The planned approximately \(13\,\mathrm{GB}\) v0.2 production
payload will add the ordered row stores, expanded maps, per-direction source ledgers, and extended
amplitude-comparison records required for replay from primitive production inputs.
The compact claim-to-record map is given in \cref{app:evidence-ledger}.  Current file
locations and the scope of the planned production archive are maintained in the
repository's \texttt{docs/PAPER\_MAP.md} and \texttt{docs/ARTIFACTS.md}.

\subsection{Inputs and two levels of verification}

The production calculation starts from fixed kinematic conventions and an ordered tensor
basis.  It compiles quotient, orbit, cut, observable-equivalence, and validation maps,
solves the resulting finite-field systems, and performs rational reconstruction.  A full
production replay therefore requires the corresponding ordered maps and row stores in
addition to the compact output vectors and certificates.

The v0.1 package provides the following compact verification path:
\begin{enumerate}
  \item check the repository layout, schemas, tests, and compact-file index;
  \item inspect the compact certificates for the basis, reconstruction, validation,
  representative, quotient, integral, and amplitude stages;
  \item validate the formats, ordering metadata, and recorded upstream bindings of
  \(x^{(0)},x^{(1)}\), and the \(207\) ordered sparse kernel vectors;
  \item run the public exact-feedback examples and package preflight.
\end{enumerate}
Rebuilding the rank-\(920\) cut matrix from production rows and recomputing the complete
Stage-\(\cS\) term ledger are the planned v0.2 verification steps.

\paragraph{v0.1 components.}

The compact components are summarized in \cref{tab:core-evidence}.  The repository
indexes listed above give the paths and content identities of individual records.

\begin{table}[H]
\centering
\small
\begin{tabularx}{\textwidth}{@{}
  >{\raggedright\arraybackslash}p{.24\textwidth}
  >{\raggedright\arraybackslash}p{.32\textwidth}
  >{\raggedright\arraybackslash}X@{}}
\toprule
component & data present in v0.1 & supported reader-level check\\ \midrule
implementation & public schemas, skills, agents, exact-harness code, tests, and research
runners & inspect the documented workflow; run the public tests and compact examples\\
basis reduction & two compact basis certificates & inspect the independently recorded
dimensions \(17{,}824\), \(10{,}724\), and \(7{,}100\) and their content bindings\\
affine family & manifests and \(x^{(0)},x^{(1)},k_1,\ldots,k_{207}\) as ordered sparse
rational vectors & check formats, vector count and order, canonical free-coordinate
metadata, and manifest bindings\\
cut reconstruction and validation & compact production and sealed-validation summaries
& inspect the recorded cumulative-rank, pivot/nullity, and zero-residual assertions;
production rows are a v0.2 object\\
representative embedding & coordinate bridge and its compact supporting objects & inspect
the exact family-membership certificate and the \(15\)-direction coordinate record\\
observable-equivalence reduction & one global Stage-\(\cS\) summary certificate & inspect
the recorded all-\(207\) conclusion and its proof-root bindings; per-direction ledgers are
a v0.2 object\\
integral and amplitude checks & selected integral-basis, benchmark, and nonzero-difference
certificates & inspect the compact conclusions and domains; complete reduction and
48-setting records are a v0.2 object\\
\bottomrule
\end{tabularx}
\caption{Compact evidence distributed in v0.1 and the checks it directly supports.  A
certificate may bind a larger production object by hash even when that object itself is
scheduled for the v0.2 archive.}
\label{tab:core-evidence}
\end{table}

\subsection{Repository layout, computing environment, and measured costs}

The v0.1 material used by this paper has the repository-relative layout
\begin{equation}
\begin{array}{ll}
\texttt{release/affine-family/} & \text{compact manifests and ordered sparse vectors},\\
\texttt{release/certificates/} & \text{selected exact certificates and replay summaries},\\
\texttt{specs/, schemas/} & \text{scientific specifications and public data contracts},\\
\texttt{src/, scripts/, tests/} & \text{implementation, package checks, and tests},\\
\texttt{agents/, skills/} & \text{provider-neutral proposal profiles and typed workflows}.
\end{array}
\label{eq:ancillary-layout}
\end{equation}
The v0.2 archive will add the production row stores and expanded maps, with a manifest binding them to
the compact roots listed here.

\paragraph{Software, hardware, and run times.}

The production calculation used Python 3.12, python-flint, SymPy, FORM/TFORM, Kira 3.1,
FIRE 6.5.2, FireFly, Fermat, and the scalar-integral programs recorded during the
calculation.  The main host was a single-NUMA AMD EPYC 9654 system with \(96\) physical
cores, \(125\) GiB memory, and NVMe storage.  The exact reconstruction used the CPU and
memory resources; the GPU remained idle.  v0.1 records this environment summary.  The
external executables and complete production invocation logs are assigned to the planned
v0.2 payload.

Measured wall times were approximately \(444\) seconds for the parallel cut
reconstruction, \(29.5\) minutes for a serial production replay, \(12\) seconds for the
Kira solve of the \(472\)-source configuration, and \(125\) seconds for deterministic
symbolic canonicalization.  These are production timings on the hardware above; the
compact v0.1 package checks have a separate runtime profile.  Intermediate-expression size, rational-function degree,
memory access, serialization, and available memory affect the different stages.

Steps with high memory use were run one at a time.  Worker processes remained active
between related jobs, CPU assignments were fixed, and each exact worker used one
numerical-library thread.  Before a run, the program checked available memory and
recorded command-line and input bindings.  The associated large logs and row-level
records are assigned to the planned v0.2 production payload.

\subsection{Recorded exact assertions and full-replay scope}

The compact records bind the following production assertions:
\begin{align}
  &\rank A=920,\quad AK=0,\quad K_{I_{\rm free}}=I_{207},\nonumber\\
  &A(x^{(0)}+D_sx^{(1)})=b(D_s),\nonumber\\
  &\text{Chinese-remainder and rational-reconstruction uniqueness},\nonumber\\
  &\cS k_a=0\quad(a=1,\ldots,207),\nonumber\\
  &x_{\rm FL}-x_{\rm p}=K(\alpha_0+D_s\alpha_1),\nonumber\\
  &C_{\mathrm{FR-FL}}(D)\ne0.
  \label{eq:minimal-reader-checks}
\end{align}
In v0.1 these equations are represented by compact vectors, manifests, and summary
certificates.  The v0.2 archive will supply the ordered maps, rows, per-direction proof
records, and independent comparison data needed to recompute them from primitive
production inputs.  The two release layers support compact inspection and full
production replay, respectively.

\clearpage
\section{Compact manifest of released calculation records}
\label{app:evidence-ledger}

The public companion repository is
\url{https://github.com/ybzhang-nxu/AutoBCJFind}.  Version v0.1 contains the compact
exact vectors, selected certificates, specifications, schemas, skills, agents, and
public harness described in this paper.  The current claim-to-file index is
\href{https://github.com/ybzhang-nxu/AutoBCJFind/blob/main/docs/PAPER_MAP.md}
{\texttt{docs/PAPER\_MAP.md}}, and the compact-release contents and deferred production
objects are documented in
\href{https://github.com/ybzhang-nxu/AutoBCJFind/blob/main/docs/ARTIFACTS.md}
{\texttt{docs/ARTIFACTS.md}}.  The article presents the scientific map from claims to
record families, while these repository indexes supply the current locations and release
scope.

\begin{table}[H]
\centering
\small
\begin{tabularx}{\textwidth}
  {@{}>{\raggedright\arraybackslash}p{.20\textwidth}
      >{\raggedright\arraybackslash}p{.43\textwidth}X@{}}
\toprule
record family & statement represented by the record & release scope \\
\midrule
basis reduction &
\(17{,}824\) raw structures, a \(10{,}724\)-dimensional kinematic quotient, and a
\(7100\)-dimensional presentation kernel &
two compact certificates in v0.1; expanded projection tables in the production
archive; \\

cut reconstruction &
cumulative ranks \(547,812,916,920\), common pivot/free data, and final nullity \(207\) &
compact solve and serial-replay summaries in v0.1; ordered row stores in the production
archive; \\

affine family &
\(x^{(0)},x^{(1)}\), and the complete ordered set
\(k_1,\ldots,k_{207}\) of sparse rational vectors &
vectors, ordering metadata, and manifests in v0.1; primitive modular records in the
production archive; \\

held-out cut tests &
fresh-prime and fresh-seed residual checks for the particular solution and every
kernel direction &
compact validation certificates in v0.1; complete scheduled task payload in the
production archive; \\

published representative &
exact membership in the affine family, with \(15\) active kernel directions and nine
nonzero \(D_s\) slopes &
coordinate bridge, coordinates, and compact replay summary in v0.1; \\

observable quotient for the \(R_{12345}\) color-ring readout &
\(\cS k_a=0\) for all \(a=1,\ldots,207\), split into coefficient, scaleless-source,
and LSZ branches &
global summary certificate in v0.1; per-direction term ledgers in the production
archive; \\

integral reductions &
canonical master-integral basis and agreement of direct and cut-based reductions &
canonical-basis certificate in v0.1; complete reduction support in the production
archive; \\

amplitude tests &
all-plus, single-minus, and MHV benchmarks, including poles, scheme conversion,
branches, and finite-kinematics comparisons &
selected benchmark certificates in v0.1; extended evaluation records in the production
archive; \\

prescription comparison &
an exact nonzero fixed-routing/forward-limit master-integral coefficient and a
\(48\)-setting numerical comparison &
exact witness in v0.1; full comparison records in the production archive. \\
\bottomrule
\end{tabularx}
\caption{Claim-to-record map for the compact v0.1 release and the planned full production
payload.  Current file locations and release scope are maintained in the repository
indexes cited above.}
\label{tab:compact-record-manifest}
\end{table}

\clearpage
\bibliographystyle{JHEP}
\bibliography{references}

\end{document}